\newtheorem{definition}{Definition}
\newtheorem{remark}{Remark}
\newtheorem{theorem}{Theorem}
\newcommand{\be}{\begin{equation}}
\newcommand{\ee}{\end{equation}}
\newcommand{\cE}{\mathcal{E}}
\newacronym{qkd}{QKD}{quantum key distribution}
\newacronym{di}{DI}{device-independent}
\newacronym{pm}{PM}{prepare-and-measure}
\newacronym{sdp}{SDP}{semidefinite programming}
\newacronym{POVM}{POVM}{Positive Operator Valued 
Measure}
\newacronym{usd}{USD}{Unambiguous state discrimination}
\newacronym{qber}{QBER}{Quantum Bit Error Rate}
\newcommand{\id}{\mathbb{1}}
\newcommand\cw{\circlearrowright}
\newcommand\acw{\circlearrowleft}
\tikzset{
    -Latex,auto,node distance =1 cm and 1 cm,semithick,
    state/.style ={ellipse, draw, minimum width = 0.7 cm},
    point/.style = {circle, draw, inner sep=0.04cm,fill,node contents={}},
    bidirected/.style={Latex-Latex,dashed},
    el/.style = {inner sep=2pt, align=left, sloped}
}
\newcommand{\stkout}[1]{\ifmmode\text{\sout{\ensuremath{#1}}}\else\sout{#1}\fi}
\newcommand{\floor}[1]{\left\lfloor #1 \right\rfloor}
\begin{document}

\title{Topologically Robust Quantum Network Nonlocality}

\author{Sadra Boreiri}
\thanks{These authors contributed equally to this work.}
\affiliation{Department of Applied Physics University of Geneva, 1211 Geneva, Switzerland}
\author{ Tamas Krivachy}
\thanks{These authors contributed equally to this work.}
\affiliation{ICFO - Institut de Ciencies Fotoniques, The Barcelona Institute of Science and Technology, 08860 Castelldefels (Barcelona), Spain}
\affiliation{Atominstitut, Technische Universitat Wien, 1020 Vienna, Austria}
\author{Pavel Sekatski}
\thanks{These authors contributed equally to this work.}
\affiliation{Department of Applied Physics University of Geneva, 1211 Geneva, Switzerland}
\author{Antoine Girardin}
\affiliation{Department of Applied Physics University of Geneva, 1211 Geneva, Switzerland}

\author{Nicolas Brunner}
\affiliation{Department of Applied Physics University of Geneva, 1211 Geneva, Switzerland}

\begin{abstract}
We discuss quantum network Bell nonlocality in a setting where the network structure is not fully known. More concretely, an honest user may trust their local network topology, but not the structure of the rest of the network, involving distant (and potentially dishonest) parties. We demonstrate that quantum network nonlocality can still be demonstrated in such a setting, hence exhibiting topological robustness. Specifically, we present quantum distributions obtained from a simple network that cannot be reproduced by classical models, even when the latter are based on more powerful networks. In particular, we show that in a large ring network, the knowledge of only a small part of the network structure (involving only 2 or 3 neighbouring parties) is enough to guarantee nonlocality over the entire network. This shows that quantum network nonlocality can be extremely robust to changes in the network topology. Moreover, we demonstrate that applications of quantum nonlocality, such as the black-box certification of randomness and entanglement, are also possible in such a setting. 
\end{abstract}

\maketitle

Quantum networks play a key role in quantum information processing. A notable example is in quantum communication, where quantum networks represent the backbone infrastructure, allowing for the distribution and manipulation of entanglement over large scales \cite{Kimble2008,Simon2017,Wehner2018}. 
Such a network typically involves several distant parties (nodes), subsets of which are connected by different sources of entanglement. Quantum systems originating from different sources are then jointly processed at the nodes, allowing e.g. for the distribution of entanglement over the entire networks or the heralding of entanglement between a given pair of nodes.

So far the analysis of quantum networks typically focuses on the setting where the network structure is completely known and fixed. It is however interesting to investigate quantum networks in a setting where the network structure is only partially known, see e.g. \cite{Pappa2012,Murta2023}. For example, a party may only know/trust a small part of the network (in their local neighbourhood), but not the rest of the network. Beyond the conceptual interest, such a setting is clearly relevant in a practical context, for example in an adversarial scenario where a number of dishonest parties may want to take control over the whole network by collaborating. 

\begin{figure}[b!]
    \centering
     \includegraphics[width=0.98\columnwidth]{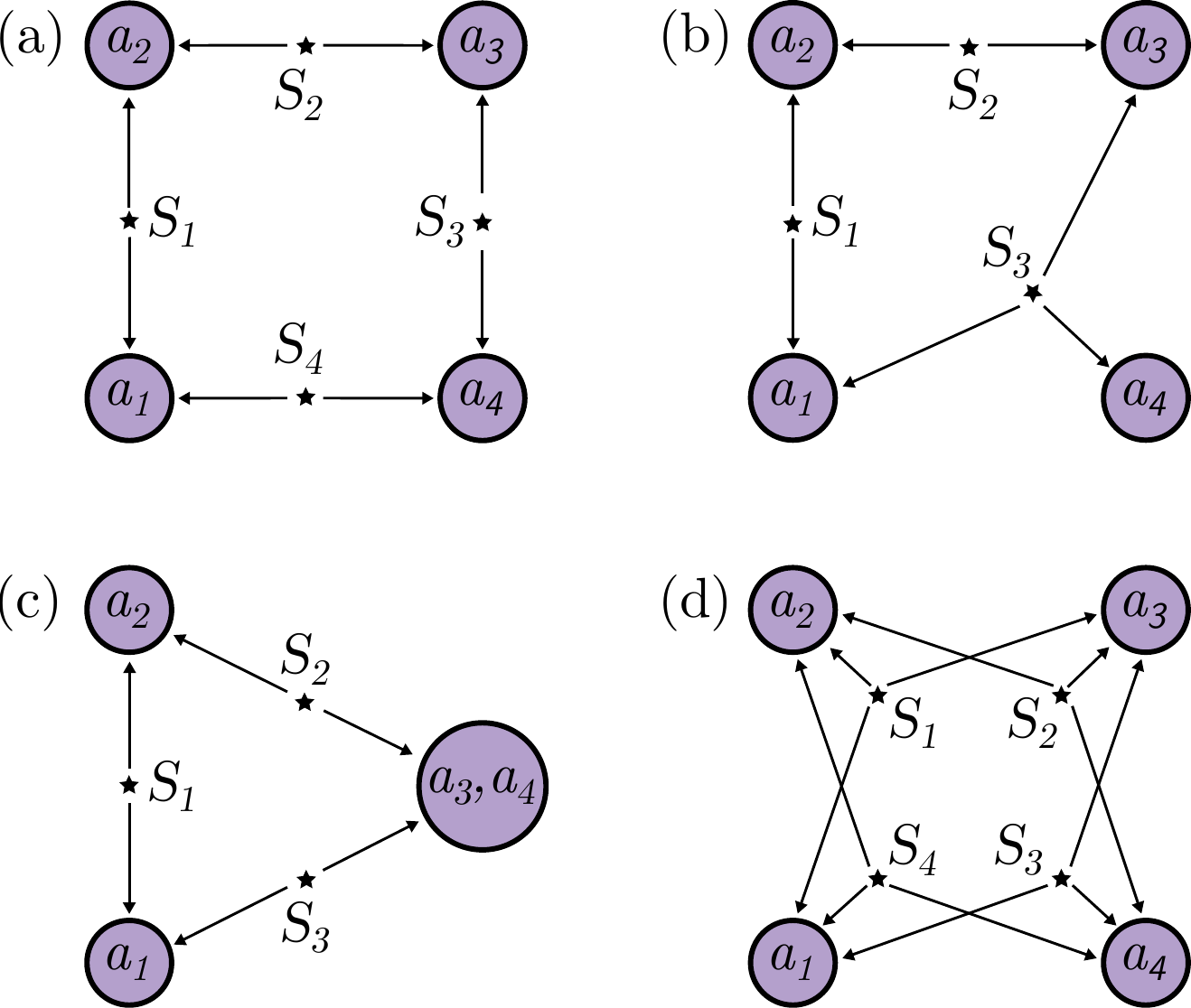}
     \caption{Different network configurations involving four parties (purple dots), connected by independent sources $S_1,S_2,S_3$ and $S_4$. Each party produces an outcome (denoted $a_1,a_2,a_3,a_4$). The strength of possible correlations, given by the joint distribution $P(a_1,a_2,a_3,a_4)$ depends on the network topology. In the main text, we present a quantum distribution on the square network (a), which remains nonlocal with respect to all four network structures, hence featuring topologically robust nonlocality.}
    \label{fig:sq_normal_extended}
\end{figure}

In this work, we discuss such a setting in the context of quantum network Bell nonlocality. The latter investigates quantum correlations in networks, in particular the advantage they offer over classical models, see e.g.  \cite{Branciard2010,Branciard_2012,Fritz_2012,Chaves2015,wolfe2019inflation,Aberg2020,wolfe2021inflation,Contreras2021,gisin2020constraints,Sekatski2023,Ligthart2023} and \cite{Tavakoli_Review} for a review. The main question we investigate here is the following: given only partial knowledge of the network structure, it is still possible to demonstrate the presence of quantum nonlocality? We will answer this question in the affirmative, and further show that black-box applications of quantum nonlocality, e.g. generating certified randomness, can be implemented in this setting. This shows that quantum Bell nonlocality in networks can be topologically robust.

In order to illustrate and formalize the problem, we start by considering a quantum distribution obtained on a simple ``square newtork'', with four parties in a ring configuration as in Fig.~\ref{fig:sq_normal_extended}(a). While this quantum distribution is nonlocal with respect to the (original) square network, we first show that it remains nonlocal even if we consider different network topologies as in Fig.~\ref{fig:sq_normal_extended}(b,c,d). That is, this quantum distribution (obtained in the square network) cannot be reproduced classically for any of these network structures, even though the latter enables local correlations that are much stronger compared to what is possible in the square network. Hence quantum nonlocality is here robust to changes in the network topology. In the second part of the paper, we demonstrate a much more dramatic instance of this phenomenon. We consider a large ring network with $N$ parties, and show that it is still possible to certify quantum nonlocality over the entire network while trusting only a very small part of the network topology; we present two constructions where we need to trust only the local network structure of two or three neighbouring parties. This shows that quantum nonlocality can be extremely robust to changes in the network topology.  

Furthermore, we show that applications of quantum nonlocality are possible in this setting where the network structure is only partially known. In particular, we demonstrate the presence of certified randomness in the local output of a given party, as well as the certification of entanglement, both at the level of states and measurements, which enables the certification of genuine quantum network nonlocality \cite{Supic2022}. At the technical level, a notable aspect of our work is to develop a systematic approach to self-test quantum distributions in
networks \cite{Sekatski2023}, leading to stronger bounds on randomness, which are also of independent interest. Finally, we conclude with a number of open questions.

\section{Problem and illustrative example}

To start our analysis, let us consider a network involving four distinct parties, denoted $A_j$ (with $j \in \{1,2,3,4\}$). The parties share physical resources, distributed by a number of independent sources, and different topologies can be considered, as illustrated in Fig. 1. Upon receiving these resources and processing them jointly, each party then provides a classical output denoted $a_j$. The correlation between these outputs is captured by the joint probability distribution $P(a_1,a_2,a_3,a_4)$.

In general, the strength of these correlations will depend on two important features. First, the nature of the physical systems distributed by the sources---notably quantum systems can lead to stronger correlations than classically possible (for a given network structure), the effect of quantum nonlocality (see e.g. \cite{Tavakoli_Review}). Second, the structure of the network itself, i.e. how the sources and parties are connected. This second aspect is the focus of the present work. In particular, we want to investigate the robustness of quantum nonlocality with respect to different topologies of the network.

To formalize the problem let us first discuss correlations for classical models. Here each source distributes a (classical) random variable (denoted $\lambda_k$ for source $k$) to all the parties connected to it. Importantly each source, i.e. the variables $\lambda_k$,  are assumed to be independent from the others \cite{Branciard2010,Fritz_2012}. For example, considering the square network of Fig.~1(a), possible correlations take the form
\begin{align} \label{localsquare}
P_L^\square(a_1,a_2,a_3,a_4)=\mathds{E}[
&p(a_1|\lambda_4,\lambda_1) p(a_2|\lambda_1,\lambda_2) \\ \nonumber  
&p(a_3|\lambda_2,\lambda_3)
p(a_4|\lambda_3,\lambda_4) ].
\end{align}
Here and below the expected value $\mathds{E}[.]$ is taken with respect to the random variables $\lambda_k$, assumed to be independent of each other. Any distribution admitting a model of this form is termed local; if no such decomposition exists, the distribution is termed nonlocal (with respect to the square network). 

Other network structures can be considered, allowing for more complex sources as in  Fig.~\ref{fig:sq_normal_extended}(b,d), or where two parties merge (forming a single party producing the corresponding outputs), as in Fig.~\ref{fig:sq_normal_extended}(c). From Eq. \eqref{localsquare} it is straightforward to define local correlations for these networks (see Appendix~\ref{app:preliminary}).

In general, characterizing sets of correlations achievable in networks is challenging; due to the independence condition of the sources these sets are not convex, see e.g. \cite{Branciard_2012,Rosset2016,wolfe2019inflation,Wolfe2021}. In the present case, we can nevertheless notice a hierarchy between sets of correlations for the four networks in Fig. \ref{fig:sq_normal_extended}. First, the network (b) is \textit{stronger} than (a), in the sense that all the models compatible with the latter are also possible in the former: the source $S_3$ in (b) can be composed of the sources $S_3$ and $S_4$ in (a). With similar arguments, we see that both (c) and (d) are stronger than (b). Finally, there is no strict relation between networks (c) and (d). Their respective sets of correlations are incomparable.

Let us now move to quantum models, where sources distribute quantum states. Interestingly, even though we consider a setting where each party performs a fixed measurement, the resulting output distribution can still feature quantum nonlocality \cite{Fritz_2012,branciard2012bilocal,Renou_2019}. We start with an example of a quantum distribution that exhibits nonlocality that is robust to modifications of the network topology. Consider the square network in Fig. 1 (a), which we just argued leads to the weakest correlations among all the networks in Fig. 1. Following \cite{Renou_2019} we consider that each source distributes a two-qubit Bell state $\ket{\psi_+} = \frac{\ket{10}+\ket{01}}{\sqrt{2}}$. Hence, each party receives two qubits, coming from two independent sources. In turn, to produce a four-valued output $a_j\in\{0,1_0,1_1,2\}$ each party performs a measurement in the following basis
\begin{equation}
\label{eq:measurment}
  \begin{split}
      & \ket{\bar{0}} = \ket{00}, \quad \ket{\bar{1}_0} =  u\ket{01}+v\ket{10},  \\ & \ket{\bar{1}_1} = v\ket{01}-u\ket{10},\quad \ket{\bar{2}} = \ket{11},
  \end{split}
\end{equation}
such that $u^2 + v^2 =1$ ($u,v \in \mathbb{R}$).  The resulting quantum distribution is denoted by $P_Q^\square(a_1,a_2,a_3,a_4) $. This distribution can exhibit topologically robust network nonlocality, as formalized in the following result.\\

\noindent{\bf Result 1.} The quantum distribution $P_Q^\square(a_1,a_2,a_3,a_4) $, for the parameter range $u_{\mathrm{min}}< u<1$ with $u_{\mathrm{min}} \approx 0.841$, is provably nonlocal with respect to the networks in Fig. 1, for configurations (a), (b) and (c). For the network (d), numerical results indicate that the quantum distribution is also nonlocal.\\

\noindent\emph{Sketch of proof.}  The full proof is given in Appendix \ref{app:A}. First note that the nonlocality of the distribution with respect to the square network (a) is proven in \cite{Renou_2019,Renou2022}. 
The key idea of the proof goes as follows. When the outcomes $a_j \in\{ 0,1_0,1_1,2\}$
are coarse-grained into $\tilde{a}_j \in \{ 0,1,2\}$ (by merging 
$ \{ 1_0, 1_1 \}\mapsto 1$), the resulting quantum distribution $P_Q^\square(\tilde a_1, \tilde a_2, \tilde a_3,\tilde a_4)$ satisfies the token counting (TC) property. That is, the condition $\sum_j \tilde{a}_j = 4$ holds deterministically. While such a coarse-grained distribution can be achieved via a local model on the square network, the latter turns out to be essentially unique (up to irrelevant relabellings)~\cite{Renou2022}. In turn, this property of ``rigidity'' leads to strong constraints, from which one can show that the original (fine-grained) quantum distribution cannot be reproduced by a local model.  

We prove the nonlocality of the quantum distribution $P_Q^\square$ with respect to the networks (b) and (c) using similar techniques. In fact, the rigidity property of TC distributions extends to any network as long as no pairs of parties are connected by more than one source \cite{Renou2022}. Since our fourth network (d) does not satisfy this property (as here any pair of parties is connected by two sources), we resort here to numerical techniques to investigate nonlocality. Specifically, we use the generative neural network algorithm developed in Ref. \cite{krivachy_neural_2020}, and obtain significant evidence that the quantum distribution $P_Q^\square$ cannot be achieved via a local model.

The quantum distribution $P_Q^\square$ thus represents an illustrative example of quantum network nonlocality that is topologically robust. In fact, this result can even be strengthened in the following way. We can prove that not only nonlocality, but also certain features of the quantum model can be robustly certified even though the network structure is only partially known. \\

\noindent{\bf Result 1'.} For the quantum distribution $P_Q^\square(a_1,a_2,a_3,a_4) $, for the parameter range $u_{\mathrm{min}}< u<1$ with $u_{\mathrm{min}} \approx 0.841$, and the networks (a),(b) and (c) in Fig.~1, we prove the following properties: \textit{(i)} the randomness of outcome $a_1$ is lower bounded, \textit{(ii)} the entanglement distributed by the sources is lower bounded, and \textit{(iii)} the measurements of parties connected to two sources must be entangled.\\

\noindent\emph{Sketch of the proof.}  The full proof and exact lower bounds are given in Appendix~\ref{app:A}. Since the network $(c)$ is the strongest of the three, it is enough to exhibit the properties \textit{(i-iii)} for this network. Similarly to the proof of Result 1, the key concept here is again rigidity. However, this time we use a notion of rigidity with respect to quantum models, following the techniques developed in Ref. \cite{Sekatski2023}. Note that each of the points $(i-iii)$ implies that the distribution is nonlocal. $\square$

Finally, we note that Result 1 and Result 1' also apply to any network that can be embedded in network (c), for example, the fully connected graph where each pair of parties are connected by a source. Furthermore, for the proof of Result 1', we derive new bounds on the output randomness applicable to a broad class of networks,  improving the previously known bound \cite{Sekatski2023} for the triangle network by a factor of four (see Appendix B.3.c and Fig. 4).


\section{Topologically robust nonlocality for large networks}

\begin{figure*}[!ht]
    \centering
    \includegraphics[width=\textwidth]{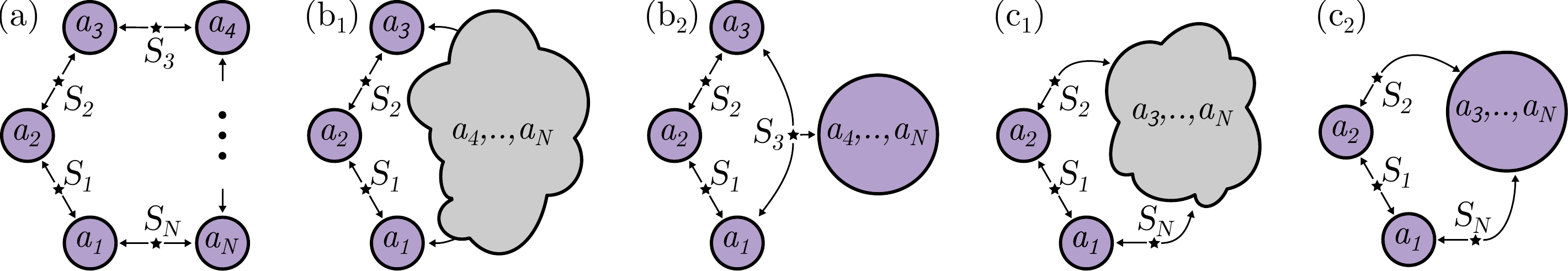}
    \caption{We consider a quantum distribution $P_Q^{\bigcirc}(a_1,\dots,a_N)$ obtained in the $N$-party ring network in (a). We show that the nonlocality of this distribution can still be demonstrated in a setting where only a small part of the network topology is trusted, and we consider two configurations. The first is shown in (b$_1$), where we trust only the topology for three neighbouring parties ($A_1$ to $A_3$). To prove nonlocality with respect to any such network, we consider the strongest possible network compatible with (b$_1$), where all parties $A_4$ to $A_N$ merge, as shown in (b$_2$). Second, we consider configuration in (c$_1$) where we trust only the local topology of two neighbouring parties ($A_1$ and $A_2$). Again the nonlocality of the quantum distribution $P_Q^{\bigcirc}(a_1,\dots,a_N)$ is demonstrated for any network of the form (c$_1$). This is proven by considering the network in (c$_2$), the strongest one compatible with (c$_1$).
    }
    \label{fig:general}
\end{figure*}

Having set and illustrated the concepts of topologically robust nonlocality, let us now move to larger quantum networks involving $N$ parties. We consider a quantum distribution $P_Q^\bigcirc$ on a ring network, as in Fig.~\ref{fig:general}(a), and show that its nonlocality features very strong topological robustness. More specifically, we present two slightly different scenarios. First we consider the configuration in Fig.~\ref{fig:general}(b$_1$) where we trust the network structure for only three neighbouring parties; $A_1$, $A_2$ and $A_3$ and two connecting sources $S_1$ and $S_2$. We prove that the quantum distribution $P_Q^\bigcirc$ is nonlocal for any possible network structure for the $N-3$ remaining parties ($A_4$ to $A_N$), and that randomness and entanglement can be certified. Second, we discuss the configuration of Fig.~\ref{fig:general}(c$_1$), where we trust only the network structure for two parties ($A_1$ and $A_2$ and three connecting sources $S_1$, $S_2$ and $S_3$), and prove nonlocality of the distribution $P_Q^\bigcirc$.

Formally, we consider a ring network with $N$ parties ($A_1$ to $A_N$) such that each pair of parties $A_i$ and $A_{i+1}$ connected by a bipartite source $S_i$ (with $A_{N+1}=A_1$). The quantum model we consider is similar to the one discussed above. That is, each source distributes a Bell state $\ket{\psi^+}$ and each party performs the two-qubit measurement in Eq.~\eqref{eq:measurment} to produce a four-valued outcomes $a_j\in\{\bar 0, \bar 1_0,\bar 1_1,\bar 2\}$. The resulting distribution $P_Q^\bigcirc (a_1\dots,a_N)$ is straightforward to compute. We now show that its nonlocal properties are topologically robust.

First, let us consider the setting depicted in Fig.~\ref{fig:general}(b$_1$). That is, we trust the local topology of the network involving the three parties $A_1$, $A_2$ and $A_3$ (connected in a chain), while the rest of the network topology is unknown. For example, one could imagine that all other parties $A_4$ to $A_N$ are in fact collaborating as in Fig.~\ref{fig:general}(b$_2$). Even though we trust here only a very small part of the network, we can still prove nonlocality, as well as bounds on certified entanglement and randomness. \\

\noindent{\bf Result 2.} Consider any network structure as in Fig.\ref{fig:general}(b$_1$), i.e. fixing only the local topology of parties $A_1$, $A_2$ and $A_3$. For $N\geq 4$ and considering the parameter regime $u_{\text{min}}< u<1$ with $u_{\text{min}}=0.841$, 
the quantum distribution $P_Q^{\bigcirc}(a_1,\dots,a_N)$ is nonlocal and has the following properties: \textit{(i)} the randomness of $a_2$ is lower bounded 
, \textit{(ii)} the entanglement of $S_1$ and $S_2$ is lower bounded, \textit{(iii)} the measurements of $A_1$, $A_2$ and $A_3$ are entangled.\\

\noindent \emph{Sketch of the proof.} The full proof and detailed lower bounds are given in Appendix~\ref{app: result 2 and 3}, here we present a sketch of the nonlocality proof. 
To prove that the quantum distribution remains nonlocal for any possible network structure as in Fig.~\ref{fig:general}(b$_1$), we focus on the network in Fig.~\ref{fig:general}(b$_2$) which leads to the strongest correlations; note that here all parties $A_4$ to $A_N$ come together and act jointly. The main idea of the proof is to consider the conditional distribution $P_Q^{\triangle}(a_1,a_2,a_3)=P_Q(a_1,a_2,a_3|a_4^*,\dots, a_N^*)$ of the outputs of the trusted parties ($A_1$ to $A_3$) conditioned on a specific values $(a_4^*,\dots, a_N^*)$ of the remaining $N-3$ outcomes. Conditioning the source $S_3$ to the outputs $a_4^*,\dots, a_N^*$ is equivalent to having a bipartite source $S_3^*$, so that the distribution $P_Q^{\triangle}(a_1,a_2,a_3)$ corresponds to a triangle network. We can show that there exists an output string $(a_4^*,\dots, a_N^*)$ such that the conditional distribution $P_Q^{\triangle}$ is nonlocal on the triangle. From this, it follows that $P_Q^\bigcirc(a_1,\dots, a_N)$ is nonlocal with respect to the network in Fig. 2(b$_2$). 

Finally, note that conditionally on observing the outputs $a_4^*, \dots, a_N^*$ one can also lower bound the entanglement of $S_1,S_2,$ and  $S_3^*$ and the randomness of $a_1,a_2$ and $a_3$, and show that the measurements of $A_1,A_2$ and $A_3$ are entangled. Moreover, since $a_2$, $S_1$ and $S_2$ are independent of the outcomes of $a_4, \dots, a_N$, the bounds on the randomness of $a_2$ and the entanglement of $S_1$ and $S_2$ are valid independently of the post-selection.  $\square$

It is worth noting that the above proof of nonlocality for  $P^\bigcirc_Q(a_1,\dots,a_N)$ applies for a range of the measurement parameter $u$ that does not decrease with $N$, in contrast to previous results \cite{Renou_2019,Renou2022}.


Let us now move to the second configuration shown in Fig.~\ref{fig:general}(c$_1$). Here we only assume the local network topology for parties $A_1$ and $A_2$. Again, we prove that the quantum distribution $P_Q^\bigcirc(a_1,\dots,a_n)$ on the ring network is nonlocal for any network of the form (c$_1$). This is done by proving nonlocality with respect to network (c$_2$), the strongest network compatible with (c$_1$). 

Before moving on to the results, it is interesting to compare the two configurations (b$_1$) and (c$_1$). One can easily see that any local model compatible with (b$_2$) (and hence with (b$_1$)) is also compatible with (c$_2$), while the opposite does not hold. In this sense, one can argue that configuration (c$_1$) is based on weaker assumptions than (b$_1$). \\


\noindent{\bf Result 3.} For any network structure as in Fig. \ref{fig:general}(c$_1$), i.e. fixing only the local topology of parties $A_1$ and $A_2$, the quantum distribution $P_Q^\bigcirc(a_1,\dots,a_N)$ is nonlocal for $N=4k$ and $N=2k+1$, and for a measurement parameter $u$ close enough to 1.
\\

\noindent\emph{Sketch of the proof.} The full proof is presented in the Appendix~\ref{app: result 2 and 3}. Consider the network in (c$_2$), which is the strongest one compatible with (c$_1$). This is a triangle network, and the fact the coarse-grained distribution is token counting implies that any underlying local model must fulfill some linear constraints discussed in detail in Appendix~\ref{app:sec_triangle}. These constraints can be rewritten as a Linear Program, which turns out to be equivalent to the LP considered in \cite{renou2022network} and shown to be unfeasible for $N=4k$ and $N=2k+1$, and for $u$ close enough to 1. This implies the nonlocality of the quantum distribution with respect to the network in Fig.~\ref{fig:general}(c$_2$). $\square$

\section{Discussion}

Results 2 and 3 show that quantum Bell nonlocality in networks can exhibit strong topological robustness, in the sense that one needs to trust only a very small part of the network structure. Importantly, this trust only concerns the local network topology; no trust is required on the states produced by the sources or on the local measurements, as usual in the study of nonlocality. For example in the configuration of Fig.2(b$_1$), we must trust the local network structure for parties $A_1$ to $A_3$ (as shown in the figure), but we do not require any trust on the sources ($S_1$ and $S_2$) or on the local measurements of $A_1$ to $A_3$, nor on the rest of the network structure for parties $A_4$ to $A_N$.

Moreover, we note that Results 2 and 3 can in fact be applied in parallel to different local neighborhoods of the network. Hence, in the setting of Fig.2(b$_1$), we could have in principle up to $\floor{N/3}$ parties certifying nonlocality and local randomness, independently and at the same time.

For these reasons, we believe that the concept of topologically robust quantum nonlocality may find applications, notably for cryptographic tasks in networks. Here a common scenario is when an adversary takes control over a significant part of the network, potentially jeopardizing the security. We have seen that quantum correlations can be highly resilient against such attacks, allowing notably for the black-box certification of local randomness. 


Our work may also connect to the question of inferring the topology of an a priori unknown network, based on the statistics of local measurements, see e.g. \cite{Aberg2020,Wolfe2021,Kraft2021,Chen2023}. In particular, it would be interesting to see if the methods developed here are relevant for this problem.

From a more fundamental perspective, our work shows that quantum nonlocality can be very robust in the network setting, in the sense that local models can be ruled out even when the constraints on the network structure are relaxed significantly. It is interesting to discuss the connection of these results with previous works focusing on the standard Bell bipartite scenario, where distant parties receive random measurement inputs. Notably, a question that received broad attention is to consider local models with relaxed measurement independence, see e.g.~\cite{brans1988,Hall2010,Puetz2014}. This question can also be naturally phrased in the language of network nonlocality by introducing two additional parties revealing the measurement inputs. As we discuss in Appendix \ref{app: standard Bell}, the resulting distribution only features a very limited topological robustness. Another related direction is the concept of genuine multipartite nonlocality. Here, quantum distributions are compared to local models where parties are allowed to come together in several groups, or share nonlocal resources such as non-signaling correlations, see e.g. ~\cite{Svetlichny,Bancal2013,Chaves2017}.

\section{Conclusion}

In this work, we investigated quantum Bell nonlocality in a setting where the topology of the underlying network is only partially known. We presented several examples of this effect, and a number of methods to address the problem. In particular, we showed that starting from a specific quantum distribution on a ring network with $N$ parties, it is enough to trust the network structure for two or three neighbouring parties in order to guarantee the presence of nonlocality over the entire network. This shows that quantum nonlocality can feature strong topological robustness. Furthermore, we showed that applications of quantum nonlocality, such as black-box certification of randomness and entanglement, can also be achieved in this scenario. The quantum distributions discussed here rely on the concept of token counting and can be realized within quantum optics~\cite{Abiuso2022}. Nevertheless, we expect that topologically robust nonlocality is also possible for other quantum models; in Appendix \ref{Numerical_elegant} we provide numerical evidence. Finally, on the more technical level, we derived more effective self-testing methods for quantum distributions in networks which can be of independent interest.

Our work opens a number of questions. It would be interesting to investigate topologically robust nonlocality in the presence of noise. While we presented some numerical analysis in Appendix \ref{Numerical_noise_robustness}, analytical progress should be possible via the recently introduced technique of approximate rigidity \cite{boreiri2023noise}. Another interesting direction is to discuss topologically robust nonlocality when allowing for partial correlations between the different sources \cite{Supic2020}.

\section{Acknowledgements}

We thank Alex Pozas-Kerstjens for the discussions. We acknowledge financial support from the Swiss National Science Foundation (projects 192244, 214458 and NCCR SwissMAP) and by the
Swiss Secretariat for Education, Research and Innovation (SERI) under contract number UeM019-3. TK was additionally funded by the European Research Council (Consolidator grant ’Cocoquest’ 101043705) and the Austrian Federal Ministry of Education via the Austrian Research Promotion Agency–FFG (flagship project FO999897481, funded by EU program NextGenerationEU).

\bibliographystyle{unsrt}
\bibliography{main.bib}

\onecolumngrid

\setcounter{theorem}{0}



\appendix


\section{Networks, Token counting Correlations and preliminary results}\label{app:preliminary}

\subsection{Notation for networks}

Let us consider a general network with $N$ sources $S_1,\dots,S_N$ and $M$ parties $A_1,\dots A_M$. To define its connectivity, for each party $A_j$ we define the set $\mathds{S}_j\subset[1,N]$ which contains the indices of all the sources connected to it, with the notation $S_i \to A_j$ iff $i\in \mathds{S}_j$. Finally, we are going to discuss probability distributions $P(a_1,\dots,a_K)$ of $K$ outputs labeled $a_k$ for $k\in[1,K]$ observed on different networks. To do so for each party $A_j$ on the network we define the set $\mathds{A}_j \subset [1,K]$ containing the indices of the outputs it produces, with $\mathds{A}_i\cap \mathds{A}_j =\emptyset$ for $i\neq j$ and $\bigcup_i \mathds{A}_i = [1,K].$

A network with $N$ source and $M$ parties is thus a bipartite graph with $N+M$ vertices split as sources and parties. The edges of the graph (connectivity of the network) are represented by the sets $\mathds{S}_1,\dots \mathds{S}_N$ listing all the parties connected to a given source. In addition, to discuss how the correlations of the $K$ outputs are produced on the network we introduced the \emph{disjoint} sets $\mathds{A}_1,\dots, \mathds{A}_M$ specifying which outputs are produced by which party.
To summarize, a network scenario with $M$ parties is fully specified by the following sets 
\be
\mathcal{G}=\{ \mathds{S}_1,\mathds{A}_1,\dots,\mathds{S}_M,\mathds{A}_M\}
\ee
and can be also represented graphically as depicted in..

\subsection{Local models}

Now to define the local correlations produced on a given network, associate a random variable $\lambda_i$ to each source. This random variable is copied and sent to each party $A_j$ connected to $S_i$. The party $A_j$ thus gathers all the variables received from the connected parties connected to it $\bm \lambda^{(j)} = \bigtimes_{i\in \mathds{S}_j} \lambda_i$
here and below $\bigtimes$ denotes the cartesian product with e.g. $\bigtimes_{i=1}^3 x_i=(x_1,x_2,x_3)$, and produces the outcome $\bm a^{(j)} = \bigtimes_{k\in \mathds{A}_j} a_k $
by sampling them from some conditional distribution $p(\bm a^{(j)}|\bm \lambda^{(j)})$. The most general local model thus leads to correlations of the form  
 \be
 P_L(a_1,\dots, a_k) = \mathds{E}\big( p(\bm a^{(1)}|\bm \lambda^{(1)})\dots p(\bm a^{(M)}|\bm \lambda^{(M)}) \big),
 \ee
 with the expected value taken over all the local variables $\lambda_1\dots \lambda_N$ sampled by the sources. 
Note that the functions $\bm a^{(j)}(\bm \lambda^{(j)})$ can be assumed deterministic without loss of generality, i.e. $p(\bm a^{(j)}|\bm \lambda^{(j)}) = 0$ or $1$, since any randomness can be delegated to the sources.

\subsection{Quantum models}
\label{app: quantum models}

\begin{wrapfigure}{r}{0.5\textwidth}
    \begin{center}
    \vspace{0 cm}
     \includegraphics[height=0.25 \textwidth]{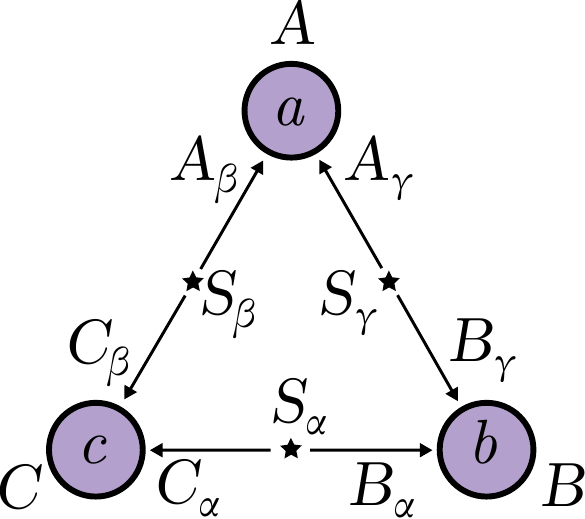}
     \end{center}
       \vspace{-0.4 cm}
    \caption{ Necessary notation to discuss quantum models in the triangle network. The three sources are labeled $S_\alpha,S_\beta, S_\gamma$ and prepare the pairs of systems $B_\alpha C_\alpha$, $C_\beta A_\beta$ and $A_\gamma B_\gamma$ in some bipartite states. The three parties $A,B,C$ produce the outputs $a,b,c$, and we also write $A=A_\gamma A_\beta, B= B_\alpha B_\gamma$ and $C=C_\beta C_\alpha$ to denote the composite quantum systems received by the parties.}
    \label{fig:tri}
\end{wrapfigure}

For a general quantum model, each source $S_i$ produces a multipartite quantum state $\rho_{i}$  and distributes each of the systems composing it to one of the connected parties. Each party $A_j$ gathers the quantum systems received from the connected sources and performs a measurement corresponding to some POVM $\{E^{\bm a^{(j)}}_{A_j}\}$ to produce the outputs $\bm a^{(j)}$. Without loss of generality, one can assume that the POVM is a PVM $(E^{\bm a^{(j)}}_{A_j})^2=E^{\bm a^{(j)}}_{A_j}$, since any POVM can be dilated to a PVM by introducing an auxiliary system, which in turn can be absorbed in one of the sources connected to the party.


We now explicit the notation for quantum models on the triangle, depicted in Fig.~\ref{fig:tri}, as we are going to use it extensively later. Here, we have three sources denoted $S_\alpha, S_\beta, S_\gamma$, and three parties denoted $A,B$ and $C$. Each source prepares the two quantum systems $B_\alpha C_\alpha$, $C_\beta A_\beta$  and $A_\gamma B_\gamma$ respectively in a bipartite quantum state 
\begin{align}
S_\alpha :\,\rho^{(\alpha)}_{B_\alpha C_\alpha}, \qquad
S_\beta : \, \rho^{(\beta)}_{C_\beta A_\beta}, \qquad 
S_\gamma: \, \rho^{(\gamma)}_{A_\gamma B_\gamma },
\end{align}
and distributes them to the parties suggested by the name of the system.

Each party gathers the quantum systems received from the neighbouring sources,  for example, $A_\beta$ and $A_\gamma$ in the case of the party $A$ (we also denote the composed system $A=A_\beta A_\gamma$ slightly abusing the notation), and produces the corresponding output $x$ by performing a POVM measurement $\{E^x_{Y}\}$ where $Y=A,B,C$ labels the party. This gives to the most general quantum correlations in the triangle  $P_Q(a,b,c)= \tr E^a_{A}\otimes E^b_B \otimes E^{c}_C \,\rho^{(\alpha)}_{B_\alpha C_\alpha}\otimes \rho^{(\beta)}_{C_\beta A_\beta} \otimes \rho^{(\gamma)}_{A_\gamma B_\gamma }.$

It is worth noting that each POVM $\{E^x_{Y}\}$ can be dilated to a PVM $\{\Pi^x_{Y'}\}$ by introducing an auxiliary system $D$ prepared in a known state, with $Y'=YD$. Moreover, without loss of generality, the system $D$ can be absorbed in one of the subsystems composing $Y$ and be prepared by the corresponding source, without affecting any of the following claims. Doing this is equivalent to assuming that the measurements are PVM to start with. Without loss of generality, we can thus write any quantum correlations as 
\be
P_Q(a,b,c)= \tr \Pi^a_{A}\otimes \Pi^b_B \otimes \Pi^{c}_C \,\rho^{(\alpha)}_{B_\alpha C_\alpha}\otimes \rho^{(\beta)}_{C_\beta A_\beta} \otimes \rho^{(\gamma)}_{A_\gamma B_\gamma }
\ee
with each party performing a PVM, $(\Pi_Y^x)^2=\Pi_Y^x.$

\subsection{Token Counting Distributions and Rigidity}

Here, we focus on a particular family of strategies on networks, based on classical Token Counting (TC) strategies. In a TC strategy, each source $S_i$ randomly distributes a fixed number of tokens $n_i$ to the neighbouring parties, according to some probability distribution $Q_i(\bm t_i)$. Here, $\bm t_i = \bigtimes_{j\in\mathds{S}_i}t_i^j$  with $\sum_j t_i^j = n_i$ and each $t_i^j$ denotes the number of tokens sent to the party $A_j$ by the source $S_i$. In turn, each party outputs the total number of tokens it received $\tilde {a}_j = \sum_i t_i^j$. The resulting correlations $P(\tilde{a}_1, \dots, \tilde{a}_n)$ are called TC distributions. \\

If the underlying network is \emph{No Double Common-Source} (NDCS), see definition below, the TC distributions are known to be \emph{rigid}, meaning that among all possible classical strategies on the given network, the TC strategy we just described is essentially the unique model leading to $P(\tilde{a}_1, \dots, \tilde{a}_n)$ \cite{Renou2022}. This is formalized in Theorem 1. 

\begin{definition}[No Double Common-Source networks]\label{definition:NDCSnetworks}
A network is called \emph{No Double Common-Source (NDCS)} if each pair of parties does not share more than one common source. More formally, $|\mathds{S}_j \cap \mathds{S}_{j'}|\leq 1$  for $j\neq j'$.
\end{definition}

\begin{theorem}\cite{Renou2022,renou2022network}. \label{th:class rigid} For any classical strategy leading to a TC distribution $P(\tilde{a}_1, \dots, \tilde{a}_n)$ on a NDCS network, for each source $S_i$ there exist local relabeling functions 
\begin{equation}
    T_i^j : \lambda_i \mapsto \mathds{Z}_{\geq 0}
\end{equation}
with the index $j$ running through the parties $A_j$ connected to the source $S_i$, which have the following properties.
\end{theorem}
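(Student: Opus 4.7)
The plan is to show that any classical local model realising a token-counting distribution on an NDCS network must decompose, after suitable relabelling, into additive contributions from each source. Concretely, I aim to prove that (i) each output factorises as $f_j(\bm\lambda^{(j)}) = \sum_{i\in\mathds{S}_j} T_i^j(\lambda_i)$ and (ii) the source-wise sums $\sum_{j\in\mathrm{Par}(i)} T_i^j(\lambda_i)$ equal $n_i$ almost surely, where $\mathrm{Par}(i)=\{j\mid i\in\mathds{S}_j\}$ is the set of parties connected to $S_i$.

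First I reduce to deterministic strategies by absorbing any party-local randomness into the shared variables, so that without loss of generality $\tilde a_j = f_j(\bm\lambda^{(j)})$. The TC property then reads $\sum_j f_j(\bm\lambda^{(j)}) = n := \sum_i n_i$ almost surely. Next, fix reference values $\lambda_k^\star$ in the support of each source and define
\[
T_i^j(\lambda_i) := f_j\!\left(\lambda_i,(\lambda_k^\star)_{k\in\mathds{S}_j\setminus\{i\}}\right) - c_i^j,
\]
with nonnegative integer constants $c_i^j$ tuned so that $T_i^j\ge 0$ and $\sum_{i\in\mathds{S}_j} c_i^j = f_j\!\big((\lambda_k^\star)_{k\in\mathds{S}_j}\big)$.

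The core of the proof is the additive decomposition $f_j(\bm\lambda^{(j)}) = \sum_{i\in\mathds{S}_j} T_i^j(\lambda_i)$, which I establish by showing that all discrete mixed differences of $f_j$ vanish; the pairwise case suffices. For any two sources $S_i,S_{i'}$ and any background configuration $\bm\mu$ of the remaining sources, consider the four configurations obtained by independently setting $\lambda_i$ and $\lambda_{i'}$ to either their actual or their reference values. Summing the TC identity $\sum_j f_j = n$ over these four configurations with alternating signs, the contributions from parties connected to at most one of $S_i,S_{i'}$ cancel identically, leaving only the mixed second difference of parties in $\mathrm{Par}(i)\cap\mathrm{Par}(i')$. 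NDCS now enters decisively: two parties sharing both $S_i$ and $S_{i'}$ would share two common sources, so $|\mathrm{Par}(i)\cap\mathrm{Par}(i')|\le 1$. Hence the alternating sum reduces to the mixed difference of a single party (if any), which is therefore forced to vanish. A short induction on the number of arguments of $f_j$ promotes pairwise vanishing to full additivity.

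Finally, property (ii) is obtained by substituting the decomposition back into the TC constraint: $\sum_j\sum_{i\in\mathds{S}_j} T_i^j(\lambda_i)=n$ must hold for every $\bm\lambda$, which forces $\sum_{j\in\mathrm{Par}(i)}T_i^j(\lambda_i)$ to be independent of all other sources and hence constant in $\lambda_i$; the normalisation of the $c_i^j$ fixes this constant to $n_i$. The main obstacle is the additive-decomposition step: without NDCS, two parties sharing two or more common sources could encode genuinely joint, non-additive dependencies on the corresponding $\lambda$'s, and the mixed-difference identity would fail to pin any individual term to zero. Getting the NDCS assumption to propagate from pairwise vanishing to full additivity via a clean induction on the number of jointly varied sources is the delicate step of the argument.
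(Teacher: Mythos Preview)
The paper does not prove this theorem; it is quoted from \cite{Renou2022,renou2022network}, so there is no in-paper argument to compare against. Your mixed-difference approach to the additive decomposition (property (iii)) is sound and elegant: the observation that NDCS forces $|\mathrm{Par}(i)\cap\mathrm{Par}(i')|\le 1$, so that the alternating four-configuration sum of the TC identity isolates a single party's second difference, is exactly the right mechanism, and the induction from pairwise vanishing to full additivity is standard.

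However, there are genuine gaps. First, property (ii)---that the pushforward law of $(T_i^j(\lambda_i))_{j}$ under $\lambda_i$ equals $Q_i$---is not addressed at all. This is the heart of rigidity: it says not merely that the model factorises additively, but that each source's contribution has the \emph{same distribution} as in the honest TC strategy. Establishing this requires identifying output events that pin down the token assignment at a given source (this is where NDCS is used more substantially in the original proofs), and it does not follow from additivity alone. Second, your argument for property (i) only shows that $\sum_{j}T_i^j(\lambda_i)=m_i$ is a constant with $\sum_i m_i=n$; the claim that ``the normalisation of the $c_i^j$ fixes this constant to $n_i$'' is unsubstantiated. In fact your stated normalisation $\sum_{i\in\mathds{S}_j}c_i^j=f_j(\bm\lambda^\star)$ is arithmetically off---reconstructing $f_j$ from your $T_i^j$'s actually requires $\sum_{i}c_i^j=(|\mathds{S}_j|-1)\,f_j(\bm\lambda^\star)$---and in any case no choice of additive constants can determine the individual $m_i$ without further input from the observed distribution. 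Relatedly, you assert but do not justify that nonnegative integer $c_i^j$ exist making all $T_i^j\ge 0$ while satisfying the sum constraint. In short: you have a correct skeleton for (iii), but (i) and especially (ii) still need a separate argument tying your constructed $T_i^j$ back to the specific TC parameters $(n_i,Q_i)$.
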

\begin{align}
(i)\sum_{j| \mathds{S}_i} T_{i}^j(\lambda_i) = n_i,\qquad (ii)\,  \text{Pr} \left(  \bigtimes_{j|S_i\to A_j}T_{i}^{j}(\lambda_i) = \bm t_i \right) = Q_i(\bm t_{i}),\qquad (iii)
\sum_{i|S_i\to A_j} T_{i}^j(\lambda_i) = \tilde{a}_j(\lambda_1,\dots \lambda_N). \nonumber
\end{align}\\

On some networks, the same result, albeit formulated differently, holds for all quantum models \cite{Sekatski2023}. To keep things simple here we only give it for the triangle network as this is the only case we will need later. We will use the notation introduced in Sec.~\ref{app: quantum models} and Fig.~\ref{fig:tri}. Furthermore, we consider the case of $\tilde{a},\tilde{b},\tilde{c} \in\{0,1,2\}$, and consider TC distribution where each source controls exactly $n_\xi=1$ token, and the source $S_\alpha$ sends it to $C$ with probability $p_\alpha$, the source $S_\beta$ sends it to $A$ with probability $p_\beta$, and the source $S_\gamma$ to $B$ with probability $p_\gamma$. The three values $p_\alpha, p_\beta$ and $p_\gamma$ uniquely specify the TC distribution.

\begin{theorem}\cite{Sekatski2023}. \label{th:quant rigid} For any quantum model leading to a TC distribution $P(\tilde a,\tilde b,\tilde c)$ on the triangle network with $\tilde a,\tilde b,\tilde c\in\{0,1,2\}$  and characterized by $p_\alpha, p_\beta$ and $p_\gamma$, the following holds. The states $\rho^{(\alpha)}_{B_\alpha C_\alpha}, \rho^{(\beta)}_{C_\beta A_\beta}, \rho^{(\gamma)}_{A_\gamma B_\gamma}$ prepared by each source admit a purification of the form
\begin{equation}\label{eq: states rigid}
\begin{split}
  &  \ket{\psi_\alpha}_{B_\alpha C_\alpha E_\alpha}= \sqrt{p_\alpha} \ket{01}_{\textbf{B}_\alpha \textbf{C}_\alpha}\!\ket{j^{01}_\alpha}_{J_\alpha}\!\!+ \sqrt{1-p_\alpha} 
\ket{10}_{\textbf{B}_\alpha \textbf{C}_\alpha}\ket{j^{10}_\alpha}_{J_\alpha},\\
 &  \ket{\psi_\beta}_{C_\beta A_\beta E_\beta}= \sqrt{p_\beta} \ket{01}_{\textbf{C}_\beta \textbf{A}_\beta}\!\ket{j^{01}_\beta}_{J_\beta}\!\!+ \sqrt{1-p_\beta} 
\ket{10}_{\textbf{C}_\beta \textbf{A}_\beta}\ket{j^{10}_\alpha}_{J_\beta},\\ &  \ket{\psi_\gamma}_{A_\gamma B_\gamma E_\gamma}= \sqrt{p_\gamma} \ket{01}_{\textbf{A}_\gamma \textbf{B}_\gamma}\!\ket{j^{01}_\gamma}_{J_\gamma}\!\!+ \sqrt{1-p_\gamma} 
\ket{10}_{\textbf{A}_\gamma \textbf{B}_\gamma}\ket{j^{10}_\gamma}_{J_\gamma},\\
\end{split}
\end{equation}
Here the system $J_\alpha = B'_\alpha C'_{\alpha} E_\alpha$ is composed of systems $B'_\alpha$ and $C'_\alpha$ sent to $B$ and $C$ respectively (with $B_\alpha = \bm B_\alpha B'_\alpha$ and $C_\alpha = \bm C_\alpha C'_\alpha$) and a purifying system $E_\alpha$ which is "internal" to the source, and analogous decompositions hold for the remaining systems. The measurements performed by each party $X=A,B,C$ are of the form
\be
\Pi_X^0 = \ketbra{00}_{\textbf{X}_\xi \textbf{X}_{\xi'}}\otimes \mathds{1}_{X_\xi X_\xi'}, \qquad 
\Pi_X^2 =\ketbra{11}_{\textbf{X}_\xi \textbf{X}_{\xi'}}\otimes \mathds{1}_{X_\xi X_\xi'}, \quad
 \Pi_X^1= (\ketbra{01}+\ketbra{10})_{\textbf{X}_\xi \textbf{X}_{\xi'}}\otimes \mathds{1}_{X_\xi X_\xi'} \nonumber,
\ee
where $\xi$ and $\xi'$ denote the sources $S_\xi$ and $S_\xi'$ connected to the party $X$.
\end{theorem}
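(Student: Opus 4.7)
The proof strategy is a quantum analog of the classical rigidity argument of Theorem~\ref{th:class rigid}: we exploit the deterministic constraint $\tilde a+\tilde b+\tilde c=3$ coming from the TC property, combined with source independence, to pin down both the shape of the states and of the measurements up to the ``junk'' degrees of freedom. As a preliminary step, I would use the discussion of Sec.~\ref{app: quantum models} to assume without loss of generality that each party performs a PVM $\{\Pi^{\tilde x}_X\}$ with $\tilde x\in\{0,1,2\}$, any Naimark dilation having been absorbed into adjacent sources, and that the sources distribute pure states $\ket{\psi_\alpha},\ket{\psi_\beta},\ket{\psi_\gamma}$. The TC constraint then forces the only outcome triples with non-zero probability to be permutations of $(0,1,2)$ together with $(1,1,1)$, so that
\begin{equation}
\Pi^0_A\otimes\Pi^0_B\otimes\mathds{1}_C,\qquad \Pi^2_A\otimes\Pi^2_B\otimes\mathds{1}_C,\qquad \Pi^0_A\otimes\Pi^2_B\otimes\Pi^0_C,
\end{equation}
and many similar joint projectors, must annihilate the global pure state $\ket{\psi_\alpha}\ket{\psi_\beta}\ket{\psi_\gamma}$.

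The central step is to extract, at each party, the ``main qubit'' carried by each incoming subsystem. For party $A=A_\beta A_\gamma$ I would look for local projectors $Q_\beta^0,Q_\beta^1$ on $A_\beta$ and $Q_\gamma^0,Q_\gamma^1$ on $A_\gamma$ such that, on the support of the global state,
\begin{equation}
\Pi^0_A = Q_\beta^0\otimes Q_\gamma^0,\qquad \Pi^2_A = Q_\beta^1\otimes Q_\gamma^1,\qquad \Pi^1_A = Q_\beta^0\otimes Q_\gamma^1 + Q_\beta^1\otimes Q_\gamma^0,
\end{equation}
and analogous factorizations at $B$ and $C$. The physical intuition is that $\tilde a=0$ is the event ``zero tokens received from $S_\beta$ and zero from $S_\gamma$'', whose indicator should factorize across the two incoming subsystems by source independence. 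Once such projectors are obtained, the two-dimensional subspace $\mathrm{range}(Q_\xi^0)\oplus\mathrm{range}(Q_\xi^1)$ on each subsystem $X_\xi$ is by definition the main qubit $\textbf{X}_\xi$, and the orthogonal complement is repackaged into the junk system $J_\xi$ of the corresponding source.

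With the main qubits in place, the rest of the theorem follows nearly mechanically. For source $S_\xi$ connecting parties $X$ and $Y$, the constraint that exactly one token is emitted translates into $T_X^\xi+T_Y^\xi=\mathds{1}$ on the state, which forces the source to be supported entirely on $\mathrm{span}\{\ket{01},\ket{10}\}$ of the two main qubits $\textbf{X}_\xi\textbf{Y}_\xi$. A Schmidt-type decomposition along this bipartition, with the purifying system taken to be $J_\xi$, then delivers the form of Eq.~\eqref{eq: states rigid}, and the weights $p_\xi,1-p_\xi$ are fixed by matching the observed marginals (which coincide with the classical TC weights by assumption). The measurement form $\Pi^0_X=\ketbra{00}_{\textbf{X}_\xi\textbf{X}_{\xi'}}\otimes\mathds{1}$, etc., is then simply the restatement that $Q_\xi^0=\ketbra{0}_{\textbf{X}_\xi}\otimes\mathds{1}_{\mathrm{junk}}$.

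The main obstacle of the plan is the rigorous proof of the factorization $\Pi^0_A=Q_\beta^0\otimes Q_\gamma^0$, and its analogues at the other parties, across the bipartitions defined by the incoming sources. This is a genuine self-testing statement: a priori $\Pi^0_A$ is only known to be a projector on the bipartite space $A_\beta\otimes A_\gamma$, and one must use the source-independence of $\ket{\psi_\beta},\ket{\psi_\gamma}$ together with the many vanishing conditions on joint projectors to rule out entangled alternatives. I would attempt it by combining the product structure of the global state with the observed marginals and the NDCS property of the triangle (which ensures that each pair of neighbouring parties shares only one source and hence the bipartitions on neighbouring parties are compatible), essentially along the lines of the techniques developed in \cite{Sekatski2023}. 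Once this factorization is secured, all remaining steps are algebraic and the rigidity structure falls out.
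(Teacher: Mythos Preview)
The paper does not contain a proof of this theorem: it is stated with the citation \cite{Sekatski2023} and is used as an imported result throughout Appendix~\ref{app:sec_triangle}, so there is no in-paper proof to compare against. Your sketch is therefore not reproducing anything the paper does, but rather outlining how the result of \cite{Sekatski2023} is obtained.

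That said, your outline is on the right track and matches the architecture of the argument in \cite{Sekatski2023}: reduce to pure states and PVMs, use the deterministic TC constraint to force many joint projectors to vanish on the global state, extract a local ``token number'' observable on each incoming subsystem (your $Q_\xi^0,Q_\xi^1$), and then read off the state and measurement structure. You also correctly identify the genuinely nontrivial step, namely the factorization $\Pi_A^0=Q_\beta^0\otimes Q_\gamma^0$ (and its analogues) on the support of the state. Two small caveats: first, the statement only holds on the support of the global state, so all equalities should be understood in that sense (equivalently, up to local isometries), which you implicitly acknowledge but should make explicit; second, the factorization argument really does require the specific combinatorics of the TC constraints across all three parties simultaneously, and is not a purely local statement at $A$---your appeal to \cite{Sekatski2023} for this step is appropriate rather than a gap you could easily fill in place.
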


 This result provides a decomposition of each system $X_\xi$ into a qubit $\bm X_\xi$ and a "junk" $X_\xi'$ on which the measurements act trivially.

\section{Triangle network. Nonlocality, randomness and entanglement}\label{app:sec_triangle}

We consider distributions $P_Q^\triangle( a,b,  c)$ on the triangle network, with the outputs belonging to the sets $\mathds{O}_X= \{0,  1_0, \dots, 1_{|\mathds{O}_X|-3}, 2\}$ for $X=A,B,C$ labeling the parties. The cardinality of each output can be different but we assume that it is larger than three $|\mathds{O}_X|\geq 4$, the minimal set is $\{ 0, 1_0,1_1,2\}$. 
 We assume that after each party performs the deterministic coarse graining $x \mapsto \tilde x \in\{0,1,2\}$ given by  
\be
\tilde x = \begin{cases} 
0&  x =  0\\
1& x =1_i\\
2&  x = 2
\end{cases},
\ee
the induced distribution $P_Q^\triangle(\tilde a, \tilde b, \tilde c)$ is token counting. We now discuss how this property and the rigidity of TC distributions can be combined to prove some properties of the model underlying $P_Q^\triangle(a,b, c)$.

\subsection{Nonlocality}

By "classical" rigidity (Theorem~\ref{th:class rigid}) this observation puts severe constraints on any local model leading to $P(\bar a,\bar b, \bar c)$. Following \cite{RenouTriangle1} one can prove that the distribution is nonlocal by showing that these constraints are infeasible on the underlying network. We formulate this result in the form inspired by \cite{boreiri2023noise}.

To state the theorem let us introduce what we call semi-local models. For such a model each source $S_\xi$ prepares a quantum-classical state, which for simplicity we can write as  
\be\label{eq: semilocal state}
\varrho_\xi =  \sum_{\xi} p(\xi) \ketbra{\xi}_C^{\otimes m} \otimes \rho^{(\xi)}_{Q_1\dots Q_m},
\ee
Here  $Q_1,\dots,Q_m$ denote the quantum systems with $m$ being the number of parties connected to the source. $C$ denotes the classical subsystem with all the states $\ket{\xi}_C$ being orthogonal, a copy of the random variable $\lambda$ is sent to each of the parties connected to the source. Note that we write the sum over $\ketbra{\xi}$ for convenience, more generally one may think that the source first samples a random variable $\xi$, and then a quantum state $\rho^{(\xi)}_{Q_1\dots Q_m}$ conditional on its value. Furthermore, this state is assumed quantum for simplicity, but it may be in principle described with another generalized probability theory without affecting the following arguments.

Upon receiving all the classical and quantum systems from the connected sources each party $X$ performs a measurement, which can be described by a set of PVMs 
\be
\mathcal{M}_X=\Pi_X^{\bar {x} |\xi,\xi'},
\ee
where $\xi, \xi'$ are the local variables sampled by the sources connected to $X$. \\

\noindent \textbf{Definition.} We say that the distribution $P^\triangle(a,b,c)$ is semi-local with respect to the (coarse-grained) outputs $\tilde a,\tilde b,\tilde c$, if it can be reproduced by a semi-local model $\varrho_\alpha, \varrho_\beta, \varrho_\gamma, \mathcal{M}_A,  \mathcal{M}_B, \mathcal{M}_C$ such that the values of  $\tilde a,\tilde b,\tilde c$ are determined by the local variables $\alpha, \beta,\gamma$, i.e. $\Pi^{x|\xi,\xi'}_X= x(\xi,\xi') \, \id_{\bm Q}$.

\begin{theorem} \label{th: semi-local} For a distribution $P^\triangle(a,b,c)$ which is a semi-local with respect to the coarse-grained outputs $\tilde a,\tilde b,\tilde c$, and such that the distribution $P^\triangle(\tilde a,\tilde b,\tilde c)$ is token counting,  the following linear program is feasible
\begin{align}\label{eq: feasibility locality}
\min_{\substack{ q(i,j,k|\cw) \\ q(i,j,k|\acw)}}  \qquad &1 \\
\textrm{such that}\qquad &q(i,j,k|\circlearrowright),\,q(i,j,k|\acw)\geq 0, \quad \sum_{i,j,k} q(i,j,k|\cw)=\sum_{i,j,k} q(i,j,k|\acw)=1\label{th3: c1}
\\
 & q(i|\cw)= \frac{P^\triangle(1_i,2,0)}{P^\triangle(1,2,0)}  \text{ and cyclic permutations} \label{th3: c3}
 \\
& q(i|\acw)= \frac{P^\triangle(1_i,0,2)}{P^\triangle(1,0,2)} \text{ and cyclic permutations} \label{th3: c4} 
\\
&q(i,j,k|\cw)\, p_\alpha p_\beta p_\gamma+ q(i,j,k|\acw)\, (1-p_\alpha)(1-p_\beta)(1-p_\gamma) = P^\triangle(1_i, 1_j, 1_k) \label{th3: c2}\\
&p_\alpha = \frac{P^\triangle(0,1,2)}{P^\triangle(2,0,1)+P^\triangle(0,1,2)},\, p_\beta = \frac{P^\triangle(2,0,1)}{P^\triangle(1,2,0)+P^\triangle(2,0,1)},\, p_\gamma = \frac{P^\triangle(1,2,0)}{P^\triangle(0,1,2)+P^\triangle(1,2,0)}.\label{eq: papbpc}
\end{align}
\end{theorem}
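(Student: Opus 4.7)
The plan is to use the rigidity of token-counting distributions (Theorem~\ref{th:class rigid}) together with the product structure of the semi-local model to explicitly construct the conditional distributions $q(i,j,k|\cw)$ and $q(i,j,k|\acw)$ from the underlying model.

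First, I restrict the semi-local model to the coarse-grained outputs. The hypothesis that the model is semi-local with respect to $(\tilde a,\tilde b,\tilde c)$ means that each coarse output is a function of the classical variables $(\xi_\alpha,\xi_\beta,\xi_\gamma)$ alone. The marginal distribution $P^\triangle(\tilde a,\tilde b,\tilde c)$ is therefore generated by a purely classical local model on the triangle, and since this distribution is TC on a NDCS network, Theorem~\ref{th:class rigid} applies: each source $S_\xi$ admits a deterministic binary relabeling of its variable dictating which neighbouring party receives its unique token. Let $p_\alpha,p_\beta,p_\gamma$ denote the probabilities of the choices $S_\alpha\to C$, $S_\beta\to A$, $S_\gamma\to B$; from the polynomial expressions for the seven nonzero coarse probabilities one recovers these parameters in terms of $P^\triangle(\tilde a,\tilde b,\tilde c)$ as given in the theorem statement.

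Second, define the events $\cw=\{S_\alpha\to C,\,S_\beta\to A,\,S_\gamma\to B\}$ with probability $p_\alpha p_\beta p_\gamma$, and $\acw=\{S_\alpha\to B,\,S_\beta\to C,\,S_\gamma\to A\}$ with probability $(1-p_\alpha)(1-p_\beta)(1-p_\gamma)$; these are the only two token assignments yielding the coarse outcome $(1,1,1)$. Set
\begin{equation*}
q(i,j,k|\cw) \;:=\; \Pr\!\left[a=1_i,\,b=1_j,\,c=1_k\,\big|\,\cw\right],
\end{equation*}
and analogously for $\acw$. Non-negativity and normalisation are immediate, and the decomposition
\begin{equation*}
P^\triangle(1_i,1_j,1_k) \;=\; p_\alpha p_\beta p_\gamma\, q(i,j,k|\cw) \,+\, (1-p_\alpha)(1-p_\beta)(1-p_\gamma)\, q(i,j,k|\acw)
\end{equation*}
is the law of total probability, since $\cw$ and $\acw$ are the only classical branches in which all three coarse outputs equal $1$. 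To verify the marginal constraints, note that the coarse event $(1,2,0)$ forces the unique token assignment $S_\alpha\to B,\,S_\beta\to A,\,S_\gamma\to B$; in particular, party $A$ receives its single token from $S_\beta$, exactly as in the $\cw$ branch. Because $A$ is connected only to $S_\beta$ and $S_\gamma$, its measurement outcome depends only on $(\xi_\beta,\xi_\gamma)$ and on the quantum systems from $S_\beta,S_\gamma$; source-independence then implies that, conditional on the token directions of $S_\beta$ and $S_\gamma$, the state $A$ measures is the same regardless of whether $S_\alpha$ sends its token to $B$ or to $C$. Hence the conditional distribution of the fine label $i$ is identical in the $(1,2,0)$ and $\cw$ branches, yielding $q(i|\cw)=P^\triangle(1_i,2,0)/P^\triangle(1,2,0)$; cyclic permutations handle the $j$ and $k$ marginals, and the analogous mixed branches $(1,0,2),(2,1,0),(0,2,1)$ deliver the acw marginals.

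The main obstacle is the independence step just described: one must verify that $A$'s fine output genuinely decouples from $\xi_\alpha$ once we condition on $(\xi_\beta,\xi_\gamma)$, so that comparing the mixed branch $(1,2,0)$ to the pure $\cw$ branch is legitimate at the level of the $i$-marginal. This rests on the triangle's topology ($A$ is not adjacent to $S_\alpha$) together with the product form $\varrho_\alpha\otimes\varrho_\beta\otimes\varrho_\gamma$ of the semi-local source states; once this is carefully set up, the remaining LP constraints reduce to direct bookkeeping with the marginals of $q(\cdot|\cw)$ and $q(\cdot|\acw)$.
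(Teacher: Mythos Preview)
Your proposal is correct and follows essentially the same route as the paper: apply classical TC rigidity (Theorem~\ref{th:class rigid}) to the classical layer of the semi-local model to obtain the token variables, define $q(i,j,k|\cw)$ and $q(i,j,k|\acw)$ as the conditional fine-grained probabilities given the two token assignments compatible with $(\tilde a,\tilde b,\tilde c)=(1,1,1)$, and then derive the marginal constraints by using that $A$'s output is independent of $\xi_\alpha$ so that one can swap the $\cw$ branch for the $(1,2,0)$ branch at the level of the $i$-marginal. The independence step you flag as the main obstacle is handled identically in the paper, simply by noting that the network structure makes $a$ independent of $\alpha$ (hence of $t_\alpha$).
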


\begin{proof}
Since $P^\triangle(a, b, c)$ is semi-local and the coarse-grained distribution $P^\triangle(\tilde a,\tilde b,\tilde c)$ is TC, applying theorem~\ref{th:class rigid} guarantees the existence of the relabeling functions $T_{\alpha}^B,T_{\alpha}^C,T_{\beta}^C,T_{\beta}^A,T_{\gamma}^A, T_{\gamma}^B \in \{0,1\}$, applied on the respective local variables $\alpha,\beta$ and $\gamma$, and such that 
\begin{align}\label{eq: T+T}
&T_\alpha^B+T_\alpha^C =1 \qquad \qquad \, \, T_\beta^C+T_\beta^A =1 \qquad \qquad \, \, T_\gamma^A+T_\gamma^B =1 \\
\label{eq: rig rig rig}
 &T_\gamma^A +T_\beta^A=\tilde a(\beta,\gamma) \qquad 
 T_\alpha^B+T_\gamma^B = \tilde b(\alpha,\gamma)  \qquad 
T_\alpha^C+T_\beta^C = \tilde c(\alpha,\beta).
\end{align}
To shorten the notation it is convenient to introduce the binary variables $t_\alpha = T_{\alpha}^C$, $t_\beta = T_{\beta}^A$ and $t_\gamma =T_{\gamma}^B$ with $T_{\alpha}^B= 1-t_\alpha,\dots$ by Eq.~\eqref{eq: T+T}. The coarse-grained outputs $\tilde a,\tilde b,\tilde c$ are fully determined by the values $\bm t= (t_\alpha,t_\beta, t_\gamma)$ sampled at the sources.
Except the cases $\bm t =(0,0,0)$ and $\bm t =(1,1,1)$, which both result in $(\tilde a,\tilde b,\tilde c)=(1,1,1)$ all combinations or output correspond to a unique assignment
\begin{align}
\bm t &= (0,1,1) \implies (\tilde a,\tilde b,\tilde c) = (1,2,0) \quad \text{and cyclic permutations}\\
\bm t &= (1,0,0) \implies (\tilde a,\tilde b,\tilde c) = (1,0,2) \quad \text{and cyclic permutations}.
\end{align}
Rigidity also fixes the distribution of the token variables $p_\xi = \text{Pr}(t_\xi=1)$ in Eq.~\eqref{eq: papbpc}, in particular from $(1-p_\alpha)p_\beta p_\gamma= P^\triangle(1,2,0)$ and $(1-p_\alpha)p_\beta (1-p_\gamma) = P^\triangle(2,1,0)$ one easily verifies  that
$p_\gamma = \frac{P(1,2,0)}{P(1,2,0)+P(2,1,0)}.$\\

To shorten notation let us introduce a label for $\bm t =(1,1,1)=\cw$ and $(0,0,0)=\acw$, corresponding to all the tokens sent clockwise or anticlockwise in Fig.~\ref{fig:tri}. The probabilities of these events are given by
\begin{align}
\text{Pr}(\cw)=\text{Pr}(\bm t =\cw)= p_\alpha p_\beta p_\gamma   
\qquad
\text{Pr}(\acw)=\text{Pr}(\bm t = \acw)=  (1-p_\alpha)(1-p_\beta)(1-p_\gamma).
\end{align}

Note that the variables of the full outputs $a,b,c\in\{0,1_\ell,2\}$ are not in general determined by $\bm t$.  More precisely,  this is not the case of the index respective indexes $i,j,k$ in $ a= 1_i$, $ c= 1_j$, $ b= 1_k$, when $\tilde a,\tilde b,$ or $\tilde c$ equals 1, the. Nevertheless, there exists a (hidden) probability distribution
$P(a,b,c,\bm t)$
describing the joint distribution of the full outcomes and the token variables. It allows us to define the variables
\be
\begin{split}
q(i,j,k|\cw)&= \text{Pr}(a=1_i,b=1_j,c=1_k| \bm t=\cw)=
\frac{\text{Pr}(a=1_i,b=1_j,c=1_k, \bm t=\cw)}{\text{Pr}(\bm t=\cw)} \\
q(i,j,k|\acw)&= \text{Pr}(a=1_i,b=1_j,c=1_k|\bm t=\acw)= \frac{\text{Pr}(a=1_i,b=1_j,c=1_k, \bm t=\acw)}{\text{Pr}(\bm t=\acw)}
\end{split}
\ee
We can readily see that they by definition satisfy the constraints \eqref{th3: c1}, \eqref{th3: c2} and \eqref{eq: papbpc}. Now let us show show that Eqs.~(\ref{th3: c3},\ref{th3: c4}) must also hold.
To do so we focus on 
\be
q(i|\cw) = \text{Pr}(a=1_i|\cw) = \text{Pr}(a=1_i|\bm t =(1,1,1)),
\ee
the network structure guarantees that the output $a$ is independent of the value of local variable $\alpha$ (and $t_\alpha$). Hence we find that
\be
\text{Pr}(a=1_i|\bm t =(1,1,1))= 
\text{Pr}(a=1_i|\bm t =(0,1,1)) = \text{Pr}(1_i,2,0|\bm t =(0,1,1)).
\ee
In addition, the token assignment $\bm t =(0,1,1)$ corresponds to a unique coarse-grained outputs $(\tilde a,\tilde b,\tilde c)=(1,2,0)$, therefore $\text{Pr}(\bm t =(0,1,1)) = P^\triangle(1,2,0)$ and 
\be 
\text{Pr}( 1_i,2,0|\bm t =(0,1,1)) = \frac{\text{Pr}(1_i,2,0,\bm t =(0,1,1))}{\text{Pr}(\bm t =(0,1,1))}= 
\frac{P^\triangle(1_i,2,0)}{P^\triangle(1,2,0)}.
\ee
Combining everything together gives 
\be
q(i|\cw)= \frac{P^\triangle(1_i,2,0)}{P^\triangle(1,2,0)}
\ee
The other constraint (\ref{th3: c3},\ref{th3: c4}) can be proven in the same way.
This shows that under the assumptions of the theorem there must exist the variables $q(i,j,k|\cw)$ and $q(i,j,k|\acw)$ satisfying all the constraints of the LP. 
\end{proof}

\begin{theorem}\label{the: LP nonloc tri}
Consider a distribution $P^\triangle( a, b,  c)$, such that the coarse-grained distribution $P^\triangle(\tilde a,\tilde b,\tilde c)$ is token counting. $P^\triangle(a,b, c)$ is nonlocal if the LP in equation \eqref{eq: feasibility locality} is infeasible.
\end{theorem}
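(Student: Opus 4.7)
The plan is to derive Theorem~\ref{the: LP nonloc tri} as an immediate corollary of Theorem~\ref{th: semi-local}, by showing that every classical local model on the triangle is a special case of a semi-local model in the sense of the definition preceding that theorem. Once this embedding is in place, the contrapositive of Theorem~\ref{th: semi-local} gives the result: infeasibility of the LP~\eqref{eq: feasibility locality} rules out any semi-local model, and hence in particular any local one.

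First I would take an arbitrary local decomposition $P^\triangle(a,b,c) = \mathbb{E}[p(a|\beta,\gamma)\,p(b|\alpha,\gamma)\,p(c|\alpha,\beta)]$ and, without loss of generality, push all residual randomness into the shared variables so that the response functions $a(\beta,\gamma)$, $b(\alpha,\gamma)$, $c(\alpha,\beta)$ become deterministic. Each source $S_\xi$ is then reinterpreted as preparing a quantum--classical state of the form~\eqref{eq: semilocal state} whose classical register carries the local variable $\xi$ with the original distribution, and whose quantum registers $Q_1,\dots,Q_m$ are trivial (one-dimensional, or equivalently fixed in a product state). Each party is assigned the PVM $\Pi^{x|\xi,\xi'}_X = \delta_{x,x(\xi,\xi')}\,\id_{\bm Q}$ that simply reads off the incoming classical labels.

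Next I would verify that this embedding satisfies the semi-locality requirement with respect to the coarse-grained outputs. Since $x$ is a deterministic function of $(\xi,\xi')$ the coarse-grained outcome $\tilde x$ is as well, and the coarse-grained projectors $\Pi^{\tilde x|\xi,\xi'}_X = \delta_{\tilde x, \tilde x(\xi,\xi')}\,\id_{\bm Q}$ are of the prescribed form, so the values of $\tilde a,\tilde b,\tilde c$ are fully determined by $\alpha,\beta,\gamma$. Combined with the standing assumption that $P^\triangle(\tilde a,\tilde b,\tilde c)$ is token counting, Theorem~\ref{th: semi-local} now applies and produces variables $q(i,j,k|\cw)$ and $q(i,j,k|\acw)$ meeting all the constraints of~\eqref{eq: feasibility locality}. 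Taking contrapositives, if the LP is infeasible then no local model can reproduce $P^\triangle(a,b,c)$, which is exactly the nonlocality claim.

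I expect the proof to be essentially a one-line reduction, so the only point worth pausing on is the careful reading of the semi-local definition: the identity $\Pi^{x|\xi,\xi'}_X = x(\xi,\xi')\,\id_{\bm Q}$ is to be understood as pinning down the coarse-grained outcome by the source variables while leaving the fine-grained labeling free to depend on the (possibly non-trivial) quantum register. Under this reading, semi-local models genuinely generalize local ones rather than restrict them, the embedding above is unambiguous, and no further technical obstacle arises beyond what has already been handled inside Theorem~\ref{th: semi-local}.
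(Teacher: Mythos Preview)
Your proposal is correct and follows the same route as the paper's own proof, which is the one-line contrapositive: locality implies semi-locality with respect to $\tilde a,\tilde b,\tilde c$, whence Theorem~\ref{th: semi-local} forces the LP to be feasible. You have simply made the embedding of a local model into a semi-local one explicit (trivial quantum registers, deterministic response functions), and your final paragraph correctly parses the slightly ambiguous clause $\Pi^{x|\xi,\xi'}_X = x(\xi,\xi')\,\id_{\bm Q}$ as a constraint on the coarse-grained outcome only.
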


\begin{proof} Let the LP be infeasible and assume that $P^\triangle(a,b, c)$ is local. Then it is also semi-local for the outputs $\tilde a,\tilde b,\tilde c$. Hence, Theorem 3 implies that the LP~ \eqref{eq: feasibility locality} is feasible, proving the corollary by contradiction.
\end{proof}

\subsection{Global coherence}

Here our goal is not only to prove the nonlocality of $P^\triangle_Q(a,b,c)$, but also to obtain some quantitative statements on the underlying \textit{quantum} model.  We thus consider a general quantum model leading to the $P_Q(a,b, c)$. As discussed already it involves some projective quantum measurement $\{\Pi_X^{ x}\}$ for each party. The coarse-graining of the outputs is now described as a summation of the corresponding POVM elements, giving rise to the coarse-grained PVMs with three elements $\{\Pi_X^0,\Pi_X^1,\Pi_X^2\}$.

Now, following \cite{Sekatski2023} we use the "quantum" rigidity (Theorem \ref{th:quant rigid}) to derive quantitative bounds on all quantum models that could lead to the observed distributions. In particular, it implies that the coarse-grained measurements of each party are of the form
\begin{align} \label{eq: Pi0}
\Pi^{0}_X &= {\Pi}^{\bar 0}_X= \ketbra{00}_{\textbf{X}_\xi \textbf{X}_{\xi'}}\otimes \mathds{1}_{X_\xi X_\xi'}\\ \label{eq: Pi1}
 \Pi_X^1 &= \sum_{i} {\Pi}^{ 1_i}_X=(\ketbra{01}+\ketbra{10})_{\textbf{X}_\xi \textbf{X}_{\xi'}}\otimes \mathds{1}_{X_\xi X_\xi'}
\\
\label{eq: Pi2}
\Pi_X^2 &= \Pi^{\bar 2}_X = \ketbra{11}_{\textbf{X}_\xi\textbf{X}_{\xi'}}\otimes \mathds{1}_{X_\xi X_\xi'}.
\end{align}
In turn, it also guarantees that the global quantum state  prepared by the three source can be purified to 
\be\label{eq: psi xi}
\ket{\Psi} = \ket{\psi_\alpha}_{B_\alpha C_\alpha E_\alpha} \ket{\psi_\beta}_{C_\beta A_\beta E_\beta} \ket{\psi_\gamma}_{A_\gamma B_\gamma E_\gamma}, \quad \text{with} \quad 
   \ket{\psi_\xi}_{X_\xi Y_\xi E_\xi}= \sqrt{p_\xi} \ket{01}_{\textbf{X}_\xi \textbf{Y}_\xi}\!\ket{j^{01}_\xi}_{J_\xi}\!\!+ \sqrt{1-p_\xi} 
\ket{10}_{\textbf{X}_\xi \textbf{Y}_\xi}\ket{j^{10}_\xi}_{J_\xi}
\ee
the individual states given in Eq.~\eqref{eq: states rigid} in full detail, with the distribution given by  $P_Q(\bar a, \bar b, \bar c) = \bra{\Psi} \Pi_A^{\bar a} \Pi_B^{\bar b}\Pi_C^{\bar c}\ket{\Psi}$.
To shorten the notation we define the global states
\begin{align}\label{app: p1}
\ket{\cw}  &=\ket{01,01,01}_{\textbf{B}_\alpha \textbf{C}_\alpha \textbf{C}_\beta \textbf{A}_\beta \textbf{A}_\gamma \textbf{B}_\gamma}\ket{j_\alpha^{01},j_\beta^{01},j_\gamma^{01}}_{J_\alpha J_\beta J_\gamma} \\
\ket{\acw}   &=\ket{10,10,10}_{\textbf{B}_\alpha \textbf{C}_\alpha \textbf{C}_\beta \textbf{A}_\beta \textbf{A}_\gamma \textbf{B}_\gamma}\ket{j_\alpha^{10},j_\beta^{10},j_\gamma^{10}}_{J_\alpha J_\beta J_\gamma}.
\end{align}

One easily sees from the Eq.~\eqref{eq: Pi1} that the global projector $ \Pi_A^{1}\Pi_B^{1}\Pi_C^{1}$ corresponding to the outcome $(\tilde a,\tilde b,\tilde c)=(1,1,1)$ is only nonzero on the part of the state $\ket{\Psi}$ in the subspace spanned by $\ket{\cw}$ and $\ket{\acw}$, that is  
\begin{align} \label{eq: pr 111}
P_Q^\triangle(1,1,1) &= \bra{\Psi} \Pi_A^{1}\Pi_B^{1}\Pi_C^{1} \ket{\Psi} =  \bra{\Psi}\left( \ketbra{\cw}+ \ketbra{\acw}\right) \Pi_A^{1}\Pi_B^{1}\Pi_C^{1} \left( \ketbra{\cw}+ \ketbra{\acw}\right)\ket{\Psi} \\
& = \left(\sqrt{\text{Pr}(\cw)} \bra{\cw} + \sqrt{\text{Pr}(\acw)} \bra{\acw} \right) \Pi_A^{1}\Pi_B^{1}\Pi_C^{1} \left(\sqrt{\text{Pr}(\cw)} \ket{\cw} + \sqrt{\text{Pr}(\acw)} \ket{\acw}\right)
\\
& = \text{Pr}(\cw) \bra{\cw}  \Pi_A^{1}\Pi_B^{1}\Pi_C^{1} \ket{\cw} + \text{Pr}(\acw) \bra{\acw}  \Pi_A^{1}\Pi_B^{1}\Pi_C^{1} \ket{\acw}, \qquad  \text{with} \label{eq: P111}\\ 
\text{Pr}(\cw) &=  p_\alpha p_\beta p_\gamma,\\
\text{Pr}(\acw)&= (1-p_\alpha)(1-p_\beta)(1-p_\gamma),
\end{align}
where we used $\Pi_A^{1}\Pi_B^{1}\Pi_C^{1} \ket{\cw} = \ket{\cw}$ which is orthogonal to $\ket{\acw}$. Now, for all outcomes $( a,  b,  c)=( 1_i, 1_j, 1_k)$ we have $\Pi_A^{ 1_i}\Pi_B^{ 1_j}\Pi_C^{ 1_k}\leq \Pi_A^{1}\Pi_B^{1}\Pi_C^{1}$ and thus 
\begin{align}
P_Q^\triangle( 1_i, 1_j, 1_k) &=  \text{Pr}(\acw) \, \bra{\acw} \Pi_A^{1_i}\Pi_B^{ 1_i}\Pi_C^{ 1_k} \ket{\acw} + \text{Pr}(\cw) \bra{\cw} \Pi_A^{1_i}\Pi_B^{1_i}\Pi_C^{ 1_k} \ket{\cw} + 2 \sqrt{\text{Pr}(\acw)\text{Pr}(\cw)} \,  r_{ijk} \\
\text{where} \quad &r_{ijk} := 
\, \text{Re} \bra{\acw} \Pi_A^{ 1_i}\Pi_B^{ 1_i}\Pi_C^{ 1_k}\ket{\cw} .
\label{app: pf}
\end{align}
By the Cauchy-Schwarz inequality the coherence term $r_{ijk}$ must also satisfy
\be\label{eq: new bound r}
|r_{ijk}|=|\text{Re} \bra{\acw} \Pi_A^{ 1_i}\Pi_B^{ 1_i}\Pi_C^{ 1_k}\ket{\cw}| \leq | \bra{\acw} \Pi_A^{ 1_i}\Pi_B^{ 1_i}\Pi_C^{ 1_k}\ket{\cw}| \leq \sqrt{\bra{\acw} \Pi_A^{1_i}\Pi_B^{ 1_i}\Pi_C^{ 1_k} \ket{\acw} \bra{\cw} \Pi_A^{1_i}\Pi_B^{ 1_i}\Pi_C^{ 1_k} \ket{\cw}}.
\ee

In addition, from $P_Q^\triangle(1,1_j,1_k)=\sum_{i} P^\triangle_Q( 1_i,1_j, 1_k)$ we obtain
\begin{align}
P_Q(1,1_j,1_k) &= \left(\sqrt{\text{Pr}(\cw)} \bra{\cw} + \sqrt{\text{Pr}(\acw)} \bra{\acw} \right) \Pi_A^{1}\Pi_B^{1_j}\Pi_C^{1_k} \left(\sqrt{\text{Pr}(\cw)} \ket{\cw} + \sqrt{\text{Pr}(\acw)} \ket{\acw}\right)\\
&= \text{Pr}(\cw) \bra{\cw } \Pi_A^{1}\Pi_B^{1_j}\Pi_C^{1_k}  \ket{\cw} + \text{Pr}(\acw) \bra{\acw}  \Pi_A^{1}\Pi_B^{1_j}\Pi_C^{1_k} \ket{\acw}
\end{align}
by virtue of $ \bra{\acw }\Pi_A^{1}\ket{\cw} = \bra{01}_{A_\beta A_\gamma}\Pi_A^{1}\ket{10}_{A_\beta A_\gamma} =0$, and
\be
\sum_{i} P_Q^\triangle( 1_i, 1_j, 1_k) = \text{Pr}(\cw) \bra{\cw } \Pi_A^{1}\Pi_B^{1_j}\Pi_C^{1_k}  \ket{\cw} + \text{Pr}(\acw) \bra{\acw}  \Pi_A^{1}\Pi_B^{1_j}\Pi_C^{1_k} \ket{\acw} +  2 \sqrt{\text{Pr}(\cw)\text{Pr}(\acw)} \sum_i r_{ijk},
\ee
which imply the following constraints on the coherences
\be
\sum_i r_{ijk} = \sum_j r_{ijk} =\sum_k r_{ijk} =0.
\ee

Let us now also express the remaining probabilities using the rigidity of the quantum model. Let us focus on the term $ P_Q^\triangle(1_i,0,2)=\| \Pi^{1_i}_A \Pi^{0}_B\Pi^{2}_C \ket{\Psi}\|^2$ given by 
\be \label{eq: pr 102}
\begin{split}
    P_Q^\triangle(1_i,0,2) &=\| \Pi_A^{1_i} (\ketbra{00}_{\textbf{B}_\gamma \textbf{B}_\alpha}\otimes \mathds{1}_{J_B})(\ketbra{11}_{\textbf{C}_\alpha \textbf{C}_\beta}\otimes \mathds{1}_{J_C})\ket{\Psi}\|^2 \\
    &= p_\alpha (1-p_{\beta}) (1-p_{\gamma}) \ \| \Pi^{1_i}_A \ket{01,10,10}_{\textbf{B}_\alpha \textbf{C}_\alpha \textbf{C}_\beta \textbf{A}_\beta  \textbf{A}_\gamma  \textbf{B}_\gamma}\ket{j_\alpha^{01},j_\beta^{10},j_\gamma^{10}}_{J_\alpha J_\beta J_\gamma}\|^2 \\
    & = p_\alpha (1-p_{\beta}) (1-p_{\gamma}) \| \Pi^{1_i}_A \ket{01}_{\textbf{A}_\beta  \textbf{A}_\gamma}\ket{j_\beta^{10},j_\gamma^{10}}_{J_\beta J_\gamma}\|^2.
\end{split}
\ee
As before, this term is only consistent with a unique assignment of the token's directions and hence does not involve any coherence $r_{ijk}$.  Similar expressions for the outputs $(1_i,0,2)$ and permutations thereof can be computed in the same manner.

Finally, one notices that the coherence terms $r_{ijk }$ are genuinely quantum, as they come from the interference of all the tokens going clockwise or anticlockwise on the ring. Moreover, as we now discuss, showing that at least one of $r_{ijk}$ is nonzero implies that the distribution is nonlocal. In addition, a lower bound on its value can be used to quantify the entanglement of the sources and randomness of the outputs as discussed in the following. First, let us show how the value $r_{ijk}$ can be bounded systematically, the following lemma gives the key idea.\\

\begin{theorem} \label{th: 4}Consider any quantum model on the triangle with the measurements $\Pi_X^{x}$ satisfying Eqs.~(\ref{eq: Pi0}-\ref{eq: Pi2}) and the states $\ket{\psi_\xi}$ of the form of Eq.~\eqref{eq: psi xi}, leading to a distribution $P_Q^\triangle( a,  b, c)$.  The distribution
\be
Q^\triangle( a, b,  c) := 
\begin{cases}
P_Q^\triangle({1}_i,  {1}_j,  {1}_k) -2 \sqrt{\text{Pr}(\cw)\text{Pr}(\acw)} r_{ijk} & \tilde a=\tilde b=\tilde c=1 \\
P_Q^\triangle(a, b,  c) & \text{otherwise}
\end{cases},
\ee
with $\text{Pr}(\cw)\text{Pr}(\acw)$ and $r_{ijk}$ defined in Eqs.~(\ref{app: p1}-\ref{app: pf}), is semi-local with respect to the outputs $\tilde a,\tilde b,\tilde c$. Therefore, $Q^\triangle( a, b,  c)$ fulfills the assumptions of Theorem~\ref{th: semi-local} and the induced LP~\eqref{eq: feasibility locality} is feasible.
\end{theorem}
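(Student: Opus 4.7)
My plan is to show that $Q^\triangle$ coincides with the Born distribution obtained by \emph{dephasing} the global state $\ket{\Psi}=\ket{\psi_\alpha}\ket{\psi_\beta}\ket{\psi_\gamma}$ of Eq.~\eqref{eq: psi xi} in the natural basis of token configurations $\bm t=(t_\alpha,t_\beta,t_\gamma)\in\{0,1\}^3$, and that the resulting state admits the product structure demanded by semi-locality. Expanding each factor $\ket{\psi_\xi}$ turns the tripartite state into $\ket{\Psi}=\sum_{\bm t}\sqrt{\text{Pr}(\bm t)}\,\ket{\bm t}$, with $\text{Pr}(\bm t)=\prod_\xi p_\xi^{t_\xi}(1-p_\xi)^{1-t_\xi}$; the eight branches $\ket{\bm t}$ are mutually orthogonal because their qubit labels ($\ket{01}$ vs.~$\ket{10}$ at every source) differ.

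The next step is to identify which off-diagonal terms $\langle\bm t|\Pi_A^a\Pi_B^b\Pi_C^c|\bm t'\rangle$ actually contribute to $P_Q^\triangle(a,b,c)$. The rigid measurements in Eqs.~\eqref{eq: Pi0}--\eqref{eq: Pi2} are block-diagonal with respect to the three qubit sectors $\ket{00}$, $\mathrm{span}\{\ket{01},\ket{10}\}$, $\ket{11}$, so a cross term with $\bm t\neq\bm t'$ survives only if both configurations produce the same total number of tokens at every party, i.e.\ induce the same $(\tilde a,\tilde b,\tilde c)$. Listing the eight token configurations one checks that the only distinct pair satisfying this is $\{\cw,\acw\}$, both sitting in the $(\tilde a,\tilde b,\tilde c)=(1,1,1)$ block, and the corresponding cross terms are exactly the coherences $2\sqrt{\text{Pr}(\cw)\text{Pr}(\acw)}\,r_{ijk}$ of Eq.~\eqref{app: pf}. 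Removing them therefore yields the Born rule applied to the dephased state
\begin{equation}
\rho_{\mathrm{dec}}=\sum_{\bm t}\text{Pr}(\bm t)\,\ket{\bm t}\bra{\bm t},\qquad Q^\triangle(a,b,c)=\tr\!\left[\Pi_A^a\Pi_B^b\Pi_C^c\,\rho_{\mathrm{dec}}\right],
\end{equation}
which makes non-negativity and normalization of $Q^\triangle$ automatic (normalization is also consistent with the constraints $\sum_i r_{ijk}=\sum_j r_{ijk}=\sum_k r_{ijk}=0$).

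Semi-locality of $\rho_{\mathrm{dec}}$ is manifest because it factorizes across sources as $\rho_{\mathrm{dec}}=\bigotimes_\xi\rho_\xi^{\mathrm{dec}}$, with
\begin{equation}
\rho_\xi^{\mathrm{dec}}=p_\xi\ket{01}\bra{01}_{\bm X_\xi\bm Y_\xi}\otimes\sigma_\xi^{01}+(1-p_\xi)\ket{10}\bra{10}_{\bm X_\xi\bm Y_\xi}\otimes\sigma_\xi^{10},
\end{equation}
where $\sigma_\xi^{01},\sigma_\xi^{10}$ are the reductions of $\ket{j_\xi^{01}},\ket{j_\xi^{10}}$ to the junk systems sent to the parties. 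Appending a classical flag $t_\xi\in\{0,1\}$ and broadcasting it to the two parties connected to $S_\xi$ places $\rho_\xi^{\mathrm{dec}}$ in the quantum-classical form of Eq.~\eqref{eq: semilocal state}. Since the rigid projectors $\Pi_X^{\tilde x}$ read off only the qubit sector at each source's contribution, the triple $(t_\alpha,t_\beta,t_\gamma)$ deterministically fixes $(\tilde a,\tilde b,\tilde c)$ through the token-counting rule; the fine-grained projectors $\Pi_X^{1_i}$ still operate as legitimate quantum measurements on the junk systems but do not affect the coarse-grained outcome, so the model is semi-local with respect to $\tilde a,\tilde b,\tilde c$ in the sense of the definition preceding Theorem~\ref{th: semi-local}.

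Finally, since the removed coherences sum to zero, $Q^\triangle(\tilde a,\tilde b,\tilde c)=P_Q^\triangle(\tilde a,\tilde b,\tilde c)$ is token counting by hypothesis, so all assumptions of Theorem~\ref{th: semi-local} are met by $Q^\triangle$ and the LP in Eq.~\eqref{eq: feasibility locality} is feasible. I expect the main technical step to be the clean identification of which cross terms survive in step two, which leans on the block structure of the rigid measurements and the explicit correspondence between token configurations and coarse-grained outputs; once that is pinned down, the semi-local decomposition is essentially a relabeling of $\rho_{\mathrm{dec}}$ and the conclusion follows by direct appeal to Theorem~\ref{th: semi-local}.
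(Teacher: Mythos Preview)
Your proof is correct and follows essentially the same route as the paper's own argument: both replace the coherent source states by their token-decohered versions, verify that this kills precisely the $r_{ijk}$ cross terms while leaving all other probabilities unchanged, and then observe that the resulting product of quantum-classical states is manifestly semi-local with respect to $\tilde a,\tilde b,\tilde c$. The only cosmetic difference is that you dephase the global pure state and then factorize, whereas the paper decoheres each source state $\ket{\psi_\xi}\mapsto\tilde\rho^{(\xi)}$ individually from the start; the two descriptions are equivalent.
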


\begin{proof} To show this consider a quantum model where the measurements are kept the same, while the states are replaced by the states 
\be\label{eq: state tilde rho}
 \tilde \rho^{(\xi)}_{X_\xi Y\xi E_\xi} = p_\xi \ketbra{01}_{\textbf{X}_\xi \textbf{Y}_\xi} \otimes\ketbra{j_\xi^{01}}_{ J_\xi} + (1-p_\xi) \ketbra{10}_{\textbf{X}_\xi \textbf{Y}_\xi}\otimes \ketbra{j_\xi^{10}}_{ J_\xi},
\ee
where the token directions have been "decohered". First, let us show that this model leads to the distribution $Q^\triangle (\bar a,\bar b, \bar c)$. As we have argued the probabilities of the outputs with $\tilde a,\tilde b$ or $\tilde c \neq 1$ in Eq.~\eqref{eq: pr 102} are independent of $r_{ijk}$. Hence the probabilities of these outputs are left unchanged when decohering the token directions
\be
Q^\triangle (1_i,0, 2) = 
P_Q^\triangle (1_i,0, 2)  \qquad \text{and permutations}.
\ee
In turn, by setting $r_{ijk}=0$ in Eq.~\eqref{eq: pr 111} we obtain
\be
P_Q^\triangle(1_i, 1_j, 1_k) -2 \sqrt{\text{Pr}(\cw)\text{Pr}(\acw)} r_{ijk} =  \text{Pr}(\acw) \, \bra{\acw} \Pi_A^{ 1_i}\Pi_B^{ 1_i}\Pi_C^{ 1_k} \ket{\acw} + \text{Pr}(\cw) \bra{\cw} \Pi_A^{ 1_i}\Pi_B^{ 1_i}\Pi_C^{1_k} \ket{\cw} = Q^\triangle(1_i,1_j,1_k),
\ee
showing that $Q^\triangle(a,b,c)$ is indeed obtained from the original quantum model by decohering the token directions.

Next, let us show that this model is semi-local for the outputs $\tilde a,\tilde b,\tilde c$. In fact, there is nothing to show here, as in the states $\tilde \rho^{(\xi)}_{X_\xi Y\xi E_\xi}$ in Eq.\eqref{eq: state tilde rho} the system $\textbf{X}_\xi \textbf{Y}_\xi$ are explicitly classical, and the outputs $\tilde a,\tilde b,\tilde c$ only depend on these systems as can be seen from Eqs.~(\ref{eq: Pi0}-\ref{eq: Pi2}).
\end{proof}

This theorem can be readily used to lower bound the coherences $r_{ijk}$. For instance, let us consider the quantity
\be
 \sum_{ijk} (-1)^{s_{ijk}} r_{ijk} 
\ee
for some fixed assignment of signs $s_{ijk}\in\{0,1\}$. It can be bounded with the following linear program.

\begin{theorem} \label{th: Quantum LP}
    Let $P^\triangle_Q(a,b,c)$ be a quantum distribution on the triangle, such that the coarse-grained distribution $P^\triangle(\tilde a,\tilde b,\tilde c)$ is token counting. The underlying model is thus of the form given in Eqs.~(\ref{eq: Pi0}-\ref{app: pf}), and such that the coherences $r_{ijk}$ satisfy $\sum_{ijk} (-1)^{s_{ijk}} r_{ijk} \geq R$ for some fixed signs $s_{ijk}\in\{0,1\}$ where
\end{theorem}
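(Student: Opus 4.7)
\medskip

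\noindent\emph{Proof plan.} The strategy is to convert the semi-locality result of Theorem~\ref{th: 4} into a linear program (LP) on the coherences. By Theorem~\ref{th: 4}, the ``decohered'' distribution $Q^\triangle( a, b, c)$ is semi-local with respect to the coarse-grained outputs $\tilde a,\tilde b,\tilde c$, and its coarse-graining is still token counting (the coarse-grained statistics are unchanged). Hence Theorem~\ref{th: semi-local} applies to $Q^\triangle$, so there exist nonnegative variables $q(i,j,k|\cw),\,q(i,j,k|\acw)$ satisfying the feasibility constraints~(\ref{th3: c1}--\ref{eq: papbpc}) with $P^\triangle$ replaced by $Q^\triangle$. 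The key observation is that $Q^\triangle$ differs from $P_Q^\triangle$ only on the block $\tilde a=\tilde b=\tilde c=1$, and there it depends \emph{linearly} on the $r_{ijk}$.

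The LP is then obtained by substituting $Q^\triangle(1_i,1_j,1_k)=P_Q^\triangle(1_i,1_j,1_k)-2\sqrt{\mathrm{Pr}(\cw)\mathrm{Pr}(\acw)}\,r_{ijk}$ into the semi-locality constraints, and treating $\{q(i,j,k|\cw),q(i,j,k|\acw),r_{ijk}\}$ as joint LP variables. Concretely, I would write:
\begin{align*}
R=\min_{q,r}\ &\sum_{ijk}(-1)^{s_{ijk}} r_{ijk}\\
\text{s.t.}\ & q(i,j,k|\cw),\,q(i,j,k|\acw)\geq 0,\quad \textstyle\sum_{ijk} q(\cdot|\cw)=\sum_{ijk}q(\cdot|\acw)=1,\\
&\textstyle\sum_j q(i,j,k|\cw)-\sum_j q(i,j,k|\acw)\ \text{-type marginals fixed by}\ P_Q^\triangle(1_i,2,0),\ P_Q^\triangle(1_i,0,2)\ \text{and cyclic,}\\
&q(i,j,k|\cw)\mathrm{Pr}(\cw)+q(i,j,k|\acw)\mathrm{Pr}(\acw)+2\sqrt{\mathrm{Pr}(\cw)\mathrm{Pr}(\acw)}\,r_{ijk}=P_Q^\triangle(1_i,1_j,1_k),\\
&\textstyle\sum_i r_{ijk}=\sum_j r_{ijk}=\sum_k r_{ijk}=0,
\end{align*}
with $p_\alpha,p_\beta,p_\gamma$ (and hence $\mathrm{Pr}(\cw),\mathrm{Pr}(\acw)$) fixed by $P_Q^\triangle$ as in \eqref{eq: papbpc}. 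The sum-to-zero constraints on $r_{ijk}$ are precisely the ones derived immediately before Theorem~\ref{th: 4} from matching $P_Q^\triangle(1,1_j,1_k)=\sum_i P_Q^\triangle(1_i,1_j,1_k)$ and cyclic permutations.

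Given any genuine quantum model underlying $P_Q^\triangle$, the induced coherences $r_{ijk}$ together with the quantum-defined $q(i,j,k|\cdot)=\bra{\cdot}\Pi_A^{1_i}\Pi_B^{1_j}\Pi_C^{1_k}\ket{\cdot}$ form a feasible point of this LP (by Theorem~\ref{th: 4} applied through Theorem~\ref{th: semi-local}). Consequently the quantity $\sum_{ijk}(-1)^{s_{ijk}}r_{ijk}$ is lower bounded by the LP optimum $R$, giving the claim. The main subtlety, and the step I would expect to require the most care, is that the genuine nonlinear Cauchy--Schwarz constraint $|r_{ijk}|\leq\sqrt{q(i,j,k|\cw)\,q(i,j,k|\acw)}$ from \eqref{eq: new bound r} is dropped in order to keep the program linear; this is a valid \emph{relaxation} and so preserves the direction of the bound, but it means one must be careful to state the result as a one-sided inequality $R\leq\sum_{ijk}(-1)^{s_{ijk}}r_{ijk}$ rather than as an equality, and to check that the marginals involving the outputs $(1_i,0,2)$ etc.\ are correctly expressed using \eqref{eq: pr 102} (which is $r$-independent) so that no coherence variables sneak into the other constraints of Theorem~\ref{th: semi-local}.
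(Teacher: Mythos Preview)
Your proposal is correct and follows essentially the same argument as the paper: invoke quantum rigidity to get the structure (\ref{eq: Pi0}--\ref{app: pf}) and the sum-to-zero constraints on $r_{ijk}$, apply Theorem~\ref{th: 4} so that $Q^\triangle$ satisfies the feasibility LP of Theorem~\ref{th: semi-local}, and then observe that the true $(q,r)$ of any quantum model is a feasible point of the joint program, whence the LP optimum lower-bounds $\sum_{ijk}(-1)^{s_{ijk}}r_{ijk}$. The only cosmetic difference is that the paper states the program as a minimization over the $r_{ijk}$ with ``$Q^\triangle$ feasible in~\eqref{eq: feasibility locality}'' as a constraint, whereas you unroll this into a single joint LP over $(q,r)$; these are equivalent, and your remarks about dropping the Cauchy--Schwarz constraint~\eqref{eq: new bound r} and about the $r$-independence of the $(1_i,0,2)$-type marginals are exactly the right sanity checks.
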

\begin{align}
\label{eq: optimization r}
R = \min_{r_{000},\dots} \quad &\sum_{ijk} (-1)^{s_{ijk}} r_{ijk}  \\
\textrm{such that} \quad 
& \sum_i r_{ijk}= \sum_j r_{ijk} = \sum_k r_{ijk} = 0
\\ \label{eq: tilde P def}
& Q^\triangle(a,b,c) =
\begin{cases} P_Q^\triangle(a,b,c) -2 \sqrt{p_\alpha p_\beta p_\gamma (1-p_\alpha)(1-p_\beta)(1-p_\gamma)}  r_{ijk} & (a, b, c) =( 1_i,1_j,1_k)
\\
P_Q^\triangle(a,  b,  c)  &  (a, b, c) \neq ( 1_i,1_j,1_k)
\end{cases}
\\ 
&p_\alpha = \frac{P^\triangle(0,1,2)}{P^\triangle(0,1,2)+P^\triangle(0,2,1)},\, p_\beta = \frac{P^\triangle(2,0,1)}{P^\triangle(2,0,1)+P^\triangle(1,0,2)},\, p_\gamma = \frac{P^\triangle(1,2,0)}{P^\triangle(1,2,0)+P^\triangle(2,1,0)}\\
& Q^\triangle (\bar a, \bar b, \bar c) \text{ is feasible in linear program of Eq.~\eqref{eq: feasibility locality}} 
\end{align}
\begin{proof}
As discussed in the beginning of the section, the fact that the underlying model is of the form~(\ref{eq: Pi0}-\ref{app: pf}) is guaranteed by the quantum token counting rigidity. It also implies $\sum_i r_{ijk}= \sum_j r_{ijk} = \sum_k r_{ijk} = 0$, albeit we do not know the values of $r_{ijk}$. In turn, this allows un to formally define the probability distributions $ P(\bar a, \bar b, \bar c)$ via the Eq.~\eqref{eq: tilde P def}. Finally, Theorem ~\ref{th: 4} 
guarantees that this must be a valid probability distribution which fulfills the LP in Eq.~\eqref{eq: feasibility locality}. It remains to search through all possible values of the variable $r_{000},r_{001}\dots$ to find the one minimizing the goal function $\sum_{ijk} (-1)^{s_{ijk}} r_{ijk}$. This is precisely what the optimization does.
\end{proof}

\begin{remark}
    A particularly simple situation is when all the indices $i,j,k$ are binary, guaranteeing $r_{0jk}=-r_{1jk}$, $r_{i0k}=-r_{i1k}$, $r_{ij0}=-r_{ij1}$ and $|r_{ijk}|=|r_{000}|$. In this case up to a sign there is a unique variable, say $r_{000}$, to minimize (one is free to maximize $r_{000}$ or $-r_{000}$ and choose the best result).  

\end{remark}


\subsection{Randomness and entanglement}
\label{app: randomness and entanglement}
In the previous section, we have argued how the global coherence 
\begin{align}\label{eq: R bound}
R &\leq \sum_{ijk} (-1)^{s_{ijk}} r_{ijk} = \text{Re} \bra{\acw} \sum_{ijk} (-1)^{s_{ijk}}  \Pi_A^{ 1_i}\Pi_B^{ 1_i}\Pi_C^{ 1_k}\ket{\cw} = \text{Re} \bra{\acw} W_{ABC} \ket{\cw}
\end{align}
where we introduced the Hermitian operator $W_{ABC}=\sum_{ijk} (-1)^{s_{ijk}}  \Pi_A^{\bar 1_i}\Pi_B^{\bar 1_i}\Pi_C^{\bar 1_k}$ whose eigenvalues are between -1 and 1,
between all tokens sent clockwise and anticlockwise can be bounded. We now discuss what this bound implies on the characteristics of the underlying quantum model. We are interested in the entanglement of states and measurements, and the randomness of outcomes, closely following the derivations in \cite{Sekatski2023}.  To start let us rewrite the bound ~\eqref{eq: R bound} in a form easier to use and interpret. We have
\begin{align} \label{eq: R bound 2}
R&\leq  \text{Re} \tr (\tr_{E} \ketbra{\cw}{ \acw} ) W_{ABC} \leq \|(\tr_{E} \ketbra{\cw}{ \acw} ) W_{ABC} \|_1 
\leq \|\tr_{E} \ketbra{\cw}{ \acw} \|_1 \| W_{ABC} \|_\infty = \|\tr_{E} \ketbra{\cw}{ \acw} \|_1\\
&= \left \|(\tr_{E_\alpha} \ketbra{01,j^{01}_\alpha}{10,j^{10}_\alpha}_{B_\alpha C_\alpha E_\alpha})\otimes
(\tr_{E_\beta} \ketbra{01,j^{01}_\beta}{10,j^{10}_\beta}_{C_\beta A_\beta E_\beta})\otimes
(\tr_{E_\gamma} \ketbra{01,j^{01}_\gamma}{10,j^{10}_\gamma}_{A_\gamma B_\gamma E_\gamma})  \right\|_1 \nonumber
\end{align}
for the 1-Schatten norm $\|A\|_1 = \tr \sqrt{A^\dag A}$.

First, for each of the states $\rho^\xi_{X_\xi Y_\xi } = \tr_{E_\xi} \ketbra{\psi_\xi}_{X_\xi Y_\xi E_\xi}$
received by the parties, Eq.~\eqref{eq: R bound 2} can be used to bound the coherence 
\begin{align}
R\leq \|\tr_{E_\xi} \ketbra{01,j^{01}_\xi}{10,j^{10}_\xi}_{X_\xi Y_\xi E_\xi})  \|_1\qquad \forall \,\xi
\label{eq: norm bound}
\end{align}
between the tokens going left or right. In turn, this can be used to bound the entanglement between the systems $X_\xi$ and $Y_\xi$.

Second, we can also use Eq.~\eqref{eq: R bound 2} to imply
\begin{align} 
R&\leq   \left \|(\tr_{E_\alpha} \ketbra{j^{01}_\alpha}{j^{10}_\alpha}_{B_\alpha C_\alpha E_\alpha})\otimes
(\tr_{E_\beta} \ketbra{j^{01}_\beta}{j^{10}_\beta}_{C_\beta A_\beta E_\beta})\otimes
(\tr_{E_\gamma} \ketbra{j^{01}_\gamma}{^{10}_\gamma}_{A_\gamma B_\gamma E_\gamma})  \right\|_1\nonumber \\
& = \max_{U,V,W} \big(\tr U_{B_\alpha C_\alpha} (\tr_{E_\alpha} \ketbra{j^{01}_\alpha}{j^{10}_\alpha}_{B_\alpha C_\alpha E_\alpha}) \big) 
\big(\tr V_{C_\beta A_\beta}(\tr_{E_\beta} \ketbra{j^{01}_\beta}{j^{10}_\beta}_{C_\beta A_\beta E_\beta}) \big)
\big(\tr W_{A_\gamma B_\gamma} (\tr_{E_\gamma} \ketbra{j^{01}_\gamma}{j^{10}_\gamma}_{A_\gamma B_\gamma E_\gamma})\big) \nonumber \\
& = \max_{U,V,W} \bra{j_\alpha^{10}} U_{B_\alpha C_\alpha}\otimes \id_{E_\alpha} \ket{j_\alpha^{01}}
\bra{j_\beta^{10}} V_{C_\beta A_\beta}\otimes \id_{E_\beta} \ket{j_\beta^{01}}
\bra{j_\gamma^{10}} W_{A_\gamma B_\gamma}\otimes \id_{E_\gamma} \ket{j_\gamma^{01}}\nonumber\\
& = F(\rho_{E_\alpha}^{01},\rho_{E_\alpha}^{10}) F(\rho_{E_\beta}^{01},\rho_{E_\beta}^{10}) F(\rho_{E_\gamma}^{01},\rho_{E_\gamma}^{10}) \label{eq: bound fid}
\end{align}
with the marginal state of the purifying systems $\rho_{E_\xi}^{01} = \tr_{X'_\xi Y'_\xi} \ketbra{j_\xi^{01}}_{X'_\xi Y'_\xi E_\xi}$ (and same for $\rho_{E_\xi}^{10}$). Here we used the Uhlmann's theorem stating that the fidelity between two states $F(\rho,\sigma)=  \|\sqrt{\rho} \sqrt{\sigma}\|_1$ is given by the maximal overlap between their purifications. This inequality can be used to bound the randomness of outcomes.

\subsubsection{All measurements are entangled}

 To show that all the measurements must be entangled, the proof of \cite{Sekatski2023} can be applied. The idea is that the operators $\Pi_{X}^{{1}_l}$ define a POVM on the subspace with one token, projected by $\Pi_X^{1}=(\ketbra{01}+\ketbra{10})_{\bm X_\xi \bm X_{\xi'}}\otimes \id$. Thus, the separable eigenstates of $\Pi_{X}^{\bar{1}_l}$ must thus project on $\ketbra{01}_{\bm X_\xi \bm X_{\xi'}}$ or $\ketbra{10}_{\bm X_\xi \bm X_{\xi'}}$ and cannot erase the information on the direction of the tokens, which is required for nonzero coherence. Therefore the quantum distribution is genuine network nonlocal \cite{Supic2022,Sekatski2023}.

\subsubsection{All states are entangled}

Next, we show that all states must be entangled. For concreteness, consider the state 
$\ket{\psi_\alpha}_{B_\alpha C_\alpha E_\alpha}= \sqrt{p_\alpha} \ket{01}_{\textbf{B}_\alpha \textbf{C}_\alpha}\!\ket{j^{01}_\alpha}_{B'_\alpha C'_\alpha E_\alpha}\!\!+ \sqrt{1-p_\alpha} 
\ket{10}_{\textbf{B}_\alpha \textbf{C}_\alpha}\ket{j^{10}_\alpha}_{B'_\alpha C'_\alpha E_\alpha}$, and trace out the "eavesdropper's" system $E_\alpha$ which remains at the source to define the state
\be
\rho^{\alpha}_{\bm B_\alpha B'_\alpha \bm C_\alpha C'_\alpha} =\text{tr}_{E_\alpha}\ketbra{\psi_\alpha}_{B_\alpha C_\alpha E_\alpha}.
\ee
 Following \cite{Sekatski2023} note that all decomposition of the state into pure states 
$\rho^\alpha = \sum_k p_k \ketbra{\psi_k}$ 
involve pure states of the form
\be
\ket{\psi_k}= c_k \ket{01}_{\textbf{B}_\alpha \textbf{C}_\alpha}\!\ket{\xi_k}_{B'_\alpha C'_\alpha} +s_k \ket{10}_{\textbf{B}_\alpha \textbf{C}_\alpha}\!\ket{\zeta_k}_{B'_\alpha C'_\alpha} 
\ee
with real positive $c_k^2+s_k^2=1$.  This decomposition has the average entropy of entanglement given by 
\be
S = \sum_k p_k h(c^2_k),
\ee
where $h$ is the binary entropy. Furthermore, from the form of the state $\rho^\alpha$,
we know that $\sum_k p_k c_k^2 = p_\alpha$
and $\ketbra{01} \rho^\alpha \ketbra{10} = \sqrt{p_\alpha(1-p_\alpha)} \tr_{E_\xi} \ketbra{01,j^{01}_\alpha}{10,j^{10}_\alpha}_{B_\alpha C_\alpha E_\alpha}$, which leads to 
\begin{align}
R \leq \|\tr_{E_\xi} \ketbra{01,j^{01}_\alpha}{10,j^{10}_\alpha}_{B_\alpha C_\alpha E_\alpha} \|_1 &= \left\|\frac{\ketbra{01}{10}_{\textbf{B}_\alpha \textbf{C}_\alpha}\otimes \sum_k p_k c_k s_k \ketbra{\xi_k}{\zeta_k} _{B'_\alpha C'_\alpha}}{\sqrt{p_\alpha(1-p_\alpha)}}\right \|_1 \\
& = \left\|\frac{\sum_k p_k c_k s_k \ketbra{\xi_k}{\zeta_k} _{B'_\alpha C'_\alpha}}{\sqrt{p_\alpha(1-p_\alpha)}}\right \|_1 \leq \frac{\sum_k p_k c_k s_k}{\sqrt{p_\alpha(1-p_\alpha)}}
\end{align}

The entanglement of formation of the state $\rho^\alpha$ can thus be bounded by the following optimization
\begin{align}
\cE_F(\rho^\alpha) \geq  \quad \min \quad &\sum_k p_k h(c^2_k) \\
\text{such that} \quad & \sum_k p_k c_k^2 = p_\alpha \label{eq: BBB}\\
&  \sum_k p_k c_k s_k \geq  R\sqrt{p_\alpha(1-p_\alpha)}
\\&\sum_k p_k =1 \\
&c_k^2+s_k^2=1
\end{align}

For completeness let us give a simple solution of this minimization problem (which  can also be found in appendix E of \cite{Sekatski2023}). Relaxing the constraint~\eqref{eq: BBB}, and defining a new variable $x_k = c_k s_k = c_k \sqrt{1-c_k^2} \in [0,1]$ we get 
\begin{align}
\cE_F(\rho^\alpha) \geq  \quad \min \quad &\sum_k p_k h \left(\frac{1 - \sqrt{1-4x_k^2}}{2} \right) \\
\text{such that} \quad &  \sum_k p_k x_k \geq  R\sqrt{p_\alpha(1-p_\alpha)}
\\&\sum_k p_k =1
\end{align}
One can verify that the function $h \left(\frac{1 - \sqrt{1-4x^2}}{2} \right)$ is convex and increasing for $x \in [0,1]$, therefore
\begin{align}
    \sum_k p_k h \left(\frac{1 - \sqrt{1-4x_k^2}}{2} \right) \geq h \left( \frac{1 - \sqrt{1-4 {(\sum_k p_k x_k)}^2}}{2} \right) \geq h\left( \frac{1-\sqrt{1-4 R^2\,  p_\alpha(1-p_\alpha)}}{2}\right)
\end{align}
implying
\be\label{eq: EF final}
\cE_F(\rho^\alpha) \geq h\left( \frac{1-\sqrt{1-4 R^2\,  p_\alpha(1-p_\alpha)}}{2}\right).
\ee

\subsubsection{Certified randomness}

Next, we bound the randomness of the outputs with respect to an eavesdropper that can have access to the purifying systems $E=E_\alpha, E_\beta, E_\gamma$. For concretes we focus on the output $a_i$, and only look at the randomness of the coarse-grained output $a$ which is determined by the token degrees of freedom received by the party $A$. Furthermore, we also coarse grain the outputs $a=0$ and $a=2$, so that at the end we are left with binary output $\tilde a$ which equals to the parity of $a$. The state of the corresponding classical register denoted $\bar A$ is then
\begin{align}
\ketbra{00}_{\bm{A}_\beta \bm{A}_\gamma} \quad &\text{or} \quad \ketbra{11}_{\bm{A}_\beta \bm{A}_\gamma} \mapsto  \ketbra{0}_{\bar A} \\
\ketbra{01}_{\bm{A}_\beta \bm{A}_\gamma} \quad &\text{or} \quad \ketbra{10}_{\bm{A}_\beta \bm{A}_\gamma} \mapsto  \ketbra{1}_{\bar A}.
\end{align}
And the classical quantum state of the register $\bar A$ and eavesdropper's systems read
\begin{align}
\rho_{\bar{A} E}  =\rho_{\bar{A} E_\gamma E_\beta}  = &\ketbra{0}_{\bar A}\otimes \left( (1-p_{\beta})p_{\gamma}\,  \rho_{E_\gamma}^{01}\otimes \rho_{E_\beta}^{10} + p_{\beta}(1-p_{\gamma}) \, \rho_{E_\gamma}^{10}\otimes \rho_{E_\beta}^{01} \right) \\
+ &\ketbra{1}_{\bar A}\otimes\left( p_{\beta} p_{\gamma} \rho_{E_\gamma}^{01}\otimes \rho_{E_\beta}^{01} + (1-p_{\beta})(1-p_{\gamma})\,  \rho_{E_\gamma}^{10}\otimes \rho_{E_\beta}^{10}\right)
\end{align}
whereas the system $E_\alpha$ is independent of party A. In this setting, the randomness of $\bar{A}$ against Eve, is lower bounded by the following Theorem.

\begin{theorem}
   Consider the quantum distribution $P^\triangle_Q(a,b,c)$ fulfilling the requirement of Theorem \ref{th: Quantum LP}, with the coherence satisfying $\sum_{ijk} (-1)^{s_{ijk}} r_{ijk} \geq R$, in this setting the following bounds holds on the randomness of the parity output $\bar a =\begin{cases} 0 & \tilde a=0,1\\1 & \tilde a=1\end{cases}$, for $p_\beta, p_\gamma = p$. Where $h(x)$ is the entropy of the binary distribution $(x, 1-x)$ and $H$ is the entropy of a four-output distribution.

    \begin{itemize}
        \item $H_\text{min}(\bar A|E)\geq -\log_2\left( \frac{1}{2} + \frac{1}{2}\left(|2p-1| +\sqrt{1-R}\right)^2\right)$

        \item For $p=1/2$, we can get a better bound:  $H(\bar A|E)\geq 1 + h\left(\frac{1+R}{2}\right) - H\left(
        \frac{(1+\sqrt{R})^2}{4},
        \frac{(1-\sqrt{R})^2}{4},
        \frac{1-R}{4},
        \frac{1-R}{4}\right)$
    \end{itemize}
    where $\bar A$ is the register containing $\bar a$.
\end{theorem}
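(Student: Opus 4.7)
My plan is to work directly with the classical-quantum state $\rho_{\bar A E}$ displayed in the statement, reducing the problem to a low-dimensional optimisation over Eve's purifying data. From Eq.~(43) applied at $p_\beta = p_\gamma = p$, one has $F_\alpha F_\beta F_\gamma \geq R$ with $F_\xi := F(\rho_{E_\xi}^{01}, \rho_{E_\xi}^{10})$. Because source $S_\alpha$ is disconnected from party $A$, the reduced state on $\bar A E_\beta E_\gamma$ does not depend on $E_\alpha$; the worst-case adversary therefore saturates $F_\alpha = 1$ and we may work under the single constraint $F_\beta F_\gamma \geq R$.

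For the min-entropy bound, I use the standard formula for binary classical-quantum states, $p_{\mathrm{guess}}(\bar A|E) = \tfrac{1}{2}(1 + \|\sigma_0 - \sigma_1\|_1)$, with $\sigma_a$ the unnormalised conditional state on $E$. Direct algebra on $\rho_{\bar A E}$ gives the factorisation $\sigma_0 - \sigma_1 = -\omega_\gamma \otimes \omega_\beta$, where $\omega_\xi := p\,\rho_{E_\xi}^{01} - (1-p)\,\rho_{E_\xi}^{10}$, whence $\|\sigma_0 - \sigma_1\|_1 = \|\omega_\beta\|_1 \|\omega_\gamma\|_1$. Splitting $\omega_\xi = \tfrac{1}{2}(\rho_{E_\xi}^{01} - \rho_{E_\xi}^{10}) + \tfrac{2p-1}{2}(\rho_{E_\xi}^{01} + \rho_{E_\xi}^{10})$ and using the triangle inequality together with Fuchs--van de Graaf, $\|\rho - \sigma\|_1 \leq 2\sqrt{1 - F(\rho,\sigma)^2}$, yields $\|\omega_\xi\|_1 \leq |2p-1| + \sqrt{1-F_\xi^2}$. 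A short convexity/Lagrangian argument on the constraint $F_\beta F_\gamma \geq R$ shows that $(|2p-1| + \sqrt{1-F_\beta^2})(|2p-1| + \sqrt{1-F_\gamma^2})$ is maximised at the symmetric point $F_\beta = F_\gamma = \sqrt{R}$, giving $(|2p-1| + \sqrt{1-R})^2$ and hence the first inequality.

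For the von Neumann bound at $p = 1/2$, I would decompose $H(\bar A|E) = H(\bar A) + \tfrac{1}{2}[H(\hat\sigma_0) + H(\hat\sigma_1)] - H(\rho_E)$ with $H(\bar A) = 1$ and $\hat\sigma_a := \sigma_a/\mathrm{tr}(\sigma_a)$. The key claim is that Eve's worst case is realised by pure qubit states $\ket{\psi_\xi^{01}}, \ket{\psi_\xi^{10}}$ with $|\langle\psi_\xi^{01}|\psi_\xi^{10}\rangle| = \sqrt{R}$, taken symmetrically in $\xi = \beta, \gamma$: pure states saturate Fuchs--van de Graaf, and any additional internal mixing in Eve's registers can only increase each $H(\hat\sigma_a)$ without tightening $F_\beta F_\gamma \geq R$. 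In that worst case, each $\hat\sigma_a$ is a uniform mixture of two pure states of pairwise overlap $\sqrt R \cdot \sqrt R = R$, so $H(\hat\sigma_a) = h(\tfrac{1+R}{2})$. The marginal factorises as $\rho_E = \rho_{E_\gamma} \otimes \rho_{E_\beta}$, with each single-qubit marginal having spectrum $\{\tfrac{1+\sqrt{R}}{2}, \tfrac{1-\sqrt{R}}{2}\}$, so $\rho_E$ has spectrum $\{\tfrac{(1+\sqrt{R})^2}{4}, \tfrac{1-R}{4}, \tfrac{1-R}{4}, \tfrac{(1-\sqrt{R})^2}{4}\}$, reproducing the stated formula.

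The main obstacle I expect is rigorously establishing that the symmetric pure-qubit adversary is globally optimal in the von Neumann case: unlike for the trace norm, neither $H(\hat\sigma_a)$ nor $H(\rho_E)$ is manifestly monotone under dilations of Eve's states. I would attempt either a purification-based argument (showing that replacing Eve's mixed states by their Stinespring dilations weakly decreases $H(\bar A|E)$ at fixed fidelity) or an explicit two-parameter optimisation over qubit strategies with fidelities tied by $F_\beta F_\gamma = R$, reduced to Lagrange multipliers in an angular parameterisation of the overlaps.
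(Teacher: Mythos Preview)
Your min-entropy argument is identical to the paper's: same factorisation of the trace-norm difference, same triangle-inequality/Fuchs--van de Graaf bound on each factor, same maximisation over $F_\beta F_\gamma \geq R$ resolved at the symmetric point.

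For the von Neumann bound your overall structure is also the paper's, and your proposed ``purification-based argument'' is precisely what the paper does---though the paper packages it more cleanly. Rather than tracking $H(\hat\sigma_a)$ and $H(\rho_E)$ separately (where your heuristic ``mixing increases $H(\hat\sigma_a)$ without tightening the constraint'' is not by itself conclusive, since $H(\rho_E)$ moves too), the paper writes $H(\bar A|E)=1-I(\bar A{:}E)$ and invokes data processing: handing Eve the Uhlmann-purifying ancillas $E'$ can only increase $I(\bar A{:}EE')\geq I(\bar A{:}E)$, and by Uhlmann one may choose those purifications with overlaps exactly $F_\beta,F_\gamma$. This reduces rigorously to the pure-state computation $H(\bar A|E)\geq 1-J(F_\beta,F_\gamma)$, which matches your explicit spectra. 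The final step---that the maximum of $J$ over $F_\beta F_\gamma\geq R$ sits at $F_\beta=F_\gamma=\sqrt{R}$---is handled in the paper only by plotting, so your planned Lagrangian check would in fact go slightly beyond what the paper provides.
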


\begin{proof}
Our goal is to lower bound $H_\text{min}(\bar A|E)$ for this state, in other words the probability $p_g(\bar A|E)= 2^{-H_\text{min}(\bar A|E)} $ that the eavesdropper guesses the value of $\bar A$. To this end let us rewrite the state as 
\begin{align}
\rho_{\bar{A} E} &= Q \ketbra{0}_{\bar A}\otimes \rho_E^{\bar 0}
+(1-Q) \ketbra{1}_{\bar A}\otimes \rho_E^{\bar 1} \qquad \text{with} \qquad
Q = (1-p_\beta) p_\gamma + p_\beta (1-p_\gamma), \\
\rho_E^{\bar 0} &= \frac{(1-p_{\beta})p_{\gamma}\,  \rho_{E_\gamma}^{01}\otimes \rho_{E_\beta}^{10} + p_{\beta}(1-p_{\gamma}) \, \rho_{E_\gamma}^{10}\otimes \rho_{E_\beta}^{01}}{Q}, \quad
\rho_E^{\bar 1} = \frac{   p_\beta p_\gamma \rho_{E_\gamma}^{01}\otimes \rho_{E_\beta}^{01} + (1-p_{\beta}) (1-p_{\gamma})\,  \rho_{E_\gamma}^{10}\otimes \rho_{E_\beta}^{10}}{1-Q} .
\end{align}
For this state the guessing probability is known to be
\begin{align}
p_g(\bar A |E) &= \frac{1}{2}\left(1+ \|Q\, \rho^{\bar0}_E-(1-Q)\rho^{\bar 1}_{E}\|_1\right) 
= \frac{1}{2}\left(1+  \|\big( p_{\gamma}\rho_{E_\gamma}^{01}-(1-p_\gamma )\rho_{E_\gamma}^{10}\big)\otimes \big( (1-p_\beta) \rho_{E_\beta}^{10} -p_{\beta}  \rho_{E_\beta}^{01}\big)\|_1\right)\\
&= \frac{1}{2}\left(1+  \| p_{\gamma}\rho_{E_\gamma}^{01}-(1-p_\gamma) \rho_{E_\gamma}^{10}\|_1 \cdot \| (1-p_\beta) \rho_{E_\beta}^{10} - p_{\beta}  \rho_{E_\beta}^{01}\|_1\right)
\end{align}
Now, for any two states $\rho ,\sigma$ one has
\be
\|p \rho - (1-p)\sigma\|_1 =
\|\frac{1}{2}( \rho - \sigma) + (p-\frac{1}{2})(\rho+\delta))\|_1 \leq \frac{1}{2}\|\rho-\sigma\|_1 +\left|2p-1\right| \leq \sqrt{1-F^2(\rho,\sigma)} +\left|2p-1\right|,
\ee
where we used the fact that the trace distance $D(\rho,\sigma)= \frac{1}{2}\|\rho-\sigma\|_1$ is upper bounded by  $\sqrt{1-F^2(\rho,\sigma)}$. The last inequality allows us to write
\begin{align}
p_g(\bar A |E) &\leq  \frac{1}{2}+  \frac{1}{2}\left(|2p_\gamma-1|+\sqrt{1-F^2(\rho_{E_\gamma}^{01},\rho_{E_\gamma}^{01})}\right)\left(|2p_\beta-1|+\sqrt{1-F^2(\rho_{E_\beta}^{10},\rho_{E_\beta}^{01})}\right)\\
&= \frac{1}{2}+  \frac{1}{2}\left(|2p_\gamma-1|+\sqrt{1-F_\gamma^2} \right)\left(|2p_\beta-1|+\sqrt{1-F_\beta^2}\right),
\end{align}
where we introduced the short notation $F_\xi= F(\rho_{E_\xi}^{10},\rho_{E_\xi}^{01})$. We know that these fidelities satisfy $F_\beta F_\gamma\geq R$ by virtue of the bound ~\eqref{eq: bound fid}. Hence to find the worst case bound $H_\text{min}(\bar A|E)\geq -\log_2(p_g^*)$ we need to maximize
\begin{align}
    p_g^* = \max_{F_\gamma,F_\beta} \quad &\frac{1}{2}+  \frac{1}{2}\left(|2p_\gamma-1|+\sqrt{1-F_\gamma^2} \right)\left(|2p_\beta-1|+\sqrt{1-F_\beta^2}\right) \\
    \text{such that}&\quad F_\gamma,F_\beta \leq 1 \\
    &\quad F_\beta F_\gamma\geq R
\end{align}

For any given values of $p_\gamma,p_\beta$ and $R$ this maximization is straightforward to perform numerically.
In the particular case where $p_\gamma=p_\beta=p$ we find that the maximum is attained for $F_\gamma=F_\beta= \sqrt{R}$ leading to 
\be
H_\text{min}(\bar A|E)\geq -\log_2\left( \frac{1}{2} + \frac{1}{2}\left(|2p-1| +\sqrt{1-R}\right)^2\right).
\ee

Furthermore, in the case $p_\gamma=p_\beta=\frac{1}{2}$ that we are interested in,  this derivation gives
\be
H_\text{min}(\bar A|E)\geq -\log_2\left(1-\frac{R}{2}\right),
\ee
and improves over the bound derived in \cite{Sekatski2023}.\\

However, in the case $p_\gamma=p_\beta=\frac{1}{2}$ we can improve the bound even further. To do so let us first assume that the states $\rho_\gamma^{01}=\ketbra{\varphi_\gamma^{01}},\rho_\gamma^{10}=\ketbra{\varphi_\gamma^{10}},\rho_\beta^{01}=\ketbra{\varphi_\beta^{01}}$ and $\rho_\beta^{10}=\ketbra{\varphi_\beta^{10}}$ are pure. In this case both $H(\rho_{AE})$ and $H(\rho_E)$ can be computed analytically as follows. For the joint entropy, we find
\begin{align}
    H(\rho_{\bar AE}) &= H
    \left(\begin{array}{cc}
         \frac{1}{2}\left(\ketbra{\varphi_\gamma^{01},\varphi_\beta^{10}}+\ketbra{\varphi_\gamma^{10},\varphi_\beta^{01}} \right)&  \\
         & \frac{1}{2}\left(\ketbra{\varphi_\gamma^{01},\varphi_\beta^{01}}+\ketbra{\varphi_\gamma^{10},\varphi_\beta^{10}} \right)
    \end{array} \right) \nonumber\\
    & = 1 + \frac{1}{2} H\left(\ketbra{\varphi_\gamma^{01},\varphi_\beta^{10}}+\ketbra{\varphi_\gamma^{10},\varphi_\beta^{01}} \right) + \frac{1}{2} H \left(\ketbra{\varphi_\gamma^{01},\varphi_\beta^{01}}+\ketbra{\varphi_\gamma^{10},\varphi_\beta^{10}} \right)\nonumber \\
    & = 1 + h\left(\frac{1+F_\gamma F_\beta}{2}\right)
\end{align}
with $F_{\gamma(\beta)} =|\braket{\varphi_{\gamma(\beta)}^{01}}{\varphi_{\gamma(\beta)}^{10}}|$. While the marginal entropy reads
\begin{align}
 H(\rho_{\bar E}) &= H
        \left( \frac{\ketbra{\varphi_\gamma^{01},\varphi_\beta^{01}}+\ketbra{\varphi_\gamma^{01},\varphi_\beta^{10}} +\ketbra{\varphi_\gamma^{10},\varphi_\beta^{01}}+\ketbra{\varphi_\gamma^{10},\varphi_\beta^{10}}}{4} \right) \\
        &= H\left(
        \frac{(1+F_\gamma)(1+F_\beta)}{4},
        \frac{(1-F_\gamma)(1+F_\beta)}{4},
        \frac{(1+F_\gamma)(1-F_\beta)}{4},
        \frac{(1-F_\gamma)(1-F_\beta)}{4}\right),
\end{align}
which can be obtained by diagonalizing the state in the first line. Now let us express the conditional von Neumann entropy via the mutual information
\begin{align}
H(\bar A|E) &= H(\bar A) - I(E:A) = 1- I(E:A) \qquad \text{where}\\
I(E:A) &= H(\bar A) + H(E)- H(\bar A E)= 1 +H(\rho_E)- H(\rho_{\bar A E}) =  J(F_\gamma,F_\beta) \\
       \text{with} \quad J(F_\gamma,F_\beta) &:=H\left(
        \frac{(1+F_\gamma)(1+F_\beta)}{4},
        \frac{(1-F_\gamma)(1+F_\beta)}{4},
        \frac{(1+F_\gamma)(1-F_\beta)}{4},
        \frac{(1-F_\gamma)(1-F_\beta)}{4}\right) \\
        & - h\left(\frac{1+F_\gamma F_\beta}{2}\right).
\end{align}

At this point let us relax the assumption that the conditional states of Eve $\rho_\xi^{\bm x}$ on the systems $E_\xi$ are pure. Nevertheless, each pair of state admits purifications $\ket{\varphi_\xi^{\bm x}}_{E_\xi E_\xi'}$ on $E_\xi '\bar E_\xi'$ such that 
\be
|\braket{\varphi_{\xi}^{01}}{\varphi_{\xi}^{10}}_{E_\xi E_\xi'}| = F(\rho^{01}_{E_\xi},\rho^{10}_{E_\xi}) = F_\xi.
\ee
For the purified systems we thus find $I(EE':A) =  J(F_\gamma,F_\beta)$
But the data processing inequality also guarantees that $I(E:A)\leq I(EE':A)$. Therefore, even if the states are not pure we find that 
\begin{align}
H(\bar A|E) = 1- I(E:A) \geq 1- I(EE':A) = 1-J(F_\gamma,F_\beta).
\end{align}

To find the worst case entropy compatible with our constraint $F_\gamma F_\delta\geq R$, it remains to minimize $J(F_\gamma,F_\beta)$. By plotting the function we see that the maximum is attained at $F_\gamma=F_\beta=\sqrt{R}$, leading to the bound
\be\label{eq: bound randomness}
H(\bar A|E)\geq 1 + h\left(\frac{1+R}{2}\right) - H\left(
        \frac{(1+\sqrt{R})^2}{4},
        \frac{(1-\sqrt{R})^2}{4},
        \frac{1-R}{4},
        \frac{1-R}{4}\right).
\ee

\end{proof}

To illustrate this bound, we consider the RGB4 distribution on the triangle \cite{Renou_2019} where the sources distribute $\ket{\psi^+}$ and the parties perform the PVMs in Eq.~\eqref{eq:measurment}. For the resulting distribution, we can lower bound the randomness of the outputs  (and the entanglement of the sources) in the range $0.886 < u < 1$, depicted in Fig. \ref{fig:rgb4_plot}.  It is noteworthy that the new bound on randomness gives a four-fold improvement over the previously known bound derived in \cite{Sekatski2023} (the other bounds remain unchanged), yielding $\approx 14.2\%$ of a random bit for the optimal value of the parameter $u$.


\begin{figure}[H]
    \centering \includegraphics[width=0.6\columnwidth]{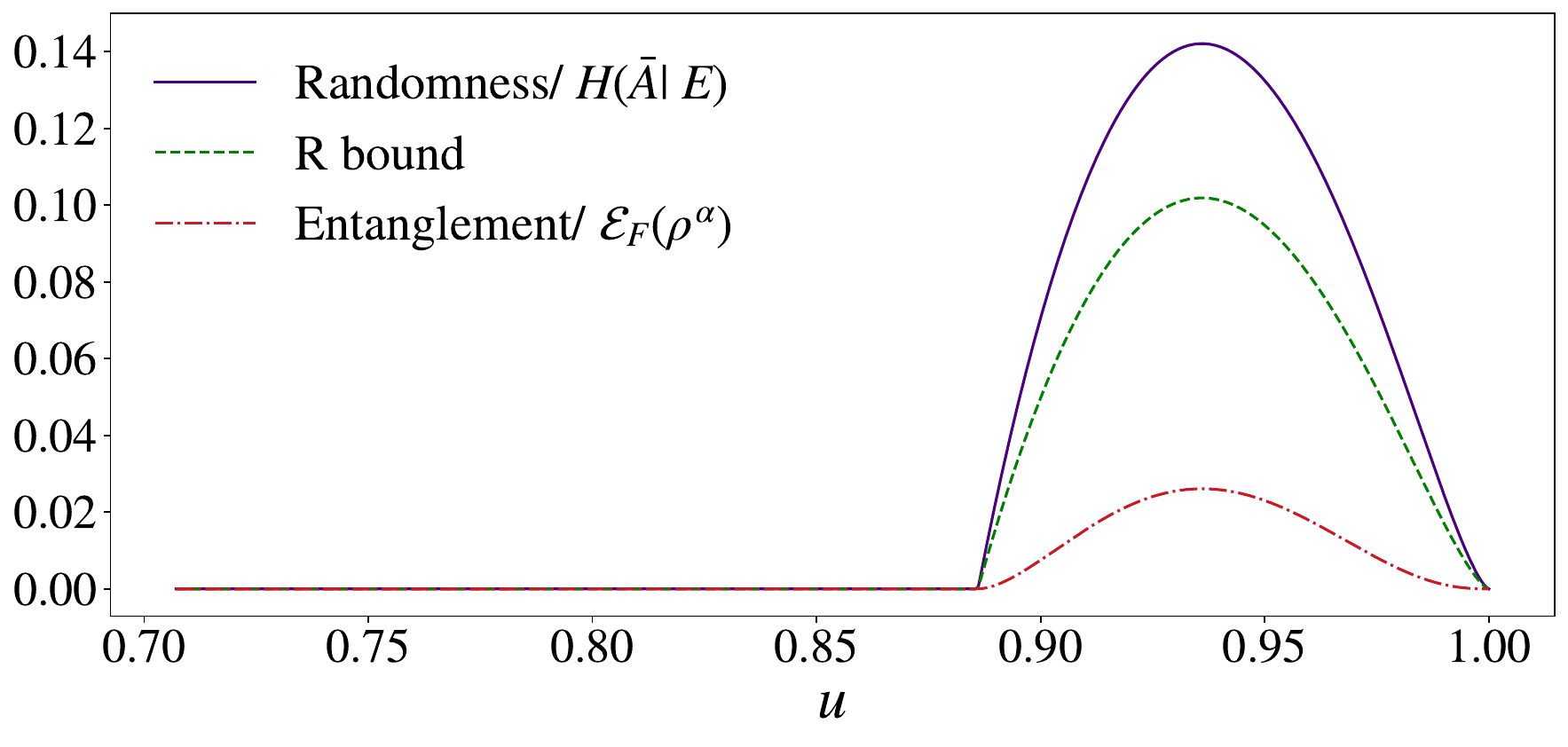}
    \caption{Lower bound on the randomness of the outputs $H(\bar A|E)$, entanglement of formation produced by the sources $\mathcal{E}_F(\rho)$ and the global coherence $R$, for the RGB4 distribution \cite{Renou_2019} on the triangle network. Recall that the distribution is obtained with the sources distributing the states $\ket{\psi^+}=\frac{1}{\sqrt 2}(\ket{01}+\ket{10})$ and the parties performing the projective measurements in Eq.~\eqref{eq:measurment}.}
    \label{fig:rgb4_plot}
\end{figure}

\section{The networks of Fig.~\ref{fig:sq_normal_extended}. Proof of Results 1 and 1'} 
\label{app:A}

In this Appendix, we give  the  full proof of the Results 1 and 1' in the main text. Specifically, we consider the quantum distribution $P_Q^\square(a_1,a_2,a_3,a_4) $ discussed in the main text, and show that it is nonlocal for the parameter range $u_{\mathrm{max}}< u<1$ with $u_{\mathrm{max}} \approx 0.841$, with respect to the networks in Fig. 1 (a), (b) and (c). Then we also derive bounds on the randomness and entanglements of the network in Fig. 1(c). As it is the strongest network, giving the most power to the eavsedropper, these results automatically hold for the networks of Fig,~1(a,b), establishing the Result 1'. Note that a bound on randomness and nonlocality in the network of Fig.~1(c), discussed in Sec.\ref{app: result1primec}, automatically implies the result 1. However, for pedagogical reasons, we first briefly give stand-alone proofs of nonlocality. For the network in Fig.~1(d), in the last section of the appendix, we provide numerical evidence that the quantum distribution is also nonlocal.

We start by recalling the quantum model leading the distribution $P_Q^\square(a_1,a_2,a_3,a_4) $ on the square network introduced in the main text. The network consists of four parties, outputting $a_1$ to $a_4$, who share bipartite quantum states distributed by four independent sources, $S_1$ to $S_4$, as in Fig. 1(a). The distribution is obtained by having each source distributing a two-qubit Bell state $\ket{\psi^+} = \frac{\ket{10}+\ket{01}}{\sqrt{2}}$. Each party receives two qubits, coming from the  independent neighbouring sources, and perform a measurement in the following basis
\begin{equation}
\label{app :measurment}
  \begin{split}
      & \ket{\bar{0}} = \ket{00}, \quad \ket{\bar{1}_0} =  u\ket{01}+v\ket{10},  \\ & \ket{\bar{1}_1} = v\ket{01}-u\ket{10},\quad \ket{\bar{2}} = \ket{11},
  \end{split}
\end{equation}
such that $u^2 + v^2 =1$ ($u,v \in \mathbb{R}$). We also use the notation 
\be
u_0 =u, \quad v_0 = v,\quad u_1 = v,\quad v_1=-u 
\ee
to write compact expressions for various probabilities.
Hence each party obtains a four-valued outcome, denoted $a_1$ to $a_4$. The resulting output probability distribution is given by 
\begin{align}\label{eq:distr_sq}
\begin{split}
P_Q^\square(a_1,a_2,a_3,a_4)  & = \big |\bra{\bar a_1,\bar a_2, \bar a_2,\bar a_4}  \, \ket{\psi^+}^{\otimes 4} \big|^2
\\
P_Q^\square(a_1,a_2,a_3,a_4)  & = \big |\bra{\phi_{a_1}} \bra{\phi_{a_2}} \bra{\phi_{a_3}} \bra{\phi_{a_4}}  \, \ket{\psi^+}^{\otimes 4} \big|^2 \ \text{for} \ \phi_{a_i} \in \{\bar{0}, \bar{1}_0, \bar{1}_1, \bar{2}\}
\end{split}
\end{align}
Where the respective Hilbert spaces are suitably ordered according to the square network configuration.

\subsection{Nonlocality proof for networks in Fig. 1(a)}

Here we simply follow the proof of \cite{Renou_2019}. The following proof is a straightforward translation of the theorem~\ref{th: semi-local} for the triangle to the square network. By TC rigidity~\ref{th:class rigid} for any classical strategy leading to the distribution in $P_Q^\square(a_1,\dots,a_4)$ on the square network we can assign a token variable $t_i \in \{0,1\}$ to each source. In particular, considering the case where all parties output either $a_j=1$, it necessitates that all the tokens must be transmitted in the clockwise $\bm t =(t_1, t_2,t_3, t_4)= (1,1,1,1)$ (that we will denote $\bm t =\cw$) or anti-clockwise  $\bm t=(0,0,0,0)$ (denoted $\bm t= \acw$) direction. Thereby for any local model, we can define the following joint probability distribution for $\bm t \in \{ \circlearrowright , \circlearrowleft \}$.
\begin{equation}
   q(i,j,k,l,\bm t) = \text{Pr} \big( a_1=\bar{1}_i,\ a_2=1_j,\
   a_3=1_k,\
   a_4=\bar{1}_l,\ t \ | \ a_1,a_2,a_3,a_4 \in \{\bar{1}_0, \bar{1}_1 \} \big) 
\end{equation}
\begin{equation}
   q(i,j,k,l|\bm t) = \text{Pr} \big( a_1=\bar{1}_i,\ a_2=1_j,\
   a_3=1_k,\
   a_4=\bar{1}_l,\ t \ | \ a_1,a_2,a_3,a_4 \in \{\bar{1}_0, \bar{1}_1 \} \big) 
\end{equation}

This distribution is hidden from us but it requires to satisfy the following
\begin{equation}
\begin{split}
     &\sum_{t =  \circlearrowright , \circlearrowleft} q(i,j,k,l,t) = q(i,j,k,l) = \frac{1}{2} \big( u_i u_j u_k u_l+ v_i v_j v_k v_l \big)^2\\
     &\sum_{jkl} q(i,j,k,l,\circlearrowright) = q(i,\circlearrowright) = \frac{v_i^2}{2}, \ \ \  \sum_{jkl} q(i,j,k,l,\circlearrowleft) = q(i,\circlearrowleft) = \frac{u_i^2}{2}  \ \ \ \ \  (*)
\end{split}
\label{app:LP_square}
\end{equation}
and similar constraints to $(*)$ for $q(j, t), q(k, t)$ and $q(l, t)$.  The linear constraints specified in (\ref{app:LP_square}) define a linear program that any local model must satisfy. Hence, the LP's unfeasibility establishes the nonexistence of such a local model, affirming the distribution's nonlocality. Numerical verification of this LP leads to the following result
\be
P_Q^\square(a_1,a_2,a_3,a_4) \textit{ is nonlocal on the square network of Fig.~1(a) if } u\in(u_{\mathrm{max}}, 1), \, \textit{with } u_{\mathrm{max}}\approx 0.8408965 .\nonumber
\ee

\subsection{Nonlocality proof for networks in Fig. 1(b)}

Next, we aim to demonstrate nonlocality with respect to the network in Fig.~1(b). Note that this network, albeit involving a tripartite source, is also NDCS network, allowing us to apply the token counting rigidity result~\ref{th:class rigid}. It implies that the sources $S_1$ and $S_2$ each uniformly distribute one token to their connected parties allowing us to define $t_1$ and $t_2$ in the same way as before. In turn, the source $S_3$ allocates two tokens among the parties $A_1, A_3, A_4$ based on the probability distribution 
\be
p_{t_3^1,t_3^3,t_3^4}(0,1,1) = p_{t_3^1,t_3^3,t_3^4}(1,0,1) = p_{t_3^1,t_3^3,t_3^4}(1,1,0) = p_{t_3^1,t_3^3,t_3^4}(0,0,2) = 1/4,
\ee
where $t_3^j$ indicates the number of tokens sent from $S_3$ to the party $A_j$ and we have $\sum_{j \in \{1,3,4\}}t_3^j = 2$.
Note that this distribution mirrors that achieved through two independent uniform sources $S_3$ and $S_4$ in network (a). Let's again consider the case when all parties output $\bar{1}$, in this case $A_4$ receives its sole token from the singular connected source $S_3$, while $S_3$'s other token is uniformly allocated between $A_1$ and $A_3$, as indicated by $p_{t_3^1,t_3^3}(0,1|t_3^4 = 1) = p_{t_3^1,t_3^3}(1,0|t_3^4 = 1) = 1/2$. Consequently, in this scenario , it is clear that token transmission occurs either in a clockwise ($t,{=}\circlearrowright$) or anti-clockwise ($t,{=}\circlearrowleft$) direction (on the triangle formed by the parties $A_1,A_2,A_3$). This leads us to the same linear program (LP) as denoted in \ref{app:LP_square}, whose unfeasibility indicates the nonlocal nature of the distribution with respect to the network (b) as well. In conclusion, we find that

\be
P_Q^\square(a_1,a_2,a_3,a_4) \textit{ is nonlocal on the square network of Fig.~1(b) if } u\in(u_{\mathrm{max}}, 1), \, \textit{with } u_{\mathrm{max}}\approx 0.8408965.
\ee

\subsection{Proof of nonlocality for the network in Fig. 1(c) and of the Result 1'.}
\label{app: result1primec}

Finally, we illustrate nonlocality of $P_Q^\triangle(a_1,a_2,a_3,a_4)$ within the network in Fig.~1(c). This is a triangle network and to match the labeling let us denote the parties producing the outputs $a_1$ and $a_4$ by $B$ and $C$ respectively, and denote the outputs $a_1=b$ and $a_4= c$. Both of which belong to the alphabet $\{0,1_0,1_1,2\}$. The party producing the outputs $a_2$ and $a_3$ will be denoted $A$, and the pair of outputs is coarse-grained to $a$ as given in the next equation. The outputs $a,b,c$ are thus given by 
\be\label{eq: a2a3cg}
a = \begin{cases}
0 & (a_2,a_3) = (1,0) \text{ or } (0,1)\\
1_0 & (a_2,a_3) = (1_0,1_0)\\
1_1 & (a_2,a_3) = (1_0,1_1)\\
1_2 & (a_2,a_3) = (1_1,1_0)\\
1_3 & (a_2,a_3) = (1_1,1_1)\\
1_4 & (a_2,a_3) = (0,2)\\
1_5 & (a_2,a_3) = (2,0)\\
2 & (a_2,a_3) = (1_i,2) \text{ or } (2,1_i)
\end{cases}
\qquad b = a_1 \in\{0,1_0,1_1,2\} \qquad c= a_4 \in\{0,1_0,1_1,2\},
\ee
note that the outputs $(a_2,a_3) =(0,0)$ and $(2,2)$ are impossible.  

This relabeling and coarse-graining defines the  distribution $P_Q^\triangle(a_1,a_2,a_3,a_4) \to P_Q^\triangle (a, b, c)$, given by
\begin{align}
 P(2,1_j, 0)&=  P(0,2, 1_j)=\frac{1}{8} u_j^2  ,  \qquad \qquad  \qquad \qquad P(0,1_j,2)= P(2,0,1_j)= \frac{1}{8} v_j^2  
 \\
P(1_i,0, 2 ) &= \begin{cases}
\frac{1}{16} u_{i_0}^2 u_{i_1}^2 & i=i_0i_1  \in \{ 0,1,2,3\}\\
\frac{1}{16} & i =4 \\
0 & i =5
\end{cases},
\quad
P(1_i, 2, 0 ) = \begin{cases}
\frac{1}{16} v_{i_0}^2 v_{i_1}^2 &  i=i_0i_1  \in \{ 0,1,2,3\}\\
0 & i =4 \\
\frac{1}{16} & i =5
\end{cases}\\
&P(1_i, 1_j, 1_k) = \begin{cases}
\frac{1}{16}(u_{i_0}u_{i_1} u_j u_{k} + v_{i_0}v_{i_1} v_j v_{k})^2 &  i= i_0i_1 \in \{ 0,1,2,3\}\\
\frac{1}{16} u_j^2 u_k^2 &  \ i=4\\
\frac{1}{16} v_j^2 v_k^2 &  \ i=5
\end{cases}.
\end{align}
 It is easy to see that upon further coarse-graining the outputs to $\tilde a, \tilde b, \tilde c$ (forgetting the subscripts of $1_\ell$ in Eq.~\eqref{eq: a2a3cg}) we get the distribution $P_Q^\triangle(\tilde a,\tilde b, \tilde c)$. Hence, by virtue of Theorem~\ref{the: LP nonloc tri} we can readily disprove its locality with the LP in Eq.~\eqref{eq: feasibility locality}, which leads to the following result

\be
P_Q^\triangle(a_1,a_2,a_3,a_4) \textit{ is nonlocal on the network of Fig.~1(c) if } u\in(u_{\mathrm{max}}, 1), \, \textit{with } u_{\mathrm{max}}\approx 0.8408965.\nonumber
\ee

Next, in addition to exhibiting the nonlocality of $P_Q^\triangle(\bar a, \bar b,\bar c)$ with $\bar a$ defined in Eq.~\eqref{eq: a2a3cg}, we bound some of its quantum properties. Since it becomes token counting under coarse-graining, the quantum TC rigidity results~\ref{th:quant rigid} implies that the underlying quantum modes are of the form given in Eqs.~(\ref{eq: Pi0}-\ref{app: pf}). In particular, let us remind that the states read 
\be
 \ket{\psi_\xi}_{X_\xi Y_\xi E_\xi}= \sqrt{p_\xi} \ket{10}_{\textbf{X}_\xi \textbf{Y}_\xi}\!\ket{j^{10}_\xi}_{J_\xi}\!\!+ \sqrt{1-p_\xi} 
\ket{01}_{\textbf{X}_\xi \textbf{Y}_\xi}\ket{j^{01}_\xi}_{J_\xi}
\ee
where $p_\xi$ are the token probabilities of the honest distributions. In the case where all parties receive one token the output probabilities read 
\be
P_Q(1_i, 1_j, 1_k) = \text{Pr}(\cw) \bra{\cw } \Pi_A^{1_i}\Pi_B^{1_j}\Pi_C^{1_k}  \ket{\cw} + \text{Pr}(\acw) \bra{\acw}  \Pi_A^{1_i}\Pi_B^{1_j}\Pi_C^{1_k} \ket{\acw} +  2 \sqrt{\text{Pr}(\acw)\text{Pr}(\cw)} r_{ijk},
\ee
where $r_{ijk}=\text{Re}\bra{\acw } \Pi_A^{1_i}\Pi_B^{1_j}\Pi_C^{1_k}  \ket{\cw}$, $\text{Pr}(\cw) = p_\alpha\, p_\beta \, p_\gamma$, $\text{Pr}(\acw) = (1-p_\alpha)(1- p_\beta) (1-p_\gamma)$, $\ket{\acw}  =\ket{01,01,01}_{\textbf{B}_\alpha \textbf{C}_\alpha \textbf{C}_\beta \textbf{A}_\beta \textbf{A}_\gamma \textbf{B}_\gamma}\ket{j_\alpha^{01},j_\beta^{01},j_\gamma^{01}}_{J_\alpha J_\beta J_\gamma}$, and $
\ket{\cw}   =\ket{10,10,10}_{\textbf{B}_\alpha \textbf{C}_\alpha \textbf{C}_\beta \textbf{A}_\beta \textbf{A}_\gamma \textbf{B}_\gamma}\ket{j_\alpha^{10},j_\beta^{10},j_\gamma^{10}}_{J_\alpha J_\beta J_\gamma}.$ In our case the indices $j$ and $k$ are binary, which guarantees $r_{i1k}=- r_{i0k}=$ and $r_{ij1}= -r_{ij0}$, or simply 
\be
r_{ijk}= (-1)^{j+k} \, r_{i00}=   (-1)^{j+k} \, r_{i} \quad \text{for} \quad i =0,\dots, 5
\ee
Let us now fix $s =(+1,-1,-1,+1,-1,+1)$ and consider the  following 
sum of coherences
\be
R = \sum_{i=1}^{5} \sum_{j,k=0}^1 (-1)^{s_i} (-1)^{j+k} r_{ijk} = \sum_{i=1}^{5} \sum_{j,k=0}^1 (-1)^{s_i} r_{i} = 4 \sum_{i=1}^{5} (-1)^{s_i} r_{i}
\ee
By virtue of Theorem \ref{th: Quantum LP} it can be lower bounded with the LP in Eq.~\eqref{eq: optimization r}, expressed in terms of the variable $r_{ijk}$ and 
\be
\begin{split}
q(i,j,k|\cw) &= \text{Pr}(a=1_i,b=1_j,c=1_k| \bm t =\cw)
=  \bra{\cw } \Pi_A^{1_i}\Pi_B^{1_j}\Pi_C^{1_k}  \ket{\cw},\\
q(i,j,k|\acw) &= \text{Pr}(a=1_i,b=1_j,c=1_k| \bm t =\acw)   =
 \bra{\acw } \Pi_A^{1_i}\Pi_B^{1_j}\Pi_C^{1_k}  \ket{\acw}.
\end{split}
\ee
In addition, in this case, the non-linear constraints of Eq.~\eqref{eq: new bound r} can be easily added to the LP. This is because in our case the constraints  (\ref{th3: c3}, \ref{th3: c4}) guarantee that
\begin{align}
q(4,j, k|\cw)\leq q(i=4|\cw) = \frac{P^\triangle(1_{20},2,0)}{P^\triangle(1,2,0)} =0 \qquad
q(5,j, k|\acw)\leq q(i=5|\acw) =\frac{P^\triangle(1_{02},0,2)}{P^\triangle(1,0,2)}=0 
\end{align}
Hence for $i=4,5$ Eq.~\eqref{eq: new bound r} becomes simply
\be
|r_{4jk}| \leq \sqrt{q(4,j, k|\acw),q(4,j, k|\cw)} = 0 \qquad
|r_{5jk}| \leq \sqrt{q(5,j, k|\acw),q(5,j, k|\cw)} = 0
\ee
eliminating the variables $r_{4}$ and $r_{5}$. This leaves us with four "coherence" variable $r_0,\dots,r_3$ and the linear program in Eq.~\eqref{eq: optimization r} can be written as follows:

\begin{align}
R\geq\qquad  \min_{\substack{ q(i,j,k|\cw)\\ q(i,j,k|\acw)\\ r_0,r_1,r_2,r_3}} \quad & 4~(r_{0}-r_{1}-r_{2}+r_{3})\\
\text{such that}\quad&  q(i,j,k|\cw), q(i,j,k|\acw) \text{ are probability distributions}
\\
&\sum_i r_{i} = 0\\
& q(i,j,k|\cw) +  q(i,j,k|\acw) +  (-1)^{j+k}  \ 2r_{i} =  8 P(1_i, 1_j, 1_k)\\
&q(i|\cw) =  8 P^\triangle(1_i,2,0)   \ \  \text{same for} \  j, k\\
&q(i|\acw) = 8 P^\triangle(1_i,0,2) \ \  \text{same for} \  j, k
\end{align}


This LP gives a nonzero lower bound on $R$ for $0.841<u<1$ and as shown in the section \ref{app: randomness and entanglement} this bound implies that all the measurements are entangled, and gives quantitative bounds on the entanglement of the source and the randomness of the output which are depicted in Fig. \ref{fig:square_plot}

Finally, observe that the network of Fig.~1(c) is stronger than (a) and (b), therefore the above bounds on entanglement and randomness remain valid for these networks. Note that these bounds are derived with the minimum assumptions about the topology of the network, however, considering the weaker network in Fig.~1(c), by Result 2 and appendix \ref{app:result2} for $N=4$, we get better bounds for entanglement and randomness depicted in Fig. \ref{fig:square_plot}

\begin{figure}[H]
    \centering
    \includegraphics[width=0.49\textwidth]{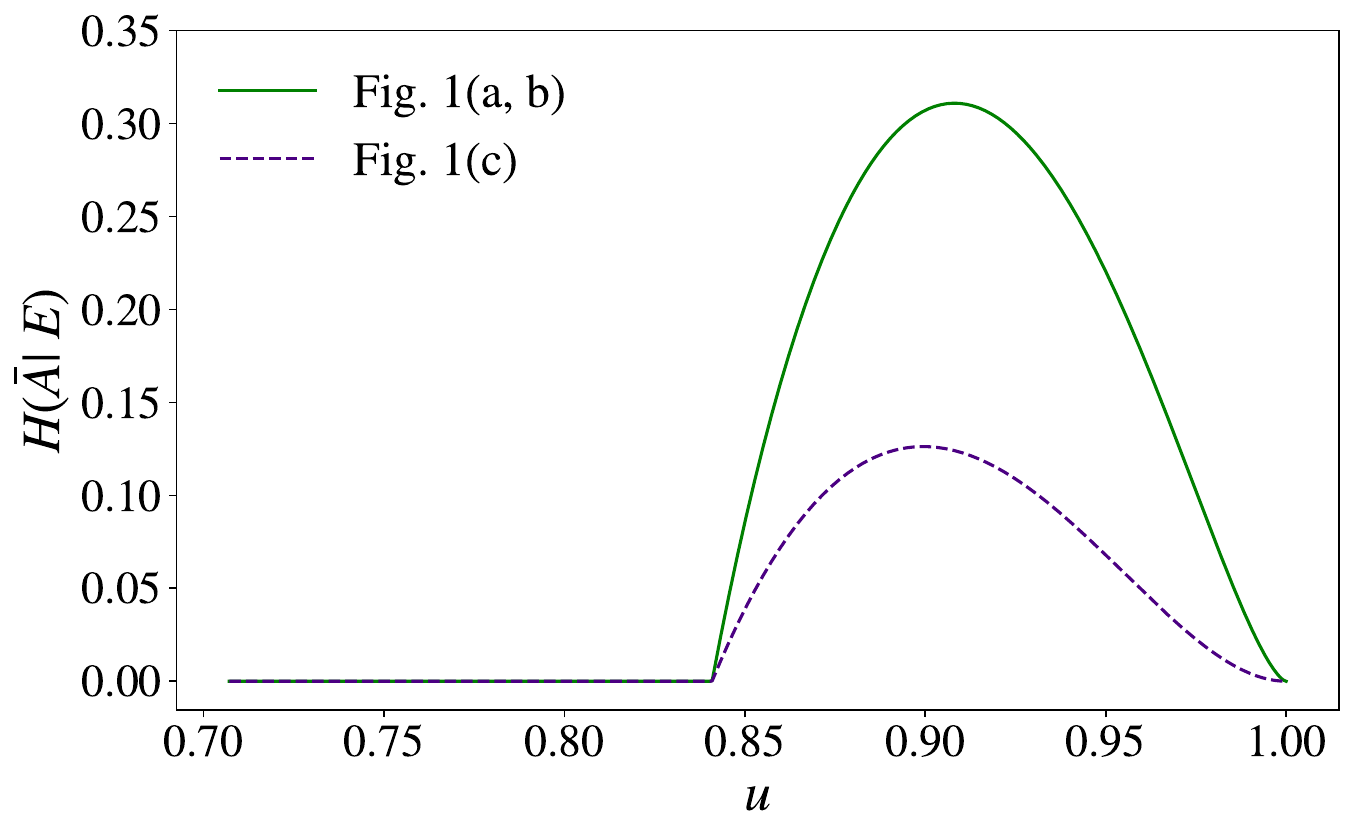}\quad
    \includegraphics[width=0.49\textwidth]{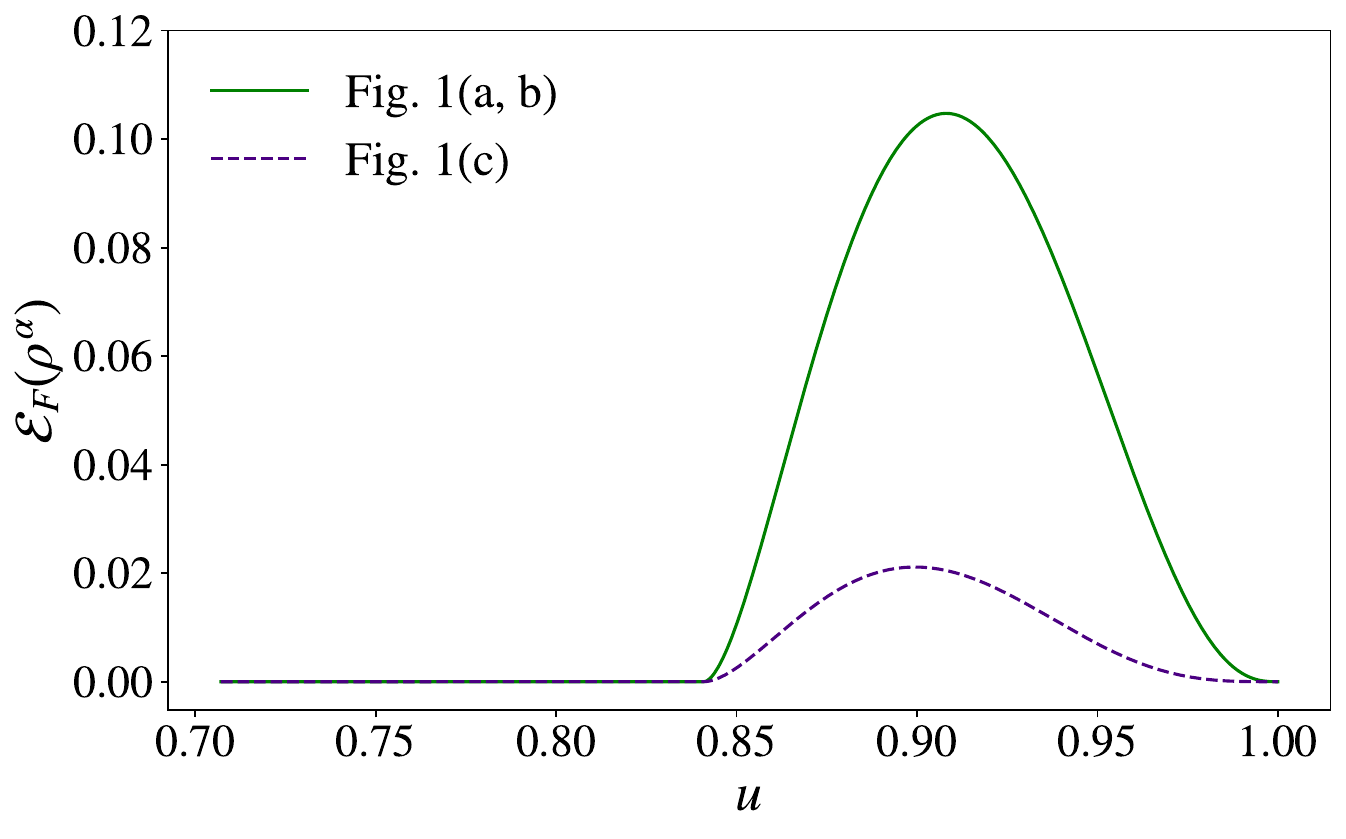}
    \caption{\textbf{Square network:} Lower bounds on the randomness of the output $a_1$ (left) and the entanglement of formation produced by the sources (right), except the source $S_2$ in Fig.~\ref{fig:sq_normal_extended}(b),  for the distribution $P_Q^\square(a_1,a_2,a_3,a_4)$ in Eq.~\eqref{eq:distr_sq} with respect to the networks in the Fig.~\ref{fig:sq_normal_extended}(a,b) (top curve, green), and Fig.~\ref{fig:sq_normal_extended}(c) (bottom curve, purple). For the three networks the distribution in nonlocal in the range $0.841 \approx u_{max}<u<1$.}%
    \label{fig:square_plot}%
\end{figure}

\subsection{Numerical analysis for network in Fig. 1 (d)}
\label{Numerical_noise_robustness}

 In this section, we investigate numerically the nonlocality of the quantum distribution $P_Q^\square(a_1,a_2,a_3,a_4) $ with respect to the various networks of Fig. 1. In particular, the focus is on network (d), for which our analytical proof does not apply. For the other networks (except (c)), we present some estimates of the noise tolerance of the quantum distribution.

We use the generative neural network method presented in Ref. \cite{krivachy_neural_2020}, referred to as LHV-Net. The algorithm aims to construct a distribution $P_{NN}$ that is local by construction, that is as close as possible to a given target distribution $P_{target}$. More precisely, the algorithm minimizes the Euclidean distance 

\begin{equation}
\begin{aligned}
    \label{distance_formula}
     d (P_{target}, P_{NN}) = \sqrt{\sum_{a,b,c,d} \left[P_{target}(a,b,c,d)-P_{NN}(a,b,c,d) \right]^2}.
\end{aligned}
\end{equation}
This method gives us an upper bound on the distance between the quantum distribution and the closest local distribution. A low distance (usually smaller than $0.001$) provides strong evidence that the target distribution is indeed local (or at least there exists a local distribution that is extremely close). On the contrary, when the algorithm cannot find a good approximation to the target distribution, we get some evidence that the latter is nonlocal.

We apply LHV-Net to a family of quantum distributions obtained by adding noise to the initial (noiseless) quantum model for the distribution $P_Q^\square(a_1,a_2,a_3,a_4) $. Specifically, we replace the pure entangled states $\ket{\psi^+}$ produced by each source by a noisy (mixed) entangled state of the form 
\begin{equation}
\begin{aligned}
    \label{noisy_distribution}
    \rho = V \ket{\psi^+}\bra{\psi^+} + (1-V) \frac{\id}{4},
\end{aligned}
\end{equation}
where $V$ denotes the visibility. The measurement performed by each party is still of the form of Eq. \eqref{eq:measurment}. Note that when $V<1$ the resulting quantum distribution does no longer satisfy the TC condition, so we can no longer apply rigidity arguments.

\begin{figure}[t]
    \centering
    \includegraphics[width=0.8\columnwidth]{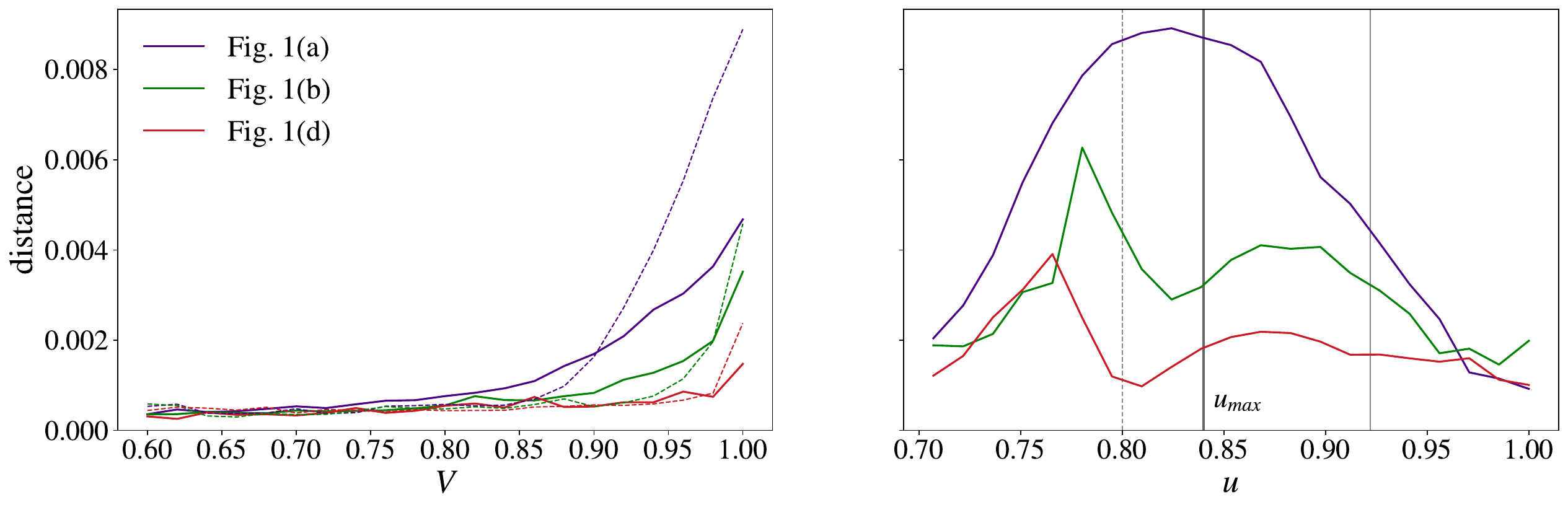}
    \caption{Results of the neural network (LHV-Net) analysis for the target quantum distribution on the square network. Local models on three different networks of Fig. (1) are considered: the square network (purple), the network in (b) (green), and the network in (d) (red). (Left) The minimal distance found by LHV-Net as a function of the amount of noise, as quantified by the visibility $V$. Two different quantum distributions are considered, for measurement parameter $u=\sqrt{0.85}\approx 0.92$ (solid lines) and $u=0.8$ (dashed lines). (Right) Minimal distance as a function of the measurement parameter $u$. The horizontal lines indicate relevant values of $u$, namely $u_{max}$ from Result 1, and the two values used in the left plot.}
    \label{fig:NN_results}
\end{figure}

In Fig.~\ref{fig:NN_results}(left) we plot the minimal distance found by LHV-Net as a function of the visibility $V$ of the quantum distribution (the target). The results indicate that, for sufficiently low noise, the quantum distribution is nonlocal. Here we consider local models based on three different networks of Fig. 1, namely the square network (a) in purple, the network (b) in green, and the network (d) in red. For networks (a) and (b), the results of LHV-Net indicate that the noiseless distribution ($V=1$) is nonlocal, as proven analytically above, and give an estimate of the noise robustness. For network (d), for which our analytic proof techniques do not work, we observe that LHV-Net predicts that the noiseless distribution is also nonlocal, though with a reduced noise robustness. More generally, we see that when considering stronger networks, the noise robustness of the quantum distribution decreases, as we expect intuitively. Note that here the analysis is conducted using two different (but fixed) values of the measurement parameter $u$.

The plot in Fig.~\ref{fig:NN_results} (right) shows the distance of the noiseless quantum distribution (with respect to the best found local distribution) for different values of the measurement parameter $u \in [ \frac{1}{\sqrt{2}},1 ] $ given in Eq.~\eqref{eq:measurment}. We observe that the quantum distribution appears to be topologically robust for essentially all values of $u$. These numerics indicate that proofs of robustness may extend to $u$ values beyond $u>u_{\text{min}}$ (where we have already proven topological robustness for Fig. 1 (a-c)), in a similar fashion to how nonlocality can also be proven beyond the $u>u_{\text{min}}$ regime~\cite{pozas2023}.

\subsection{Numerical analysis for the distribution using the elegant joint measurements}
\label{Numerical_elegant}

\begin{wrapfigure}{r}{0.5\textwidth}
\vspace{-0.8 cm}
    \begin{center}
      \includegraphics[width=0.5\columnwidth]{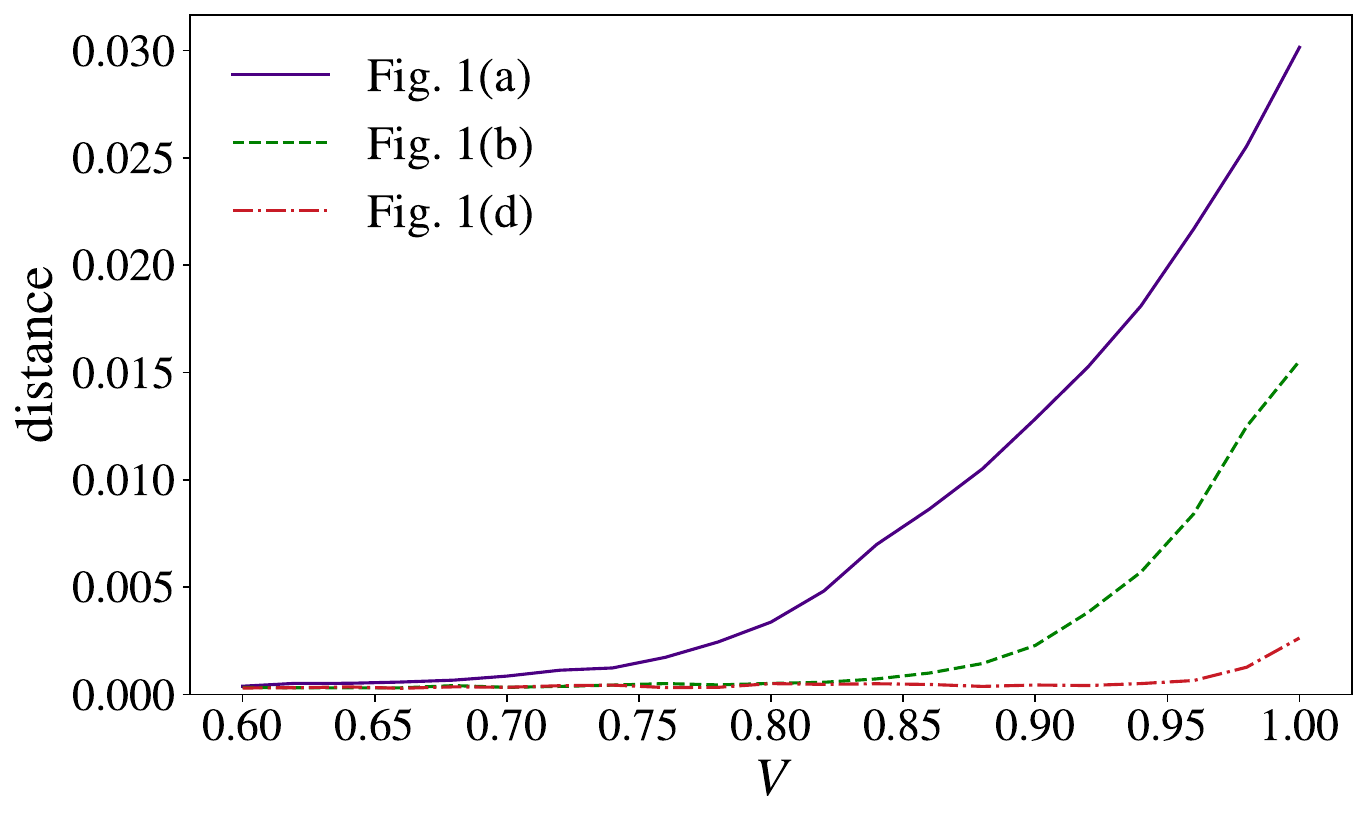}
    \end{center}
    \vspace{-0.75cm}
    \caption{Results of the neural network for the elegant distribution in the square network. The solid purple line shows the results in the original square network Fig.~\ref{fig:sq_normal_extended}(a). The dashed green line shows the results of the same distribution approached with a local hidden variable model in the network depicted in Fig.~\ref{fig:sq_normal_extended}(b). The dot-dashed red line shows the same for the strongest network Fig.~\ref{fig:sq_normal_extended}(d).}
    \label{fig:NN_results_elegant}
\end{wrapfigure}

We can consider the distribution obtained by sharing the state $\ket{\psi^-}$ and performing the elegant joint measurements introduced in Ref.~\cite{gisin2019entanglement}, which is conjectured to be nonlocal based on strong numerical evidence in the triangle network. In the square network, the distribution produced when all parties perform the elegant joint measurement has similar properties as in the triangle network, e.g. the invariance under cyclic permutation of the parties and the invariance under exchange of outputs.

Using the same numerical method as in the previous section (LHV-Net), we observe in Fig.~\ref{fig:NN_results_elegant} that the elegant distribution in the square network appears to be nonlocal (purple curve). Moreover, we see that the distribution remains nonlocal even when we consider local models on stronger networks; note that the color code is the same as in the previous subsection. Hence this represents another example of a quantum distribution that exhibits topologically robust nonlocality. Finally, we note that the distances (as found by LHV-Net) are larger than for the previous example based on token counting.


\section{The networks of Fig.~\ref{fig:general}. Proof of Result 2 and 3}

 \label{app: result 2 and 3}
\subsection{$N$-party Ring network}

The ring network consists of $N$ parties $A_1\dots A_N$ connected by bipartite sources $S_1\dots S_N$. Each source $S_k$ prepares a system $R_k$ sent to the party $A_k$ as well as a system $L_{k+1}$ sent to the party $A_{k+1}$ (with periodic boundary conditions). Hence each party $A_k$ receives $L_k R_k$. 

Let us first consider the honest quantum model over the ring network. As before the sources $S_k$ prepare Bell states $\ket{\psi^+}_{R_kL_{k+1}}$, and each party $A_i$ receiving two qubits performs the two-qubit projective measurement given in Eq. \eqref{eq:measurment} resulting in a four-valued outcome $a_i \in \{0,1_0, 1_1,2\}$. The resulting quantum distribution reads
\be \label{eq: P ring}
P_Q^{\bigcirc}(a_1,...,a_N) = \left| \bra{\bar a_1,\dots \bar a_N} \, \ket {\psi^+}^{\otimes N}\right|^2,
\ee 
and depends on the parameter $u$ describing the local measurements.

It becomes token counting when the outcomes are coarse-grained to $\tilde a_i \in\{0,1,2\}$ (by merging $\{1_0 , 1_1\} \mapsto 1$). Hence this distribution features classical rigidity formalized in Theorem~\ref{th:class rigid}, in other words, there exists a unique local model that can reproduce it. This property is again the key element to demonstrate the nonlocality of the quantum distribution $P_Q^{\bigcirc}$ over the $N$-party ring network for a well-chosen range of the measurement parameter $u$. With its help the nonlocality of the distribution $P_Q^{\bigcirc}(a_1,...,a_N)$ with respect to the honest ring network was shown in \cite{renou2022network} in the limit $u\to 1$ and for $N=4k$ and $N=2k+1$.

We now prove the nonlocality of the distribution with respect to stronger networks in Fig.~\ref{fig:general}(b,c). Remarkably for the network (b$_2$) we can prove nonlocality (also randomness and entanglement) for any $N\geq 3$ and any $u>0.84$. This automatically implies that the distribution is also nonlocal in the same parameter regime with respect to the weaker ring network, strengthening the result of~\cite{renou2022network}.

\subsection{Result 2: untrusted common source}\label{app:result2}

Let us only assume that we only trust a local part of the network depicted in the Fig.~\ref{fig:general}(b$_1$), namely that the parties $A_1$, $A_2$ and $A_3$ are separate, and $A_2$ is only connected to $A_1$ and $A_3$, and this by means of two independent bipartite sources $S_1$ and $S_2$. here we prove the following Result 2 of the main text.

\begin{wrapfigure}{r}{0.5\textwidth}
    \begin{center}
    \includegraphics[width=0.4\columnwidth]{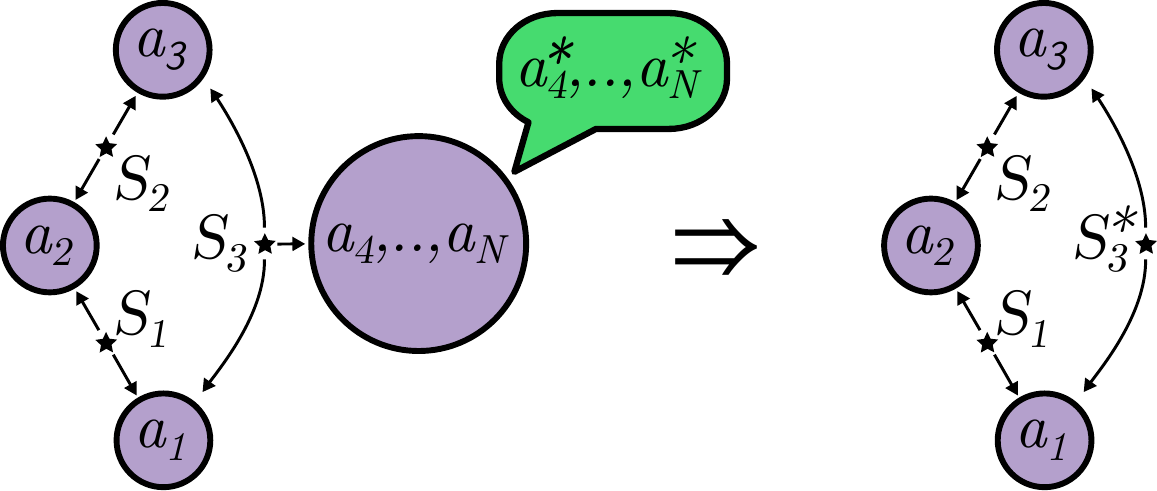}
    \end{center}
    \caption{Conditioning on the outputs $a_4^*,\dots a_N^*$ produced by the fourth party in the left network, defines a triangle network where the third source $S_3^*$ is induced by conditioning $S_3$ on the these outputs which it determines.}
    \label{fig: conditional triangle}
\end{wrapfigure}

To prove the result for all networks compatible with the assumption, we assume the strongest one, deputed in Fig.~\ref{fig:general}(b$_2$). Here the bipartite sources $S_1$ and $S_2$ connect the parties $A_1$ with $A_2$ and $A_2$ with $A_3$, and a source $S_3$ connects all the parties except $A_2$.  The idea of the proof is pretty simple and sketched in Fig.~\ref{fig: conditional triangle}. We chose a particular sequence of outputs $(a_4^*,\dots,a_{N}^*)$ produced by the corresponding parties. We then consider the \emph{conditional} distribution 
\be
P_Q^\triangle(a_1,a_2,a_3)= P^\bigcirc_Q(a_1,a_2,a_3|a_4^*,\dots,a_{N}^*).
\ee
This is a distribution on the triangle network, where the source $S_3$ is replaced by a bipartite source $S_3^*$ which prepares the systems sent to $A_1$ and $A_3$ conditionally on the outputs $a_4^*,\dots,a_{N}^*$ produced by the remaining parties connected to $S_3$.


In particular,  showing that $P_Q^\triangle(a_1,a_2,a_3)$ is nonlocal on the triangle network implies that $P_\bigcirc(a_1,\dots,a_n)$ is nonlocal with respect to the network in Fig~\ref{fig:general}(b$_2$).  Furthermore, with the results of section~\ref{app: randomness and entanglement} we can bound the entanglement produced by the sources $S_1,S_2$ or the randomness of $a_2$ based on the properties of $P_Q^\triangle(a_1,a_2,a_3)$. But, these sources and parties are independent of the post-selection performed on the source $S_3$, hence the bound on entanglement and randomness are valid for any quantum model underlying the distribution $P^\bigcirc_Q(a_1,\dots,a_N)$ before the post-selection.\\

To derive the best bounds it remains to find the best outcomes $a_4^*,\dots a_N^*$ on which to post-select. The simplest way to do so is to compute the quantum state $\ket{\psi_{\bm{a}^*}}_{R_3 L_1}$ which the source $S_3$ sends to $A_3$ and $A_1$ conditional on $\bm{a}^*=(a_4^*,\dots, a_N^*)$.  Because the outcomes $a_i=0$ and $a_i=2$ correspond to entanglement breaking measurements, it is only interesting to post-select on the outcomes $(a_4^*,\dots, a_N^*) = (1_{i_4},\dots, 1_{i_{N}})$ with $i_k \in{0,1}$. Given that in the original honest ring network, all the sources prepare $\ket{\psi^+}$ it is easy to see that the conditional state satisfies

\be
\ket{\psi_{\bm{ a}^*}}_{R_3 L_1} \propto 
\bra{1_{i_4}, \dots 1_{i_N}}_{A_4,\dots,A_{N}} \ket{\psi^+}_{R_3 L_4}\dots \ket{\psi^+}_{R_N L_1}
\ee
where $\ket{1_i} = u_i \ket{01} + v_i \ket{10}$, with $u_0=u, v_0 = v, u_1=v, v_1=-u$. Since $\ket{1_i}$ only has support on the subspace spanned by $\ket{01}$ and $\ket{10}$ it is easy to see ttat 
 \begin{equation}
\ket{\psi_{\bm{ a}^*}}_{R_3 L_1} \propto\ v_{i_4} \dots v_{i_N} \ket{01} + u_{i_4} \dots u_{i_N} \ket{10} = v_0^m v_1^{N-3-m} \ket{01} + u_0^m u_1^{N-3-m} \ket{10}
\end{equation}
where $m$ is the number of $1_0$ outcomes. We call $\chi_m$ all the combinations of outcomes resulting in the same state, i.e. where $1_0$ appears $m$ times and $1_1$ $N-3-m$ times. Given the conditional state
\be
\ket{\psi'_m}_{R_3 L_1} =  \frac{v_0^m v_1^{N-3-m} \ket{01} + u_0^m u_1^{N-3-m} \ket{10}}{\sqrt{(v_0^m v_1^{N-3-m})^2+ (u_0^m u_1^{N-3-m})^2}}
\ee
prepared with the source $S_3^*$, it is straightforward to compute the distribution on the triangle network $P_{Q}^\triangle(a_1,a_2,a_3)=P_Q^{\bigcirc}(a_1,a_2,a_3|\chi_m)$. We can analyze this distribution with the tools presented in Appendix~ \ref{app:sec_triangle}. In particular, with the help of the linear program given in Eq.~\eqref{eq: optimization r}, we can  lower bound the coherence
\be\label{eq : app R}
R = \sum_{i,j,k=0}^1 (-1)^{s_{ijk}} r_{ijk}
\ee
present in any quantum model underlying the distributions, defined in Eq.~\eqref{app: pf}. To bound the coherence $R$ note that all the indices $i,j,k$ here are binary. Following the Remark 1 in Appendix~\ref{app:sec_triangle}  we have $r_{ijk} = (-1)^{i+j+k} r$ and there is only a unique variable $r \geq 0$ to minimize, in this case, selecting $s_{i,j,k} = (-1)^{i+j+k}$ we get $R = 8 r$. We find that the optimal outputs $m$ for post-selection (maximizing $R$) depend on the size $N$ of the original network. The optimal states and the corresponding bound on $R$ are given below.

\begin{itemize}
    \item if $N = 4k$, then we consider $m=2k-1$ therefore we have $\ket{\phi} = v\ket{01} + u \ket{10}$
    
    \item if $N = 4k+1$, then we consider $m=2k$ therefore we have $\ket{\phi} = \frac{v^2\ket{01} + u^2 \ket{10}}{\sqrt{v^4+u^4}}$

    \item if $N = 4k+2$, then we consider $m=2k+1$ therefore we have $\ket{\phi} = \frac{v^3\ket{01} + u^3 \ket{10}}{\sqrt{v^6+u^6}}$

    \item if $N = 4k+3$, then we consider $m=2k-1$ therefore we have $\ket{\phi} = \frac{-u^2\ket{01} + v^2 \ket{10}}{\sqrt{v^4+u^4}}$
\end{itemize}

In turn from the bound on the coherence $R$ we can obtain a bound on the randomness of the output $a_2$ with the help of the formula~\eqref{eq: bound randomness}. Here we quantify the randomness with the von Neumann entropy of the output $H(a_2|E)$ conditional on a quantum eavesdropper holding the purification of all the states prepared by the sources. Remarkably, since the output $a_2$ does not depend on the source $S_3$, its randomness can not depend on the outputs $a_4,\dots,a_N$. That is, it is not affected by postselection and holds for the full distribution $P^\bigcirc_Q(a_1,\dots,a_N)$ on the notwork Fig.~\ref{fig:general}(c$_1$).

Similarly, the entanglement of formation produced by the sources $S_1$ and $S_2$ can be bounded from $R$ with the help of Eq.~\eqref{eq: EF final}.

\begin{figure}[H]
    \centering
    \includegraphics[width=0.48\textwidth]{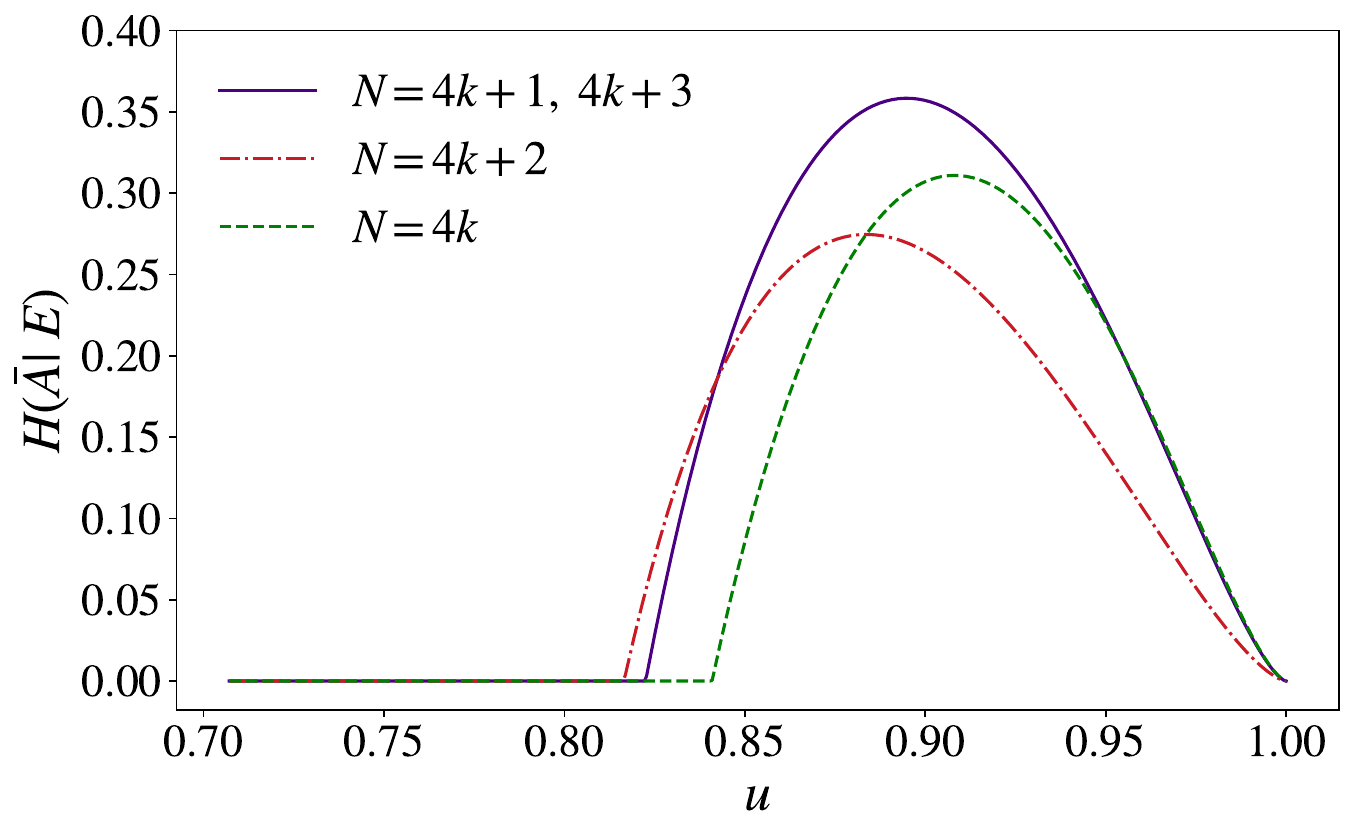} 
    \quad
    \includegraphics[width=0.48 \textwidth]{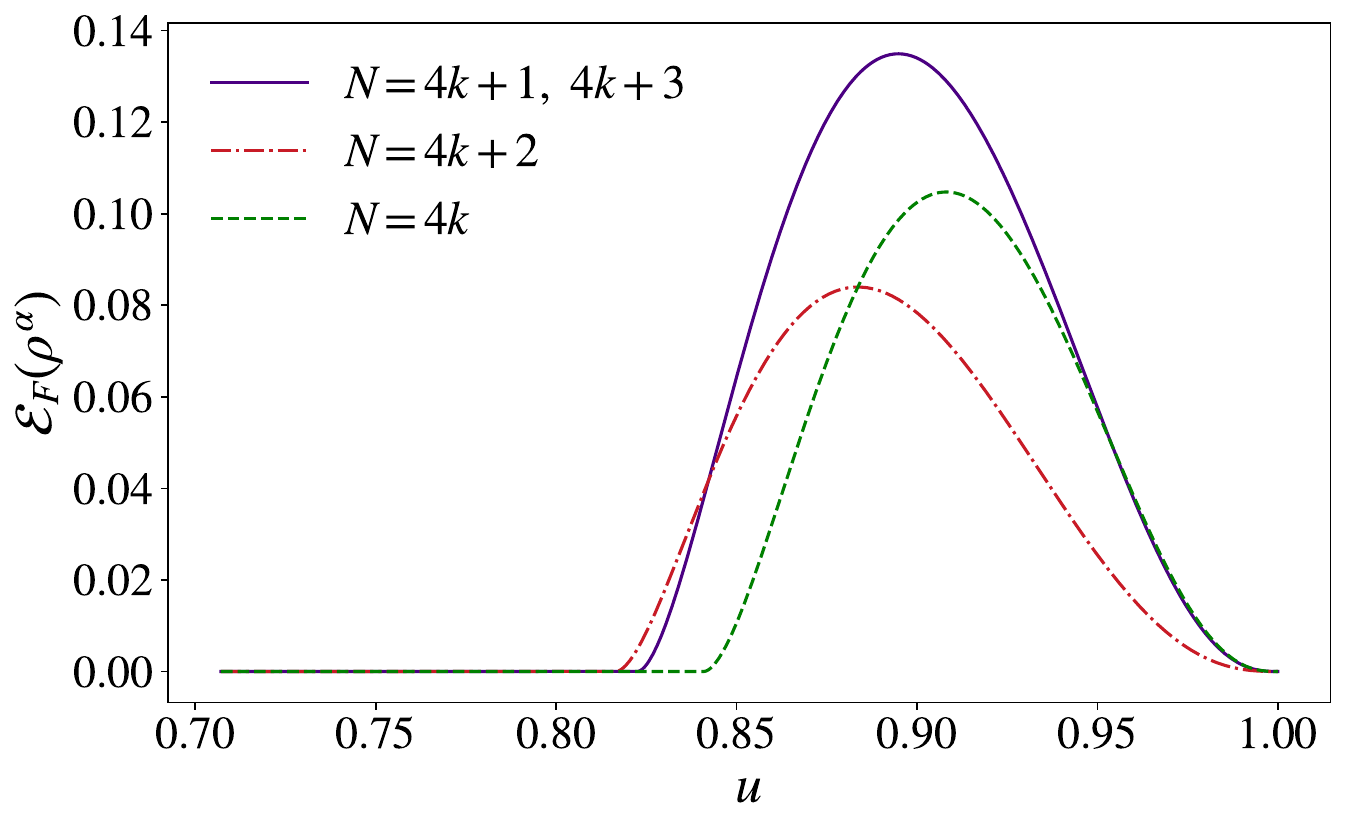} 
    \caption{Lower bounds on the randomness of the output $a_2$ (left) and the entanglement of formation produced by the sources $S_1$ and $S_2$ (right) for the distribution $P_Q^\bigcirc(a_1,\dots,a_N)$ in Eq.~\eqref{eq: P ring} with respect to the network in the Fig.~\ref{fig:general}(b$_2$) and all the networks compatible with Fig.~\ref{fig:general}(b$_1$), including the honest ring network in Fig.~\ref{fig:general}(a). The bounds differ depending on the size of the original ring network on which the quantum strategy is defined $N=4k$ (green), $N=2k+1$ (purple), and $N=4k+2$ (red). In all of the cases we find that the distribution is nonlocal for $u_\text{min}(N)<u<1$, with $ u_\text{min}(N)=0.841$ for $ N=4k$, $ u_\text{min}(N)=0.817$ for $ N=4k+2$ and $ u_\text{min}(N)=0.823$ for $ N=2k+1$. for $N=4k+2$ we prove nonlocality in the maximal range $0.817<u<1$. The bounds for the network (b) also hold for the honest square (a).}%
    \label{fig:example}%
\end{figure}

\subsection{ Results 3 : merging to a triangle}

The strongest network compatible with the assumption of Result 3, is the one where the parties $A_3, \dots A_N$  come together and act as a single entity. see Fig.~\ref{fig:general}(c$_2$) and \ref{fig: result 3}. 

\begin{wrapfigure}{r}{0.5\textwidth}
 \centering
\begin{minipage}[b]{0.05\textwidth}
    \be\nonumber
    \left. 
    \begin{array}{r}
    0 \\ 1_{0} \\ 1_{1}\\ 2
    \end{array} \right\}
    \ee
\end{minipage}
\hspace{0.3 cm}
\begin{minipage}{0.4\textwidth}
    \includegraphics[height=0.35\columnwidth]{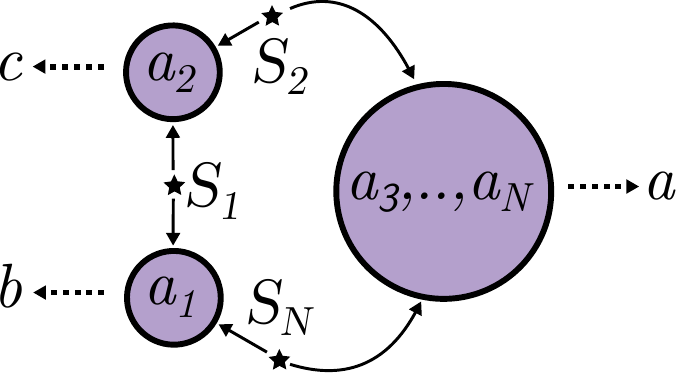} 
\end{minipage}
\hspace{-2.5 cm}
\begin{minipage}[b]{0.1\textwidth}
    \be\nonumber
    \left\{\begin{array}{l}
    0 \\ 1_{i_3,\dots,i_N} \\ 1_{20} \\ 1_{02}\\2
    \end{array}
    \right.
    \ee
\end{minipage}
    \caption{The network of Fig.~\ref{fig:general}(c$_2$) is a triangle network. For its analysis we merge the outputs $a_3,\dots, a_N$, produced by the same party, to define an output $a$ of cardinality $4+2^{N-2}$ in Eq.~\eqref{eq: merging in c}. The other outputs $a_1$ and $a_2$ are relabeled to $b$ and $c$ to match the triangle notation. The resulting distribution $P_Q^\triangle(a,b,c)$ is token counting upon coarse-graining to $\tilde a, \tilde b, \tilde c\in\{0,1,2\}$.} 
   \label{fig: result 3}
\end{wrapfigure}

This is a triangle network where the outputs $(a_3,\dots,a_N)$ are now produced by a single party that we denote $A$, the remaining two parties will be denoted $B$ and $C$, producing the respective outcomes $b=a_1$ $c=a_2$. The outcome of $A$ will be denoted $a$, it is simpler to first define its coarse-grained value
\be
\tilde a = \sum_{i=3}^N \tilde a_i - N+3.
\ee
Note that since the distribution $P_Q^\bigcirc(\tilde a_1,\dots,\tilde a_N)$ is token counting it satisfies $\sum_{i=3}^N \tilde a_i= N -\tilde a_1-\tilde a_2$ with $\tilde a_1+\tilde a_2\in \{1,2,3\}$. Hence the value of $\tilde a\in\{0,1,2\}$ is ternary.
Next, consider all the outputs $(a_3,\dots,a_N)$ leading to $\tilde a=1$. First, there is the outputs with $(a_3,\dots ,a_N) = (1_{i_3},\dots, 1_{i_N})$, which  we denote with $a=1_{i_3,\dots,i_N}$. Second there are also the outputs with $\tilde a=1$ but $(a_3,\dots ,a_N) \neq (1_{i_3},\dots, 1_{i_N})$, meaning that there is at least one output $\tilde a_i=2$ or $\tilde a_j=0$ for $3\leq i,j\leq N$. These will be coarse-grained in two groups labeled $a=1_{20}$ and $a=1_{02}$. Concretely, in this case, we find for the first output in $\{\tilde a_3,\dots, a_N\}$ which is not equal to $1$, if it is $2$ we set $a=1_{20}$ while if it is one $0$ we set $a=1_{02}$. It is easy to see that the thee cases $a=1_{i_3,\dots,i_N}, 1_{20}$ and $1_{02}$ are mutually exclusive and cover all the possibilities leading to $\tilde a=1$. In summary, the output $a,b,c$ are thus given by
\be\label{eq: merging in c}
a =\begin{cases}
0 & \sum_{i=3}^N \tilde a_i = N- 3\\
1_{i_3,\dots i_N} & \sum_{i=3}^N \tilde a_i = N-2 \quad \text{and}\quad (a_3,\dots, a_N)=(1_{i_3},\dots, 1_{i_N}) \\
1_{20} & \sum_{i=3}^N \tilde a_i = N-2 \quad \text{and}\quad  {\displaystyle \min_{\substack{ 3\leq i\leq N \\ a_i=2 }} i < \min_{\substack{ 3\leq i\leq N \\ a_i=0 }} i  }\\
1_{02} & \sum_{i=3}^N \tilde a_i = N-2 \quad \text{and} \quad {\displaystyle \min_{\substack{ 3\leq i\leq N \\ a_i=2 }} i > \min_{\substack{ 3\leq i\leq N \\ a_i=0 }} i  } \\
2 & \sum_{i=3}^N \tilde a_i = N-1
\end{cases},
\qquad b=a_1 \quad \text{and}\quad c=a_2.
\ee

This induces the following distribution $P_Q^\triangle(a,b,c)$ on the triangle network, which becomes token counting upon coarse-graining $\{ 1_0, 1_1\} \mapsto 1$ to $P_Q^\triangle(\tilde a,\tilde b,\tilde c)$.

\begin{align}
 P_Q^\triangle(2,1_j, 0)&=  P_Q^\triangle(0,2, 1_j)=\frac{1}{8} u_j^2  ,  \qquad \qquad  \qquad \qquad P_Q^\triangle(0,1_j,2)= P_Q^\triangle(2,0,1_j)= \frac{1}{8} v_j^2  
 \\
P_Q^\triangle(1_{ i} ,0, 2 ) &= \begin{cases}
\frac{1}{2^N} \prod_{l=3}^{N} u_{i_l}^2 & i=(i_3,\dots i_N)\\
\frac{1}{8}- \frac{1}{2^N} & i =02 \\
0 & i =20
\end{cases},
\quad
P_Q^\triangle(1_{ i}, 2, 0 ) = \begin{cases}
\frac{1}{2^N} \prod_{l=3}^{N} v_{i_l}^2 & i=(i_3,\dots i_N)\\
0 & i =02 \\
\frac{1}{8}- \frac{1}{2^N} & i =20
\end{cases}\\
&P_Q^\triangle(1_{i}, 1_j, 1_k) = \begin{cases}
\frac{1}{2^N}\Big ( \prod_{n=1}^{N}  u_{i_n} + \prod_{n=1}^{N}  v_{i_n} \Big) ^2 &  i=(i_3,\dots i_N), \ \text{with $i_1=j, \ i_2 = k$}\\
\left(\frac{1}{8}- \frac{1}{2^N}\right) u_j^2 u_k^2 &  \ i=02\\
\left(\frac{1}{8}- \frac{1}{2^N}\right) v_j^2 v_k^2 &  \ i=20
\end{cases}.
\end{align}

To prove its nonlocality we can thus demonstrate the infeasibility of the LP in Eq.~\eqref{th3: c1}.
For simplicity, we will however relax the feasibility problem by only looking at the variables corresponding to $a=1_{\bm i}$ with $\bm i = (i_3,\dots i_N)$, and discarding those with $a=1_{20},1_{02}$. 


Assuming the existence of a trilocal model reproducing $P_Q^\triangle(a,b,c)$, from the rigidity of token counting, there must exist the variables that show the movement of each individual tokens. Considering the case when all of the outputs $\tilde a,\tilde b, \tilde c = 1$, the three tokens must be transmitted either clockwise ($\cw$) or anti-clockwise ($\acw$). With this in mind  for $\bm t \in \{ \cw, \acw\}$ let us define the following variables
\begin{align}
q(\bm i, j,k| \bm t) &=  \text{Pr} (a= 1_{\bm i},  b= 1_j , c= 1_k| \bm t )
=  \frac{\text{Pr} (a= 1_{\bm i}, b= 1_j , c= 1_k, \bm t)}{\text{Pr} (\bm t)}.
\end{align}
Here, we know that for our distribution of interest, TC rigidity implies $\text{Pr}(\bm t)= \text{Pr}(\cw)=\text{Pr}(\acw)=\frac{1}{8}$. 
Note that these variables are positive but do not define probability distributions, indeed $\sum_{\bm i, j, k} q(\bm i, j,k| \bm t) <1$ since the sum does not include any events with $a=1_{02}$ or $a=1_{20}$. Nevertheless, we have the following identity
\begin{align}\label{eq: averagae}
\frac{1}{8}(q(\bm i,j,k| \cw)+q(\bm i,j,k| \acw)) &=
\text{Pr}(\cw) q(\bm i,j,k| \cw) + \text{Pr}(\acw)q(\bm i,j,k| \acw) 
\\ &= 
  P_Q^\triangle (1_{\bm i}, 1_j , 1_k) 
 =  \frac{1}{2^N}\Big ( \prod_n  u_{i_n} + \prod_n  v_{i_n} \Big) ^2
\end{align}
Next, let us look at the marginal "network" constraints. For the parties $B$ and $C$ we have 
\begin{align}\label{eq: marj2}
   q(j|\cw)& = \sum_{\bm i,k} q(\bm i,j,k|\cw)
    =\text{Pr} (b= 1_{ j}, a\notin \{1_{02},1_{20}\}| \bm t=\cw) \\
    &=\text{Pr} (b= 1_{ j}| \bm t=\cw) -\text{Pr} (b= 1_{ j}, a= 1_{02}| \bm t=\cw) -\text{Pr} (b= 1_{ j}, a= 1_{20}| \bm t=\cw)  
    \end{align}
Here, we have 
    \begin{align}
    \text{Pr} (b= 1_{ j}| \bm t=\cw) &=\text{Pr} (b= 1_{ j}| \bm t=(1,1,1))= \text{Pr} (b= 1_ j | \bm t=(1,0,1)) = 8 P_Q^\triangle(0, 1_j, 2) =   v_j^2.
\end{align}
For the other terms note that
\begin{align}
\text{Pr}(a=1_{02},b=1_j,c=1_k|\bm t=\cw)&\leq \text{Pr}(a=1_{02}| \bm t=\cw) =  \text{Pr}(a=1_{02}| \bm t=(0,1,1))= 8 P_Q^\triangle(1_{02},2,0) =0 \\
\text{Pr}(a=1_{20}, b=1_j,c=1_k|\bm t=\acw)&\leq 8 P_Q^\triangle(1_{20},0,2)=0.
\end{align}
from which we conclude that in Eq.\eqref{eq: marj2}, $\text{Pr} (b= 1_{ j}, a= 1_{02}| \bm t=\cw) =0$.
 Now we can also write
\be
\text{Pr} (b= 1_{ j}, a= 1_{20}| \bm t=\cw) = 8 \,\text{Pr} ( a= 1_{20}, b= 1_{ j},\bm t=\cw) = 8\sum_k \text{Pr} ( a= 1_{20},b= 1_{ j},c=1_k, \bm t=\cw)
\ee
and
\be
P^\triangle_Q (1_{20},1_j,1_k)=\text{Pr} ( a= 1_{20}, b= 1_{ j},c=1_k, \bm t=\cw) + \underbrace{\text{Pr} ( a= 1_{20},b= 1_{ j},c=1_k, \bm t=\acw)}_{=0}
\ee
hence 
\be
\text{Pr} (b= 1_{ j}, a= 1_{20}| \bm t=\cw) = 8\sum_k \text{Pr} ( a= 1_{20},b= 1_{ j},c=1_k, \bm t=\cw) = 8\sum_k P^\triangle_Q (1_{20},1_j,1_k).
\ee 
Finally, this gives us the expression for the marginal 
\be
q(j|\cw) = v_j^2 -  8\sum_k P^\triangle_Q (1_{20},1_j,1_k) = \frac{v_j^2}{2^{N-3}}.
\ee
Similarly we find $ q(j|\acw) =  \frac{u_j^2}{2^{N-3}}, q(k|\acw) = \frac{u_k^2}{2^{N-3}}$ and $q(k|\cw) = \frac{v_k^2}{2^{N-3}}$.  For the party $A$, we first compute
\begin{align}
     q(\bm i|\bm t =\cw)&= \sum_{j,k} \text{Pr} (a= 1_{\bm i},b=1_j,c=1_k | \bm t=\cw) = \text{Pr} (a= 1_{\bm i} | \bm t=\cw)\\
     &= \text{Pr} (a= 1_{\bm i} | \bm t=(1,1,1)) =\text{Pr} (a= 1_{\bm i} | \bm t=(0,1,1)) \\
    &=\frac{\text{Pr} (a= 1_{\bm i},  \bm t=(0,1,1))}{\text{Pr} ( \bm t=(0,1,1))} 
    =\frac{\text{Pr}(a= 1_{\bm i}, b=2,c=0)}{1/8}=\frac{1}{2^{N-3}} \prod_{n=3}^N v_{i_n}^2,\\
       q(\bm i|\bm t =\acw)  &=\frac{\text{Pr}(a= 1_{\bm i}, b=0,c=2)}{\text{Pr} ( \bm t=(0,1,1))} =    \frac{1}{2^{N-3}}\prod_{n=3}^N u_{i_n}^2,
\end{align}
using the fact that $\bm t=(0,1,1) \Longleftrightarrow (b,c)=(2,0)$ and $\bm t=(1,0,0) \Longleftrightarrow (b,c)=(0,2)$ guaranteed by TC rigidity.

We are thus left with the following feasibility problem
\begin{align}\label{eq: LP final res 3}
\max_{\substack{q(\bm i,j,k| \cw)\geq 0 \\q(\bm i,j,k| \acw)\geq 0}} \quad & 1 \\
 &q(\bm i,j,k| \cw)+q(\bm i,j,k| \acw) = \frac{1}{2^{N-3}}\Big ( \prod_n  u_{i_n} + \prod_n  v_{i_n} \Big) ^2 \\
 & q(\bm i|\bm t =\cw) =\frac{1}{2^{N-3}} \prod_{n=3}^N v_{i_n}^2,\quad q(j|\cw) =\frac{v_j^2}{2^{N-3}}, \quad q(k|\cw) =\frac{v_k^2}{2^{N-3}} \\
&q(\bm i|\bm t =\acw) = \frac{1}{2^{N-3}}\prod_{n=3}^N u_{i_n}^2, \quad q(j|\acw) =\frac{u_j^2}{2^{N-3}},\quad
q(k|\acw) =\frac{u_k^2}{2^{N-3}}.
\end{align}
We now show analytically that this linear program is unfeasible in the limit $u\to 1$, by noting that it can be cast in a form, which was considered in Appendix C of Ref. \cite{renou2022network} and proven unfeasible.



To do so we compute $\sum_{\bm i,j,k} q(\bm i,j,k| \bm t) = \sum_j q(j|\bm t)= \frac{1}{2^{N-3}}$. This allows us to define a new variable
\be
q'(i_1,\dots,i_N,\bm t) = 2^{N-3} 
\begin{cases}
\frac{1}{2} q(\bm i=(i_3,\dots, i_N),j=i_1,k=i_2|\cw) & \bm t =\cw\\
\frac{1}{2} q(\bm i=(i_3,\dots, i_N),j=i_1,k=i_2|\acw) & \bm t=\acw.
\end{cases}
\ee
which is a probability distribution satisfying $\sum_{i_1,\dots i_N,\bm t}q'(i_1,\dots,i_N,\bm t)=1$. Noting that 
\begin{align}
q'(i_1,\dots,i_N,\cw)+q'(i_1,\dots,i_N,\acw) &= \frac{2^{N-3}}{2} (q(\bm i,j,k| \cw)+q(\bm i,j,k| \acw)) = \frac{1}{2}\Big ( \prod_k  u_{i_k} + \prod_k  v_{i_k} \Big) ^2 \\
q'(i_k|\bm t=\cw) &= 2^{N-3} \sum_{i_n|n \neq k} q(\bm i|\bm t=\cw) = v_{i_k}^2 \qquad  \forall \,k\in\{3,\dots, N\}\\
q'(i_k|\bm t= \acw) &= u_{i_k}^2 \qquad \forall,k \in\{3,\dots,N\},
\end{align}
allows us to rewrite the linear program as 
\begin{align}
\max_{q'(i_1,\dots i_N,\bm t)} \quad & 1 \\
& q'(i_1,\dots i_N,\bm t) \text{ is a probability distribution}\\
 &q'(i_1,\dots,i_N) = \frac{1}{2}\Big ( \prod_{n=1}^N  u_{i_n} + \prod_{n=1}^N  v_{i_n} \Big) ^2 \\
 & q'(i_n,\bm t =\cw) =\frac{1}{2} v_{i_n}^2 \quad \forall n \in\{1,\dots,N\}\\
 & q'(i_n,\bm t =\acw) =\frac{1}{2} u_{i_n}^2 \quad \forall n \in\{1,\dots,N\}.
\end{align}
But this is exactly the LP presented in the Claim 2 in \cite{renou2022network} for the N-party ring network. In the Proposition 2 following the claim the authors of \cite{renou2022network} show that for $N=2k+1$ and $N=4k$ the LP is infeasible for $u$ close enough to 1.

\section{Topological robustness in standard Bell scenarios}

\label{app: standard Bell}
Consider a standard bipartite Bell test, where a source $S_3$ is connected to two parties, say Alice and Bob. The parties also have access to local sources of randomness $S_1$ and $S_2$, which they use to sample the ``inputs'' labeled $x$ for Alice and $y$ for Bob. In turn, depending on those values they perform different measurements on the system received from the source $S_3$ to produce the respective ``outputs'' $a$ and $b$.
The network structure underlying this scenario is represented in Fig.~\ref{fig:standardBell}(a).  In the case of standard Bell nonlocality one usually expresses the correlation through the conditional distribution 
$P_Q^\sqcup(a,b|x,y)$ of the outputs $a$ and $b$, which is nonlocal if and only if it violates a Bell test.  However, it can be equivalently discussed through the joint distribution (as long as $P_Q^\sqcup(x)$ and $P_Q^\sqcup(y)$ are non-degenerate) 
\begin{equation}
    P_Q^\sqcup(a,b,x,y)= P_Q^\sqcup(a,b|x,y) P_Q^\sqcup(x,y) =  P_Q^\sqcup(a,b|x,y) P_Q^\sqcup(x)P_Q^\sqcup(y),
\end{equation}
where in the last equality we used the independence of $x$ and $y$ implied by the network structure. Note that in this network the sources $S_1$ and $S_2$ can be assumed classical without loss of generality.  Indeed, the measurements of $x$ and $y$ can always be performed on the systems going upwards \emph{at the source}, the systems sent down are then prepared in some states conditional on the values $x$ and $y$. However, equivalently the sources can simply send the values $x$ and $y$ downwards, while the state preparation is postponed until these systems meet with the systems prepared by $S_3$. This is true in quantum and any generalized probability theory.

Let us now assume that $P_Q^\sqcup(a,b,x,y)$ is nonlocal, and discuss its topological robustness. Note that this question is different from the analysis of nonlocality under partial independence of source $S_1$ and $S_2$ from $S_3$ ~\cite{Puetz2014}. Here we do not need to introduce any quantitative notion of independence, we only consider the bare network structure. In particular, if two sources are merged into a single one, it can correlate the underlying variables \emph{arbitrarily}. In the following table we discuss the topological robustness of standard Bell nonlocality, by looking on different deformations of the $\sqcup$-network into stronger ones, depicted in Fig.~\ref{fig:standardBell}(b)-(d).

\begin{figure}[H]
    \centering
    \includegraphics[width=\textwidth]{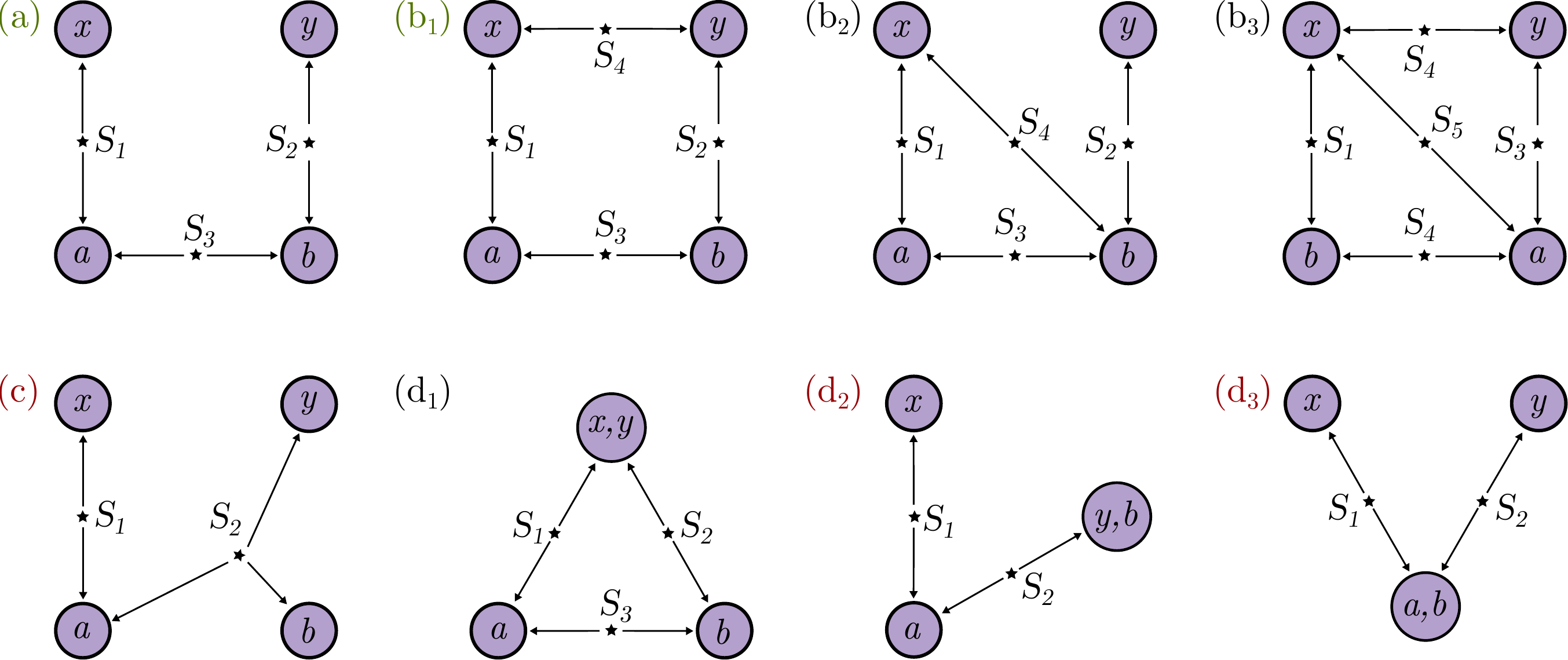}
    \caption{ \textbf{(a)} The network underlying the standard bipartite tests of Bell nonlocality exhibited by the distribution $ P_Q^\sqcup(a,b|x,y)\simeq P_Q^\sqcup(a,b,x,y)$. \textbf{(b-d)} Deformations of the original network to stronger networks. Nonlocality of $P_Q^\sqcup(a,b,x,y)$ with respect to these networks is discussed in the Table~\ref{tab:Bell}.}
    \label{fig:standardBell}
\end{figure}
\begin{table}[H]
\centering
\begin{tabular}{|c | c |l |}
\hline
Network & Nonlocality of $P_Q^\sqcup(a,b,x,y)$ & Proof or comment  \\
\hline \hline
(a) &  nonlocal \cellcolor{lime}& By assumption. $P_Q^\sqcup(a,b|x,y)$ violates a Bell inequality.
\\
\hline
 &   \cellcolor{lime}& By assumption $x$ and $y$ are sampled from $p(x,y)=p(x)p(y)$. Furthermore, they are 
\\
(b$_1$)&nonlocal\cellcolor{lime}&
manifestly independent of $b$ and $a$ respectively. Hence one can still interpret \\
&\cellcolor{lime}& $P_Q^\sqcup(a,b|x,y)$ as arising from the usual Bell test with random local settings.
\\
\hline
(b$_2$), (b$_3$) &  unknown \cellcolor{lightgray} & 
\\
\hline
(d$_1$) &   unknown \cellcolor{lightgray} & \\
\hline
(c) &  local \cellcolor{pink}& The bottom left party samples $P_Q^\sqcup(a|x,b,y)$.  Note that $x$ with $(b,y)$ are independent\\
& \cellcolor{pink}  &   and can be sampled locally by $S_1$ and $S_2$ respectively. 
\\
\hline
(d$_2$) &  local  \cellcolor{pink}& Similar to (c).\\ 
\hline
(d$_3$) &  local  \cellcolor{pink}& The bottom party samples $P_Q^\sqcup(a,b|x,y)$. Note that $x$ with $y$ are independent\\ 
 &   \cellcolor{pink}& and can be sampled by $S_1$ and $S_2$ respectively.\\ 
\hline
\end{tabular}
\caption{ Nonlocality of distributions $P_Q^\sqcup(a,b,x,y) \simeq P_Q^\sqcup(a,b|x,y)$ (standard Bell scenario) with respect to stronger networks in the Fig.~\ref{fig:standardBell}.\label{tab:Bell} }
\end{table}

\end{document}